\def\DRAFT{}

\documentclass[11pt,fleqn]{article}%
\usepackage{eurosym}
\usepackage{amsfonts,amsmath,amssymb,graphicx,setspace}
\usepackage[usenames, dvipsnames]{xcolor}
\usepackage[colorlinks=true,linkcolor=blue,citecolor=Mahogany,urlcolor=blue]%
{hyperref}
\usepackage[top=1in, bottom=1in, left=1in, right=1in]{geometry}
\usepackage[normalem]{ulem}
\usepackage{subcaption}
\usepackage{amsmath}
\usepackage{amsfonts}
\usepackage{amssymb}
\usepackage{amsthm}
\usepackage{mathtools}
\usepackage{graphicx}
\usepackage[longnamesfirst]{natbib}
\usepackage{dsfont} 
\usepackage{nicefrac}
\usepackage{comment}
\usepackage{xcolor}
\usepackage{mathrsfs}
\usepackage{tikz}
\usetikzlibrary{intersections,arrows.meta,calc,decorations.markings, 3d, positioning, arrows.meta, decorations.pathreplacing}

\usepackage[font=normalsize]{caption}
\DeclareCaptionLabelFormat{captionf}{\textsc{#1~#2}}
\DeclareCaptionFont{sloppy}{\sloppy}

\usepackage{pgfplots}
\pgfplotsset{compat=1.17}
\usetikzlibrary{patterns,fillbetween}

\usepackage{graphics}
\usepackage{color}
\usepackage{float}

\usepackage[normalem]{ulem}
\definecolor{accentcolor}{RGB}{26, 148, 49}
\usepackage{hyperref}
\hypersetup{pdfborderstyle={},allcolors=accentcolor,colorlinks}

\usepackage{titling}
\usepackage{bbm}
\usepackage{apptools}

\usepackage{etoolbox}
\newcommand{\zerodisplayskips}{%
  \setlength{\abovedisplayskip}{8pt}%
  \setlength{\belowdisplayskip}{7pt}%
  \setlength{\abovedisplayshortskip}{8pt}%
  \setlength{\belowdisplayshortskip}{7pt}}
\appto{\normalsize}{\zerodisplayskips}
\appto{\small}{\zerodisplayskips}
\appto{\footnotesize}{\zerodisplayskips}

\AtAppendix{\counterwithin{lemma}{section}}
\AtAppendix{\counterwithin{claim}{section}}
\usepackage[
]{ifdraft}

\usepackage{lscape}
\usepackage{placeins}
\providecommand{\U}[1]{\protect\rule{.1in}{.1in}}
\usepackage[noamsmath]{kpfonts}
\usepackage{inconsolata}

\definecolor{ForestGreen}{rgb}{.13,.54,.13}
\definecolor{BrickRed}{rgb}{.80,.26,.33}

\ifdefined\DRAFT
\newcommand{\fed}[1]{\textcolor{BrickRed}{(\textbf{Fedor:} #1)}}
\newcommand{\kir}[1]{\textcolor{BrickRed}{(\textbf{Kirill:} #1)}}
\newcommand{\lee}[1]{\textcolor{ForestGreen}{(\textbf{Leeat:} #1)}}

\else
\newcommand{\fed}[1]{}
\newcommand{\kir}[1]{}
\newcommand{\lee}[1]{}

\fi

\usepackage[T1]{fontenc}
\usepackage{pgfplots}
\providecommand{\U}[1]{\protect\rule{.1in}{.1in}}
\newtheorem{theorem}{Theorem}
\newtheorem{proposition}{Proposition}
\newtheorem{corollary}{Corollary}
\newtheorem{lemma}{Lemma}

\newcommand\E{{\rm E}}
\renewcommand{\E}{\mathbb{E}}

\newcommand{\R}{\mathbb{R}}
\newcommand{\Z}{\mathbb{Z}}

\newcommand{\nash}{\mathrm{NE}}
\newcommand{\ce}{\mathrm{CE}}

\newcommand{\conv}{\mathrm{conv}}
\newcommand{\supp}{\mathrm{supp}\,}

\newcommand{\G}{\Gamma}

\usepackage{tikz}
\usetikzlibrary{trees}

\begin{document}

\title{Extreme Equilibria: The Benefits of Correlation
\thanks{We thank S.~Nageeb Ali, Benjamin Brooks, Ron Holzman, Emir Kamenica, Navin Kartik, Richard McLean, Stephen Morris, Stephan Lauermann,  Jacopo Perego, Doron Ravid, Vasiliki Skreta, \'{E}va Tardos, and Alexander Wolitzky for helpful comments and suggestions. We also appreciate the input of seminar and conference audiences at ASSA~2025, Chicago, Columbia, Collegio Carlo Alberto, CUNY, EC 2025, Georgia, Harvard, HSE, MIT, Ohio State, Oxford, Penn, Penn State, Rochester, Rutgers, Stanford, Stony Brook, UC Berkeley, and Wisconsin-Madison. We gratefully acknowledge financial support from the National Science Foundation through grant SES-1949381.}}
\author{Kirill Rudov\thanks{Analysis Group; \href{mailto:karudov@gmail.com}{\texttt{karudov@gmail.com}}} \hspace{3mm}\hspace{2mm} Fedor Sandomirskiy\thanks{Princeton University;
\href{mailto:fsandomi@princeton.edu}{\texttt{fsandomi@princeton.edu}}} \hspace{3mm}\hspace{2mm} Leeat Yariv\thanks{Princeton University, CEPR, and NBER; \href{mailto:lyariv@princeton.edu}{\texttt{lyariv@princeton.edu}}}}
\date{\today }

\pretitle{\begin{flushleft}\LARGE} 
\posttitle{\end{flushleft}}
\preauthor{\begin{flushleft}\large} 
\postauthor{\end{flushleft}}
\predate{\begin{flushleft}} 
\postdate{\end{flushleft}}
\settowidth{\thanksmarkwidth}{*}
\setlength{\thanksmargin}{-\thanksmarkwidth}

\maketitle
\thispagestyle{empty}

\renewenvironment{abstract}
 {\par\noindent\textbf{\abstractname.}\ \ignorespaces}
 {\par\medskip}
\begin{abstract}
Correlated equilibria arise naturally when agents communicate or rely on intermediaries such as recommendation systems. We study when a given Nash equilibrium can be improved within the set of correlated equilibria for general objectives. Our key insight is a detail-free criterion: any Nash equilibrium with three or more randomizing agents is generically improvable. We refine this insight to specific classes of games and objectives, including Pareto and utilitarian welfare, and provide constructive methods to obtain improvements. Our findings underscore the ubiquity of improvable Nash equilibria and the crucial role of correlation in enhancing strategic outcomes.

\vspace{2mm}

\noindent\textit{Keywords}: Extreme Equilibria, Correlated Equilibria \vspace{2mm}

\noindent\textit{JEL codes}: C60, C72, D02, D60 

\end{abstract}

\newpage
\pagenumbering{arabic}

\section{Introduction}\label{sec:intro}

Correlated equilibria are a powerful generalization of Nash equilibrium, offering both computational and strategic advantages. Unlike Nash equilibria, which can be notoriously difficult to compute in general games, correlated equilibria are computationally simple. They also naturally arise in settings where agents can communicate or collude, making them a plausible solution concept for many applications. Moreover, they are mechanism-implementable: a mediator can credibly induce agents to follow correlated strategies without requiring external enforcement. Given these advantages, a fundamental question arises: when can Nash equilibria be strictly improved through correlation?

This paper examines conditions under which correlated equilibria strictly improve upon Nash equilibria, revealing a key insight: when a Nash equilibrium involves substantial randomization, it is generically improvable. We establish this connection between randomization and improvability for general classes of strategic interactions and social objectives. Randomization arises in a variety of strategic settings, such as bidders randomizing in auctions, firms adopting stochastic pricing strategies, or voters deciding probabilistically whether to turn out. Our findings demonstrate that such strategic randomness often leaves room for improvement. We also show that in symmetric environments, such improvements exhibit a simple structure, admit a geometric construction, and allow for straightforward implementation.

When analyzing the improvability of Nash equilibria, we adopt an agnostic stance toward the social planner’s objective. The set of correlated equilibria is convex, so any generic objective attains its unique optimum at an extreme point of this set. It follows that unless a Nash equilibrium is itself extreme, it can be improved, regardless of the specific objective. One might expect that any equilibrium can be made uniquely optimal under some (possibly contrived) objective. The existing literature confirms this view in several settings, particularly for two-agent games or pure-strategy equilibria in larger games; see our literature review. A key contribution of this paper is to broaden this perspective: in games with three or more agents, extremality provides a meaningful and nontrivial distinction between equilibria that are improvable and those that are not. 

To build intuition, consider a game where $n$ agents each choose between two actions. Such a setup arises in various strategic settings: voters deciding whether to cast a costly ballot, firms choosing whether to engage in costly R\&D, or individuals in a social network deciding whether to adopt a new technology. A correlated equilibrium is a distribution over action profiles under which each agent, upon receiving her recommended action, has no incentive to deviate. The feasibility of such an equilibrium is governed by at most $2n$ linear constraints: one per agent per action. A basic linear programming principle \citep[see, e.g.,][]{winkler1988extreme} implies that any extreme point of this set must be supported on at most $2n+1$ action profiles. Now, consider a Nash equilibrium where all agents use mixed strategies, implying that every action profile is played with positive probability. Since this yields $2^n$ active action profiles, the equilibrium can only be an extreme point if $2^n \leq 2n+1$. This condition holds only for $n = 1,2$, meaning that for any binary-action game with three or more agents who all mix, Nash equilibria are necessarily non-extreme, and hence, improvable.

\paragraph{General Improvability} Section~\ref{sec:improvability} shows that this conclusion extends to games in which agents have larger and potentially heterogeneous action sets. We show that, for generic strategic interactions, a Nash equilibrium is extreme if and only if at most two agents randomize. Accordingly, a social planner's objective is improvable when three or more agents mix. This result yields a detail-free criterion for assessing the potential to improve equilibria: regardless of the overall number of agents or the specifics of the strategic environment, the sole statistic that determines whether correlation, or a shift to a different, ``less random'' Nash equilibrium, may be beneficial is the number of agents who randomize.\footnote{We also formulate versions of this result that do not require the genericity of the environment.} When all agents randomize over action sets of the same size, the basic arguments from our example above extend naturally. The core challenge in the proof lies in handling asymmetric equilibria, where agents randomize over action sets of differing cardinalities.

\paragraph{Welfare and Pareto Improvability} It is often convenient to represent the outcomes of a game in terms of the agents’ payoffs induced by equilibrium play. In Section~\ref{sec:payoff_extremality}, we show that, generically, Nash equilibria involving at least three agents who randomize are not payoff-extreme---they are not extreme points of the (convex) set of correlated equilibrium payoffs---and allow for improvements in utilitarian welfare. We further demonstrate that when a sufficient number of agents mix, equilibria can be Pareto improved. Notably, the number of randomizing agents required for Pareto improvability grows only logarithmically with the total number of agents.

These results are grounded in convexity principles. The set of payoffs achievable through correlated equilibria arises as a linear projection of the correlated equilibrium polytope onto the space of agents' payoffs. The extreme points of a projected set are contained within the projection of the extreme points of the original set. Given that utilitarian welfare is almost always maximized at a unique extreme point, it follows that Nash equilibria with excess randomness leave room for welfare improvements via correlation. Pareto improvability is more stringent and thus requires additional constraints beyond those guaranteeing non-extremality.

\paragraph{Symmetric Games and Symmetric Improvements} Symmetric games are prevalent in economic and strategic settings, appearing in auctions, voting models, and competitive markets, where identical agents face identical strategic choices. In such games, symmetric equilibria, with all agents adopting the same strategy, often emerge as focal solutions.

In Section \ref{sec:symmetric}, we show that in symmetric games with at least three agents, any symmetric mixed equilibrium can be improved, even when restricting attention to symmetric correlated equilibria. This result implies that even when coordination is limited to symmetric recommendations---such as a common policy guideline for firms or a uniform turnout mobilization effort in elections---improvements remain possible. Only pure-strategy equilibria resist such improvements.

Furthermore, we establish that extreme symmetric correlated equilibria in symmetric games can be constructed through a simple procedure. In games involving a large number of agents, these equilibria can be represented using a small number of payoff-irrelevant states, each generating conditionally independent and identically distributed signals that guide individual behavior. 

In Section~\ref{sec_binary_action}, we introduce a geometric method for explicitly characterizing improving symmetric correlated equilibria. We use this approach to identify a class of games for which correlated equilibria can deliver higher expected payoffs to all agents than those attainable under \emph{any} Nash equilibrium.

\paragraph{Summing up} This paper provides a systematic framework for understanding when correlation can enhance strategic interactions. Whether in competitive markets, voting mechanisms, or algorithmically mediated platforms, correlation offers a natural way to improve strategic outcomes. Once communication or intermediation allow for correlation across agents' strategies, we show that Nash equilibria involving substantial randomization become fragile. This highlights a new channel for instability of mixed Nash equilibria and underscores the importance of correlation in strategic environments.

\subsection*{Related literature}
First introduced by \cite*{aumann1974subjectivity,aumann1987correlated}, correlated equilibria have received substantial attention in the literature; see \cite*{forges2024subjectivity} for a recent overview. Correlated equilibria offer a reduced-form way of capturing pre-play communication or mediation without explicitly modeling the communication phase \citep{forges2020correlated}. Similarly, they result from a variety of learning heuristics  \citep{foster1997calibrated, fudenberg1999conditional, hart2000simple} and are computationally simpler than Nash equilibria \citep{papadimitriou2008computing}.

Correlated equilibria have been analyzed in a variety of specific applications, including Cournot competition \citep{gerard1978correlation}, abatement games \citep{moulin2014coarse}, quadratic games \citep{dokka2023equilibrium}, various auctions \citep{lopomo2011bidder,feldman2016correlated,agranov2018collusion,pavlov2023correlated,ahunbay2024uniqueness}, voting \citep{gerardi2007deliberative}, and congestion \citep*{rivera2018efficiency}.
Their empirical relevance has been seen in various experimental settings: for example, in voting contexts \citep{goeree2011experimental}, in bargaining \citep{agranov2014communication}, in auctions  \citep{agranov2018collusion} and algorithmic bidding \citep{kolumbus2022auctions}, as well as in symmetric bimatrix games \citep{georgalos2020nash,friedman2022empirical}. 

While \cite*{aumann1974subjectivity} presents a $2 \times 2$ game of chicken where correlation enhances utilitarian welfare, such examples have largely been confined to specific economic environments. Beginning with \cite*{rosenthal1974correlated}, a substantial literature has explored the potential for correlation to improve upon Nash equilibria. 

Most general findings suggest that Nash equilibria are, in fact, unimprovable in various classes of games, as surveyed by \cite*{forges2024subjectivity}. In games where each agent's utility is concave in her own action, the correlated equilibrium is unique, implying that the (also unique) Nash equilibrium cannot be improved upon. This has been shown for potential games by \cite*{neyman1997correlated} and more generally by \cite*{ui2008correlated} and \cite*{cao2025correlated}. See also \cite*{einy2022strong}, \cite*{wu2008correlated}, and \cite*{jann2015correlated} for specific applications exhibiting unique correlated equilibria. Beyond the singleton correlated equilibrium case, \cite*{nau2004geometry} show that all Nash equilibria lie on the boundary of the correlated equilibrium polytope, indicating limited scope for improvement.\footnote{\cite*{peeters1999structure}, \cite*{hendrickx2002relation}, and \cite*{calvo2006} also study the structure of the correlated equilibrium set of $2 \times 2$ games.} For two-agent games with arbitrary action sets, \cite*{cripps1995extreme}, \cite*{evangelista1996note}, and \cite*{canovas1999nash} establish that all Nash equilibria are extreme points of the correlated equilibrium set. 

Using a computer science approach and focusing primarily on $2 \times 2$ games, \cite*{ashlagi2008value} are, to our knowledge, the only ones to provide general results regarding games where Nash equilibria can be improved via correlation. They quantify the maximal magnitude of a utilitarian welfare improvement that can be attained by correlation via the ``mediation value'', a concept akin to the price of anarchy. The mediation value is defined as the ratio between the highest utilitarian welfare attainable through correlated equilibria and that achievable through Nash equilibria. They obtain precise bounds on the mediation value for $2 \times 2$ games. While this paper provides important steps in bounding the extent to which Nash equilibrium outcomes can be improved within certain classes of games, general conditions under which correlation improves upon a given Nash equilibrium have remained elusive, a gap this paper aims to fill.

The connection between randomization and inefficiency was also examined by \cite*{dubey1978finiteness}, who compares Nash equilibria to the first-best unconstrained benchmark, abstracting away from incentive constraints. He shows that in a generic game, a necessary condition for the Pareto optimality of a Nash equilibrium, within the space of all distributions over action profiles, is that at least one agent plays a pure strategy. In contrast, we characterize when Nash equilibria can be improved relative to incentive-compatible outcomes, as represented by correlated equilibria, either for arbitrary objectives or for particular ones such as utilitarian or Pareto efficiency. While our paper connects the value of correlation and randomization in games, \cite*{kamenica2024commitment} establish a similar link between the value of commitment and randomization in one-agent persuasion problems.\footnote{Choosing an information structure that induces a joint distribution over the state and the receiver's action is equivalent to choosing a correlated equilibrium in an auxiliary complete-information game with three agents. The random state is simulated by two auxiliary agents engaged in a zero-sum game. For instance, two equally likely states can be simulated by matching pennies. More generally, a state $\theta \in \Theta$ drawn from prior $p$ can be simulated by setting $A_1 = A_2 = \Theta$ and defining the first auxiliary agent's utility as $u(a_1, a_2) = \mathbf{1}_{a_1 = a_2} \cdot \frac{1}{p(a_1)}$. The sender's problem of finding an optimal information structure then reduces to selecting an optimal correlated equilibrium in the three-agent game comprising these two auxiliary agents and the receiver.}

Methodologically, our paper contributes to the growing literature on extreme-point techniques in economic theory; see, for example,  \cite*{manelli2007multidimensional}, \cite*{bergemann2015limits}, \cite*{kleiner2021extreme}, \cite*{nikzad2022constrained}, \cite*{arieli2023optimal}, \cite*{kleiner2024extreme}, \cite*{yang2024monotone}, \cite*{lahr2024extreme}, \cite*{augias2025economics}, and \cite*{yang2026stochastic}. Results on Pareto-improving correlated equilibria require new insights into the combinatorial problem of cylinder packing studied by~\cite{felzenbaum1993packing}, while those on the structure of symmetric correlated equilibria rely on versions of de Finetti's theorem for finite exchangeable distributions obtained by \cite*{diaconis1980finite}. Our construction of improving correlated equilibria in Section~\ref{sec_binary_action} uses convexification techniques reminiscent of those employed in the information design literature \citep{aumann1995repeated, kamenica2011bayesian}.

\section{Model \label{sec:model}}

Consider a game $\G=\big(N,(A_i)_{i\in N}, (u_i)_{i\in N}\big)$, where $N=\{1,\ldots,n\}$ is a finite set of agents, $A_i$ is a finite set of actions of agent~$i$, and $A= \prod_{i\in N}A_i$ is the set of action profiles. The function  $u_i\colon A\to\R$ represents the utility of each agent $i \in N$. 

The set $\ce(\G)$ of correlated equilibria consists of all probability distributions $\mu\in\Delta(A)$ 
such that for all $i\in N$ and for all distinct $a_i,a_i'\in A_i$ we have
\begin{equation}\label{eq_CE_definition}
 \sum_{a_{-i}\in A_{-i}} \mu(a_i,a_{-i}) u_i(a_i,a_{-i}) \geq \sum_{a_{-i}\in A_{-i}} \mu(a_i,a_{-i}) u_i(a_i',a_{-i}).
\end{equation}
One can interpret $\mu$ as the distribution of actions recommended by a mediator, ensuring that each agent $i$ finds it optimal to follow the recommended action $a_i$, as captured by the incentive constraint~\eqref{eq_CE_definition}. Beyond the mediation interpretation, condition~\eqref{eq_CE_definition} ensures that, given the distribution governing play, no agent can improve her expected payoff by unilaterally deviating once she takes into account the information encoded in that distribution.

Correlated equilibria $\mu$ that are product distributions---$\mu=\mu_1\times \ldots \times \mu_n$ with $\mu_i \in \Delta(A_i)$---form the set of Nash equilibria $\nash(\G)$. The set $\nash(\G)$ is non-empty \citep{nash1950non} and thus $\ce(\G)$ is non-empty as well. Since $\ce(\G)$ is cut from $\Delta(A)$ by a finite number of incentive constraints~\eqref{eq_CE_definition}, it is a non-empty convex polytope and can be described as the convex hull of its vertices, commonly referred to as extreme points.\footnote{A point $x$ of a convex set $X$ is called extreme if it cannot be represented as a non-trivial convex combination of other points from $X$, i.e., $x=\alpha y+(1-\alpha)y'$ with $\alpha\in (0,1)$ and $y,y'\in X$ can only hold if $y=y'=x$.}

If a Nash equilibrium is an extreme point of the set of correlated equilibria $\ce(\G)$, we call it an \textbf{extreme Nash equilibrium}. Our main goal is to provide conditions under which Nash equilibria are extreme. As we show below, when Nash equilibria are not extreme, the agents or the social planner would potentially have opportunities to improve outcomes via channels of communication, the use of intermediaries, or the design of mechanisms. 

To make general structural insights about extreme Nash equilibria possible, we need to rule out trivial examples, such as degenerate games where an agent is indifferent across all actions. We exclude such examples using the classical notion of regularity introduced by \cite*{harsanyi1973oddness}. In essence, a \textbf{regular Nash equilibrium} is an equilibrium that remains stable under small perturbations of utility functions.\footnote{A Nash equilibrium $\nu$ is regular if the incentive constraints outside the support of $\nu$ are not active, and the conditions of the implicit function theorem are satisfied on the support of $\nu$. A formal definition of regularity is given in Appendix~\ref{app_polygon}. Regularity ensures that the equilibrium weights depend smoothly on utility functions in a small neighborhood of $\nu$ and rules out equilibria that place positive weight on weakly dominated actions.}

Arguably, only regular equilibria are relevant for economic modeling, which makes regularity a standard assumption \citep{van1991stability}.\footnote{For example, regularity of all equilibria is assumed in classical results on the oddness of Nash equilibria \citep{harsanyi1973oddness,wilson1971computing} or purification \citep{harsanyi1973games,govindan2003short}.} Furthermore, games with all regular equilibria are prevalent.

We say that a \textbf{generic game} has a certain property if this property holds for an open everywhere dense set of games with the complement having zero Lebesgue measure. As shown by \cite*{harsanyi1973oddness}, all equilibria are regular in a generic game.

\paragraph{Extremality and improvability}
Non-extremality offers a conservative perspective on the improvability of an equilibrium: it can be improved for \textit{all} objectives in a large class. We now formalize this idea. First, consider a designer maximizing a linear objective~$W$---for instance, utilitarian welfare---over the polytope $\ce(\G)$ of correlated equilibria. Crucially, linearity here pertains to linearity in probabilities, not in the actions themselves. In particular, common objectives---such as expected welfare, revenue, or the likelihood of a given action profile---are linear.

\begin{figure}
\begin{center}
\begin{tikzpicture}[scale=0.8]
\filldraw[black!10, thick] (0,0) -- (2.5,0) -- (3,2) -- (1,3) -- (0,0);
\draw[black!90,thick] (0,0) -- (2.5,0) -- (3,2) -- (1,3) -- (0,0);
\draw[black!50] (1.5,3.5) -- (4.5,0.5);
\draw[->, thick] (2.5,2.5) -- (3.5,3.5) node[below right] {Objective};
\fill (3,2) circle (4pt);

\begin{scope}[xshift=10cm]
\filldraw[black!10, thick] (0,0) -- (2.5,0) -- (3,2) -- (1,3) -- (0,0);
\draw[black!90,thick] (0,0) -- (2.5,0) -- (3,2) -- (1,3) -- (0,0);
\draw[black!50] (4.5,1.25) -- (0,3.5);
\draw[->, thick] (2,2.5) -- (2.75,4) node[below right] {Objective};
\fill (3,2) circle (4pt);
\fill (1,3) circle (4pt);
\draw[line width=2pt] (3,2) -- (1,3);
\end{scope}
\end{tikzpicture}
\end{center}
\captionsetup{justification=raggedright,singlelinecheck=false}
\caption{A non-degenerate objective (left panel) and a degenerate one (right panel).}
\label{fig_degenerate}
\end{figure}
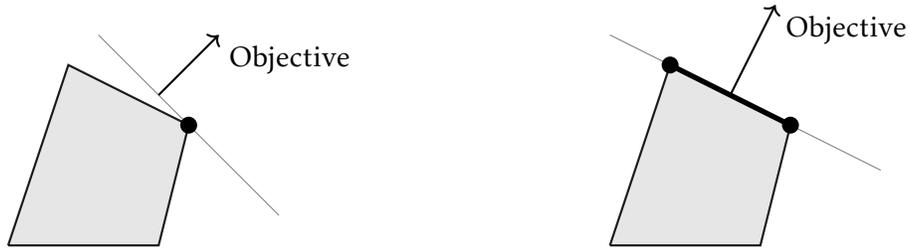

The left panel of Figure~\ref{fig_degenerate} illustrates a representative case of a linear objective, which is maximized at a unique point on the polytope. 
The right panel depicts a knife-edge case, where the level hyperplanes of the objective are parallel to a face of the polytope, and thus the optimum is not unique. Unless we are in this knife-edge case, the value of a linear objective can be strictly improved at any non-extreme point of the polytope.

This phenomenon is more general. Consider an arbitrary (weakly) convex objective $W$ on $\ce(\G)$. Convexity of~$W$ may capture aversion to concentrated probability on bad outcomes: for example, a regulator may seek to avoid action profiles triggering bank runs, assigning increasingly high costs to equilibria that put substantial weight on such outcomes. Suppose a Nash equilibrium $\nu$ is non-extreme. Hence, $\nu$ can be represented as $\alpha \mu+(1-\alpha)\mu'$ with $\alpha\in (0,1)$ and distinct $\mu,\mu'\in\ce(\G)$. By convexity, we obtain $W(\nu)\leq\alpha W(\mu)+(1-\alpha) W(\mu')$ and, thus, one of $\mu,\mu'$ gives at least as high value to $W$ as~$\nu$. It follows that any convex objective is maximized at an extreme point of $\ce(\G)$. This reflects Bauer's Maximum Principle \citep[see, e.g., ][Theorem~7.69]{guide2006infinite}. 

We call an objective that attains a unique optimum over $\ce(\G)$ a \textbf{non-degenerate objective} for $\G$. By Bauer's Principle, a non-degenerate objective $W$ can be strictly improved within $\ce(\G)$ at any non-extreme Nash equilibrium.

Although the optima of a degenerate objective may include non-extreme points, a small perturbation of such a degenerate objective suffices to rule out all non-extreme equilibria. Indeed, consider an $\varepsilon$-perturbed objective $W_\varepsilon(\mu)=W(\mu)+\sum_{a\in A} \varepsilon_a\cdot \mu(a)$, where $\varepsilon\in \R^A$ is a vector of small shocks. If the shocks $\varepsilon_a$ are random with any absolutely continuous distribution (for instance, uniformly in a small ball), any non-extreme equilibrium is optimal for a measure-zero set of realized objectives.
\medskip

\paragraph{Aumann's example and improvability} Consider a version of \cite*{aumann1974subjectivity}'s classical example illustrating the power of correlated equilibria. There are two agents, who face the following game (each entry represents an action profile and contains the corresponding payoff pair, where the first is the row agent's payoff and the second is the column's payoff): 
\begin{equation}\label{eq_Aumann_game}
\begin{pmatrix} 
0,0 &  4,1\\
1,4 &  3,3
\end{pmatrix}
\end{equation}
The designer aims to maximize the social welfare $W(\mu)=\sum_{a} \big(u_1(a)+u_2(a)\big)\cdot \mu(a)$ over the set of all correlated equilibria $\mu$. For this game, the incentive constraints~\eqref{eq_CE_definition} defining the set of correlated equilibria 
take an especially simple form:
the weight of each payoff profile $(4,1)$ or $(1,4)$ must be at least the weight of each of $(0,0)$ or $(3,3)$.

This game has a mixed Nash equilibrium where each agent randomizes uniformly. Its utilitarian welfare level of $4$ can be improved in an incentive-compatible way by reducing the weight of the $(0,0)$ outcome. In fact, there is a welfare maximizing correlated equilibrium in which each of the non-zero payoff pairs is reached with probability $1/3$. This correlated equilibrium generates a utilitarian welfare of $16/3$, which exceeds that of the mixed Nash equilibrium, as well as that of the two pure Nash equilibria.

Although none of the Nash equilibria are optimal for the utilitarian welfare objective, they are all extreme and thus non-improvable according to our conservative perspective. 
Indeed, for some objectives, these equilibria are the unique  maximizers. For instance, if the designer cares about the total cubed payoffs of agents, the pure asymmetric equilibria become optimal. If the objective is to maximize the total weight assigned to the action profiles $(0,0)$ and $(3,3)$, the mixed Nash equilibrium becomes the unique optimum.

This example highlights the observation that our notion of improvability is demanding, and suggests that Nash equilibria may perhaps rarely, if ever, be non-extreme. As we will see, this intuition is limited to two-agent games.

\section{A Characterization of Extreme Nash Equilibria} \label{sec:improvability}

We show that whether or not a Nash equilibrium is extreme depends on the amount of randomization agents invoke in their strategies. Roughly speaking, equilibria with substantial uncertainty over the action profiles implemented cannot be extreme.

\begin{theorem}\label{th_almost_pure}
A regular Nash equilibrium is extreme if and only if 
at most two agents randomize.
\end{theorem}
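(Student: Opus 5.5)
The plan is to work locally on the face of $\ce(\G)$ that contains the Nash equilibrium $\nu=\nu_1\times\dots\times\nu_n$. Write $S_i=\supp\nu_i$, $k_i=|S_i|$ and $S=\prod_i S_i$. Suppose $\nu=\alpha\mu+(1-\alpha)\mu'$ with $\mu,\mu'\in\ce(\G)$. Then $\mu$ and $\mu'$ are supported on $S$. Moreover, every incentive constraint~\eqref{eq_CE_definition} that binds at $\nu$ must bind at both $\mu$ and $\mu'$, because each is an inequality that holds for both and averages to equality. The binding constraints at $\nu$ are exactly the indifference conditions between $a_i$ and $a_i'$ with both in $S_i$. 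Regularity makes all other constraints strict, so they remain satisfied under small perturbations supported on $S$. Hence $\nu$ is extreme if and only if the linear system
\begin{equation*}
\sum_{a_{-i}} \eta(a_i,a_{-i})\bigl[u_i(a_i,a_{-i})-u_i(a_i',a_{-i})\bigr]=0 \quad (i\in N,\ a_i,a_i'\in S_i),\qquad \sum_{a\in S}\eta(a)=0
\end{equation*}
has no nonzero solution $\eta$ supported on $S$. Both directions of Theorem~\ref{th_almost_pure} are statements about this kernel.

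\emph{If direction (at most two agents randomize).} Let agents $1,2$ be the only possible randomizers, so $\mu$ lives on $S_1\times S_2\times\{\text{pure profile}\}$. Fix $a_1\in S_1$ with $\mu(a_1)>0$. The binding constraints say that the conditional $\mu(\cdot\mid a_1)\in\Delta(S_2)$ makes agent~1 indifferent across all of $S_1$. Regularity, through the nonsingularity of the Jacobian on the support, implies that the indifference system for agent 1 over $S_1$, together with normalization, has the unique solution $\nu_2$; in particular $k_1=k_2$. So $\mu(\cdot\mid a_1)=\nu_2$ for every $a_1$, and symmetrically $\mu(\cdot\mid a_2)=\nu_1$. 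Then $\mu(a_1,a_2)=\mu_1(a_1)\nu_2(a_2)=\nu_1(a_1)\mu_2(a_2)$, and summing over $a_2$ gives $\mu_1=\nu_1$, hence $\mu=\nu$. If only one agent or no agent randomizes, the same argument (or a trivial one) applies.

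\emph{Only if direction (three or more agents randomize).} Here I need a nonzero $\eta$ in the kernel. I would write $\eta=\nu\cdot t$ for a function $t$ on $S$, and impose a stronger, uniform condition agent by agent. For agent $i$, let $U_i\subset\R^{S_{-i}}$ be the span of the vectors $\nu_{-i}(\cdot)\bigl[u_i(b,\cdot)-u_i(c,\cdot)\bigr]$ over $b,c\in S_i$; it has dimension $k_i-1$. Requiring every section $t(a_i,\cdot)$ to be orthogonal to $U_i$ kills all of agent $i$'s constraints at once. So it suffices to find nonzero $t$ with zero $\nu$-mean in
\[
\bigcap_{i}\bigl(\R^{S_i}\otimes U_i^{\perp}\bigr).
\]
When all $k_i$ are equal to some $k$, a dimension count suffices: $k^m>m\,k(k-1)+1$ for $m\ge 3$ randomizers. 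This is the binary-action argument from the introduction.

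The main obstacle is the asymmetric case, for instance $k=(2,2,10)$, where the naive codimension count fails. My first attempt would be to exploit the structure of the $U_i$. Regularity forces $k_i-1\le\prod_{j\ne i}(k_j-1)$ up to normalization, since the indifference system must be solvable. Also, $U_i$ depends only on the factors of $\nu_{-i}$. I would therefore look for $t$ of a restricted tensor form: a product of mean-zero vectors on two of the small-support agents, combined with an arbitrary factor on the remaining agents. I would then show that the only constraints not automatically satisfied are those of the large-support agent, and handle these by choosing the third factor in the orthogonal complement of a subspace of dimension at most $k_i-1$. If this fails, the fallback is an induction on the number of randomizing agents: restrict attention to three randomizers and treat the others as part of a fixed mixed environment, reducing to the three-agent case, where the dimension inequality can be verified using the regularity bound on the $k_i$.
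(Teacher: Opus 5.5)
Your reduction of extremality to the kernel of the support-restricted system, the equal-support count, and your two-mixer argument (via conditionals of $\mu$ rather than rows and columns of a signed perturbation) are sound and match Lemmas~\ref{lm_cone} and~\ref{lm_extreme_two_mixing}. The gap is in the only-if direction. You are missing the polygon inequality $k_i-1\le\sum_{j\ne i}(k_j-1)$ of Lemma~\ref{lm_polygon}. It holds because $F_i$ does not depend on $\delta_i$: nonsingularity of $DF(0)$ makes the $k_i-1$ rows of the block $DF_i(0)$ independent, yet these rows have at most $\sum_{j\ne i}(k_j-1)$ nonzero columns. The bound you state instead is not what regularity gives; mere solvability of the indifference system gives no inequality at all. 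Nor does your bound do the job. Read as $k_i\le\prod_{j\ne i}k_j$, it admits $k=(2,2,4)$, where the count is $16-17=-1$. Read as $k_i-1\le\prod_{j\ne i}(k_j-1)$, it is false, since it excludes $(2,2,3)$, a support that regular equilibria can have. Under the polygon inequality, supports like $(2,2,10)$ or $(2,2,4)$ cannot occur, so no special construction is needed. What remains is the arithmetic fact (Proposition~\ref{lm_inequality}): $\prod_i m_i-1-\sum_i m_i(m_i-1)\ge 2^k-2k-1>0$ whenever $k\ge 3$ of the $m_i$ exceed $1$ and the polygon inequality holds. For $k=3$, with $x_i=m_i-1$, the difference equals $\sum_i (x_i-1)\bigl(\sum_{j\ne i}x_j-x_i\bigr)+(x_1-1)(x_2-1)(x_3-1)\ge 0$. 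For $k\ge 4$, one shows monotonicity along segments from $x=\mathbf{1}$ within the polygon region.

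The substitutes you propose would not close the gap. In the ansatz $t=x\otimes y\otimes z$, agent~1's constraints read $y\otimes z\perp U_1$, a genuine bilinear condition rather than an automatic one. The large agent's constraints act on the sections $z(a_3)\,x\otimes y$, hence on $x\otimes y$, not on the free factor $z$. Take $k=(2,2,3)$ and label $S_1=S_2=\{1,2\}$. Then $x$ and $y$ with zero $\nu_1$- and $\nu_2$-means are fixed up to scale. Agent~3's constraints then require $D(1,1)-D(1,2)-D(2,1)+D(2,2)=0$ for every difference $D=u_3(b,\cdot)-u_3(c,\cdot)$. This fails generically, although the true kernel has dimension at least $12-11=1$. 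The reduction to three randomizers also fails. First, $\eta_{123}\otimes\nu_{-123}$ must still satisfy the $k_j-1$ constraints of each remaining randomizer $j$. Second, three-agent subsupports need not satisfy the polygon inequality. For $k=(2,2,2,4)$ the full count is $32-19=13$, yet every reduction to three of the agents leaves a count of $-2$. Finally, the one-mixer case is not covered by ``the same argument'': there the mixer's on-support constraints are vacuous, so your kernel has dimension $k_i-1>0$. That case is excluded only because $DF(0)=0$ makes such an equilibrium irregular.
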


Since, by \cite{harsanyi1973oddness}, all Nash equilibria are regular in a generic game, the conclusion of Theorem~\ref{th_almost_pure} holds generically for \emph{all equilibria}. Furthermore, equilibria with extensive randomization are prevalent: \cite{mckelvey1997maximal} study the number of regular mixed Nash equilibria with a given support in generic games and show that this
number grows sharply with the number of agents who randomize.

Theorem~\ref{th_almost_pure} has a direct implication for welfare comparisons. For any social planner with a strictly convex objective, if three or more agents randomize at a Nash equilibrium, the outcome can be improved, either by introducing correlation---by allowing agents to communicate or introduce mediation---or by selecting an alternative pure or ``almost pure'' equilibrium when such exist.  

For instance, suppose the planner values aggregate efficiency but dislikes uncertainty over outcomes. Welfare may then take the form of aggregate expected payoff net of the entropy of outcomes. Theorem \ref{th_almost_pure} implies that the resolution to the trade-off between payoff efficiency and uncertainty is, in many ways, detail-free. Naturally, a dislike of uncertainty would push the social planner to impose limited mixing by agents. The theorem indicates that, regardless of the number of agents or the payoff structure, the social planner could always improve upon a Nash equilibrium in which more than two agents mix.

Pure equilibria are trivially extreme: they are the unique optimum for an objective that values the weight on the particular action profile that the equilibrium prescribes. Equilibria with exactly one randomizing agent are ruled out under the regularity assumption, as we explain below. For equilibria involving two randomizing agents, Theorem \ref{th_almost_pure} asserts that any regular Nash equilibrium cannot be improved upon either. This echoes \cite*{cripps1995extreme}, \cite*{evangelista1996note}, and \cite*{canovas1999nash}, who demonstrated that Nash equilibria are extreme in two-agent games. Our result indicates that universal non-improvability is special to two-agent games.\footnote{To illustrate how richer randomization creates opportunities to correlate actions, consider the product of two unrelated two-agent games, one played by agents 1 and 2 and the other by agents 3 and 4, with each pair randomizing at a regular mixed Nash equilibrium in their game. The product of these two equilibria is a four-agent Nash equilibrium. Although each component equilibrium is extreme in its own correlated-equilibrium polytope, the product distribution is not extreme in the four-agent polytope: the two components can be correlated in different ways while preserving the marginals and all incentive constraints.}

\medskip

Theorem \ref{th_almost_pure} also has implications for games with a unique correlated equilibrium. A unique correlated equilibrium is necessarily a Nash equilibrium. Uniqueness ensures resilience of this Nash equilibrium to pre-play communication or intermediation that, in some contexts, may underlie collusive arrangements \citep{agranov2018collusion}. Moreover, Nash equilibria in games with a unique correlated equilibrium exhibit robustness to small payoff perturbations or payoff uncertainty; for example, see \cite*{kajii1997robustness}, \cite*{viossat2008having}, and \cite*{einy2022strong}.
In particular, \cite*{viossat2008having} shows that the set of games with a unique correlated equilibrium is an open set within the set of all games. Several classes of games are known to display this property, including Bertrand oligopoly  \citep{milgrom1990rationalizability}, Cournot oligopoly \citep{liu1996correlated, neyman1997correlated}, and generic two-agent conflicting-interest (constant-sum) games; see our literature review for more examples. In all these environments, the Nash equilibria entail limited randomization. Theorem \ref{th_almost_pure} implies that this observation is true in general.

\begin{corollary}\label{cor_unique}
If $\G$ is a game with a unique correlated equilibrium $\nu$, then $\nu$ is either a pure Nash equilibrium or entails precisely two agents mixing.\footnote{Corollary~\ref{cor_unique} answers negatively an open question posed by \cite*{he2021private} on the existence of a three-agent game with a unique mixed correlated equilibrium in which all agents randomize.}
\end{corollary}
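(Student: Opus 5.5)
Since $\nu$ is the unique element of $\ce(\G)$, it is trivially an extreme point, and it is a Nash equilibrium because $\nash(\G)\subseteq\ce(\G)$ is non-empty. The difficulty is that Theorem~\ref{th_almost_pure} only covers \emph{regular} equilibria, and nothing guarantees that $\nu$ is regular in $\G$. The plan is therefore to pass to nearby generic games. The key tool is the result of \cite*{viossat2008having} that the set of games with a unique correlated equilibrium is open. The proof then has two parts: bounding the number of mixing agents by two, and ruling out exactly one.

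\emph{Step 1: at most two agents mix.} Because uniqueness is an open property and generic games are dense, I choose a sequence of games $\G^k\to\G$ such that each $\G^k$ has a unique correlated equilibrium $\nu^k$ and all Nash equilibria of $\G^k$ are regular. By Theorem~\ref{th_almost_pure}, $\nu^k$ is regular and extreme, so at most two agents mix in $\nu^k$.

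The correlated equilibrium correspondence has a closed graph, since it is cut out by inequalities continuous in the utilities. Hence every limit point of $\nu^k$ lies in $\ce(\G)=\{\nu\}$, so $\nu^k\to\nu$. Each marginal $\nu^k_i$ then converges to $\nu_i$, and for large $k$ we get $\supp\nu^k_i\supseteq\supp\nu_i$. Consequently any agent who mixes in $\nu$ also mixes in $\nu^k$, and so at most two agents mix in $\nu$.

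\emph{Step 2: exactly one mixing agent is impossible.} This is the step I expect to be the real obstacle, because the limit argument above cannot exclude it: $\nu^k$ may have two mixers while $\nu$ has only one. I would instead argue directly. Suppose only agent $1$ mixes, with support $S$ satisfying $|S|\ge2$, and every other agent $j$ plays a pure action $a_j$.

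Construct $\G'$ from $\G$ by adding a small bonus $\varepsilon>0$ to $u_j$ whenever agent $j$ plays $a_j$, for every $j\neq1$, and leaving $u_1$ unchanged. For small $\varepsilon$, $\G'$ still has a unique correlated equilibrium by openness. Moreover, $\nu$ remains a Nash equilibrium of $\G'$, and now each $a_j$ is a \emph{strict} best response to $\nu_{-j}$.

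Consider distributions of the form $\sigma\times\delta_{a_{-1}}$ with $\sigma\in\Delta(S)$. Agent $1$'s incentive constraints hold for every such $\sigma$, because every action in $S$ is a best response to $a_{-1}$. For each $j\neq1$, the constraints are finitely many linear inequalities in $\sigma$, and they all hold strictly at $\sigma=\nu_1$. They therefore continue to hold on a neighborhood of $\nu_1$ in $\Delta(S)$. Since $|S|\ge2$, this neighborhood contains more than one point, which yields a continuum of correlated equilibria of $\G'$ and contradicts uniqueness. Combining the two steps, $\nu$ is either pure or has exactly two mixing agents.
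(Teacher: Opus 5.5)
Your proof is correct. Step~1 is essentially the paper's argument. You approximate $\G$, inside the open set of games with a unique correlated equilibrium (Viossat), by games whose Nash equilibria are all regular. You then apply Theorem~\ref{th_almost_pure} to the unique, hence extreme, equilibrium $\nu^k$, and pass to the limit using that supports cannot grow. You use the closed graph of the correlated equilibrium correspondence where the paper uses upper hemicontinuity of the Nash correspondence; since $\ce(\G)=\{\nu\}$, this comes to the same thing.

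The genuine difference is Step~2. The paper's sketch ends with the observation that a limit of equilibria with at most two mixers cannot have more mixers. By itself, that gives only \emph{at most} two mixers. As you note, a limit of two-mixer equilibria may have a single mixer, and the regularity-based exclusion of one mixer (Corollary~\ref{cor_few_mixers}) does not survive passage to limits. The text does not address this case explicitly.

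Your perturbation adds $\varepsilon$ to each non-mixer's payoff from her equilibrium action. This leaves agent~$1$'s incentives unchanged and makes every other agent's constraint slack at $\nu_1$, while openness keeps the correlated equilibrium unique. It then yields a whole neighborhood of correlated equilibria $\sigma\times\delta_{a_{-1}}$, which cleanly rules out a single mixer. So you use the same ingredients as the paper for the upper bound on mixers, and you supply the missing argument for the ``precisely two'' part of the statement in non-generic games.
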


For a generic game, Corollary~\ref{cor_unique} follows immediately from Theorem~\ref{th_almost_pure}. However, the corollary requires neither genericity nor regularity assumptions. Indeed, Theorem~\ref{th_almost_pure} implies the result for a dense subset of games within the open set of games with a unique correlated equilibrium.
To extend it to the full set, we use the upper hemicontinuity of the Nash correspondence. Combined with uniqueness, hemicontinuity implies that the Nash correspondence is, in fact, continuous on this set. Since the limit of Nash equilibria each involving at most two mixing agents cannot itself involve more mixers, the conclusion carries over to all games with a unique correlated equilibrium.

\medskip

The proof of the theorem is contained in Appendix~\ref{app_th1_proof}. There, we formulate a slightly stronger result that also provides a lower bound on the dimension of the face carrying a non-extreme Nash equilibrium.\footnote{For generic games, Nash equilibria lie on the boundary of the correlated equilibrium polytope, as shown by \cite*{nau2004geometry}. Our result bounds the dimension of that boundary face.} This dimension grows exponentially with the number $k\geq 3$ of randomizing agents:
\begin{equation}\label{eq_dimension_bound_intuition}
\dim\big(\text{the face carrying a Nash equilibrium}\big)\geq 2^{k-3}.  
\end{equation}
The dimension of a face represents the number of directions in which one can perturb the Nash equilibrium without violating the incentive constraints. It plays a key role in our ability to improve several objectives simultaneously, an insight we use when we turn to Pareto improvements of Nash equilibria in Section~\ref{sec:payoff_extremality}.

\medskip

To glean high-level intuition for why equilibria with many agents mixing cannot be extreme, consider an $n$-agent game $\Gamma$ and a Nash equilibrium $\nu$ supported on $S = S_1 \times \cdots \times S_n$. When a large number of agents mix at $\nu$, their joint randomization yields a rich set of directions at which their actions can be jointly correlated. Among these feasible directions, some necessarily preserve all incentive constraints. This, in turn, implies that the equilibrium does not constitute an extreme point.

Formally, we consider a perturbation of $\nu$ in a direction $\tau\in\R^S$ such that both $\nu+\varepsilon\tau$ and $\nu-\varepsilon\tau$ remain correlated equilibria for sufficiently small $\varepsilon>0$. Geometrically, the set of such $\tau$ forms the tangent space to the set of correlated equilibria at $\nu$. If this tangent space contains any nonzero vector---that is, it has positive dimension---then the equilibrium admits nontrivial perturbations that preserve all incentive constraints and therefore cannot be extreme.

What constraints must a vector $\tau$ from the tangent space satisfy? Since the incentive constraints for $\nu$ hold as equalities on the support, they must also be satisfied by $\tau$, yielding one constraint for each pair of distinct actions $a_i,a_i'\in S_i$ of each agent $i$. The incentive constraints outside the support of $\nu$ do not play a role, since they are slack by the regularity of $\nu$. In addition, $\tau$ must have zero total mass to ensure that $\nu\pm\varepsilon\tau$ are probability distributions. In total, we impose $1+\sum_i |S_i|(|S_i|-1)$ linear constraints on a vector in a space of dimension $|S|=\prod_i |S_i|$. Thus the dimension of the tangent space is at least
\begin{equation}\label{eq_dimension_counting}
    \prod_i |S_i|-\left(1+\sum_i |S_i|(|S_i|-1) \right).
\end{equation}
For example, if $k\geq 3$ agents randomize over $2$ actions each, the dimension is at least 
\begin{equation*}
    2^k-2k-1\geq 1
\end{equation*}
and we conclude that such a Nash equilibrium is not extreme in agreement with Theorem~\ref{th_almost_pure}. A version of this argument can also be used to deduce the conclusion of the theorem for any Nash equilibrium in which all agents randomize over (roughly) the same number of actions.

Consider now equilibria in which agents mix over dramatically different numbers of actions, say where one agent randomizes over many actions, while the others randomize over only a few. For such equilibria, \eqref{eq_dimension_counting} does not guarantee that the tangent space has a positive dimension. Indeed, even for $n=3$, with $|S_1|=4$ and $|S_2|=|S_3|=2$, the expression~\eqref{eq_dimension_counting} equals $-1$, and so the bound holds only trivially. However, regular Nash equilibria do not allow for such dramatic asymmetry in support sizes, and must satisfy:
\begin{equation}\label{eq_polygon_intuition}
   |S_i|-1\leq \sum_{j\ne i} (|S_j|-1)\qquad \text{for all}\quad i=1,\ldots, n.
\end{equation}
For example, a regular equilibrium cannot have exactly one mixing agent; and if exactly two agents mix, then they must mix over the same number of pure actions.

We term~\eqref{eq_polygon_intuition} the \textbf{polygon inequality}. Geometrically, this condition means that one can construct an $n$-sided polygon, where the length of the $i$-th side corresponds to $|S_i|-1$. In the three-agent case ($n=3$) depicted in Figure~\ref{fig:polygon}, this simplifies to the familiar triangle inequality that, for instance, rules out regular equilibria with $|S_1|=4$ and $|S_2|=|S_3|=2$ since $4-1\not\leq (2-1)+(2-1)$. The intuition behind the polygon inequality relies on counting degrees of freedom. 
Since each agent must be indifferent between actions she takes with positive probability, each additional action in her support imposes an additional constraint on opponents' strategies. The opponents collectively possess only $\sum_{j\ne i}(|S_j|-1)$ degrees of freedom. Therefore, the polygon inequality simply states that the number of imposed constraints cannot exceed the available degrees of freedom.\footnote{While this inequality has appeared in the literature, we could not find a short self-contained proof. For readers' convenience, we  include a straightforward proof in the appendix; see Lemma~\ref{lm_polygon}. The particular case of \eqref{eq_polygon_intuition} for three agents was obtained by \cite*{chin1974structure}. In full generality, the result can be deduced from the analysis by \cite*{mckelvey1997maximal}, who bound the maximal possible number of Nash equilibria with a given support in a generic game. They show that this number is~$0$ for equilibria violating~\eqref{eq_polygon_intuition}. \cite*{kreps1981} obtains a ``dual'' result: under condition~\eqref{eq_polygon_intuition} on the support of $\nu$, there exist utilities $u_1,\ldots, u_n$ making $\nu$ a unique Nash equilibrium.}

\begin{figure}
    \centering
\begin{tikzpicture}[
    scale=1.5,  
    dot/.style={circle, fill=black, inner sep=1pt},
    every label/.append style={font=\footnotesize},
    edge label/.append style={fill=white, inner sep=1pt, text height=1.5ex, text depth=.25ex}
]

\node[dot] (A) at (0,0) {};
\node[dot] (B) at (3,0) {};
\node[dot] (C) at (1.2,1.8) {};

\draw[thick] (A) -- (B) node[midway, below, font=\small] {$|S_1| - 1$};
\draw[thick] (A) -- (C) node[midway, sloped, above, font=\small] {$|S_2| - 1$};
\draw[thick] (C) -- (B) node[midway, sloped, above, font=\small] {$|S_3| - 1$};

\end{tikzpicture}
    \caption{The polygon inequality for a three-agent game: given a regular Nash equilibrium with support $S_1 \times S_2 \times S_3$, values $|S_i| - 1$ are the side lengths of a triangle.}
    \label{fig:polygon}
\end{figure}
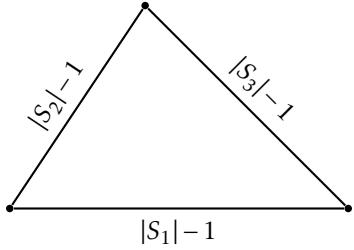

The remaining, and most technical, part of the proof of Theorem~\ref{th_almost_pure} is to show that the dimension bound~\eqref{eq_dimension_counting}, combined with the polygon inequality~\eqref{eq_polygon_intuition}, implies that for $k\geq 3$ mixing agents, the dimension of the tangent space is always positive and, moreover, admits a lower bound~\eqref{eq_dimension_bound_intuition}. The details are provided in the appendix.
\medskip

To conclude that a mixed Nash equilibrium $\nu$ with three or more mixing agents is not extreme, Theorem~\ref{th_almost_pure} requires verifying that $\nu$ is regular, a condition that may be difficult to check in a given game $\Gamma$. However, this verification can be circumvented. As the above discussion suggests, the role of regularity in the theorem is to ensure two conditions: that all incentive constraints corresponding to actions outside the support are inactive, and that the support of $\nu$ satisfies the polygon inequality~\eqref{eq_polygon_intuition}. Thus, regardless of regularity, if a mixed equilibrium $\nu$ with three or more mixing agents satisfies the ``inactivity condition'' and~\eqref{eq_polygon_intuition}, it is not extreme.

In some cases, even these checks are unnecessary. For instance, if $\nu$ is totally mixed---that is, if every action is played with positive probability---the inactivity condition becomes vacuous. Moreover, the polygon inequality~\eqref{eq_polygon_intuition} is automatically satisfied if at least $k\geq 2$ agents randomize over the same number of actions, while the remaining $n - k$ agents choose actions deterministically. Combining these observations yields the following corollary, which is particularly easy to apply.
\begin{corollary}\label{cor_fully_mixed_asymmetric}
In a game with $n \geq 3$ agents, each of whom has the same number of actions, any totally mixed Nash equilibrium is not extreme.
\end{corollary}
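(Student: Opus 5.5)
We prove the corollary directly with the tangent-space argument sketched after Theorem~\ref{th_almost_pure}, rather than through regularity. Let every agent have $m$ actions, and let $\nu=\nu_1\times\cdots\times\nu_n$ be a totally mixed Nash equilibrium. If $m=1$ the game is trivial and $\nu$ is a point mass; we read the statement as concerning $m\geq 2$, so all $n\geq 3$ agents randomize and the support is $S=A$ with $|S_i|=m$. Since $\nu$ has full support, there are no actions outside the support, so the inactivity condition holds vacuously. Every incentive constraint~\eqref{eq_CE_definition} at $\nu$ holds with equality, because each agent is indifferent among all her actions.

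The plan is to exhibit a nonzero $\tau\in\R^A$ such that $\nu\pm\varepsilon\tau\in\ce(\G)$ for small $\varepsilon>0$. We need $\tau$ to satisfy the homogeneous linear system
\[
\sum_a\tau(a)=0,\qquad \sum_{a_{-i}}\tau(a_i,a_{-i})\big(u_i(a_i,a_{-i})-u_i(a_i',a_{-i})\big)=0\quad\text{for all } i \text{ and all distinct } a_i,a_i'.
\]
Given such a $\tau$, the perturbations $\nu\pm\varepsilon\tau$ satisfy all incentive constraints with equality and have total mass one. They are also nonnegative for $\varepsilon$ small, because $\nu(a)>0$ for every $a$. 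Then $\nu=\tfrac12(\nu+\varepsilon\tau)+\tfrac12(\nu-\varepsilon\tau)$ writes $\nu$ as a nontrivial convex combination of two distinct correlated equilibria, so $\nu$ is not extreme.

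It remains to count. The system has $1+nm(m-1)$ equations in $m^n$ unknowns, so a nonzero solution exists whenever $m^n>1+nm(m-1)$, as in~\eqref{eq_dimension_counting}. We show $m^n-1-nm(m-1)\geq 1$ for $n\geq3$ and $m\geq2$.
\begin{itemize}
\item For $n=3$ the expression equals $m^3-3m^2+3m-1=(m-1)^3\geq 1$.
\item Increasing $n$ by one raises $m^n$ by $m^n(m-1)$ but raises $nm(m-1)$ only by $m(m-1)$, so the expression is increasing in $n$.
\end{itemize}
This inequality is the only real step, and it is elementary. The main subtlety is instead checking that nothing from the regularity hypothesis is still needed. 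Slack off-support constraints are unnecessary because there are none. The polygon inequality~\eqref{eq_polygon_intuition} is used in the general proof only to rule out lopsided supports, and with equal support sizes $m$ it holds automatically ($m-1\leq (n-1)(m-1)$). So the direct count suffices.
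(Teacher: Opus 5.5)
Your proof is correct and follows essentially the paper's route. Full support makes the off-support incentive constraints vacuous, and equal support sizes make the polygon inequality automatic, so the dimension count of Lemma~\ref{lm_cone} yields a nonzero direction $\tau$ with $\nu\pm\varepsilon\tau\in\ce(\G)$. The only difference is that you check $m^n-1-nm(m-1)\geq (m-1)^3\geq 1$ directly instead of invoking the general inequality of Proposition~\ref{lm_inequality}; this is a valid shortcut in the equal-support case, and excluding the degenerate case $m=1$ is appropriate.
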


While Theorem~\ref{th_almost_pure} rules out extreme Nash equilibria with more than two agents mixing, the situation is different for extreme correlated equilibria. These may involve more complex forms of randomization, but only to a limited extent: an extreme correlated equilibrium must still place weight on a relatively small collection of action profiles. Indeed, if each of the $n$ agents has at most $m \ge 2$ actions, the number of incentive constraints in~\eqref{eq_CE_definition} is at most $n \cdot m(m-1)$. By \cite{winkler1988extreme}, this bound implies that the support of an extreme correlated equilibrium $\mu$ cannot exceed
\begin{equation*}
    |\supp(\mu)|\leq n\cdot m(m-1)+1.
\end{equation*}
That is, the number of action profiles in the support grows linearly in the number of agents~$n$, while the total number of action profiles $m^n$ is exponential in $n$.

Our results have implications for strategic environments in which Nash equilibria require substantial randomization. In such settings, outcomes may improve when agents coordinate, either directly or through intermediaries. For example, in costly voting models, agents may mix between abstaining and participating \citep{palfrey1983strategic}, and some formulations yield a unique  mixed equilibrium as such \citep{adachi2004costly}. 
Similar binary participation incentives lead to mixed equilibria in discrete public-good provision and the volunteer dilemma \citep{palfrey1984participation,diekmann1985volunteer}. In oligopolistic markets with search or information frictions, firms randomize over prices to avoid being systematically undercut \citep{shilony1977mixed,varian1980model,stahl1989oligopolistic,vives1999oligopoly}. Related forces appear in complete-information all-pay auctions and contests, where the incentive to slightly overbid, much like the incentive to undercut, drives equilibrium randomization \citep{baye1993rigging,baye1996all,siegel2009all}. Colonel Blotto games provide another canonical illustration: players allocate resources across battlefields and must randomize to avoid predictable weaknesses \citep{roberson2006colonel}. Across these settings, equilibrium mixing reflects strategic vulnerability to being anticipated. Our results show that allowing for correlation can strictly improve outcomes relative to Nash equilibria. While the distinction between coordination and collusion can be subtle, this perspective highlights environments in which the gains from coordination, and the risks of collusion, are likely to be particularly pronounced.

\section{Utilitarian and Pareto Improvements}\label{sec:payoff_extremality}

The preceding discussion characterizes extreme equilibria in terms of their structure within the space of distributions over actions. However, economic analysis often focuses on the payoffs associated with equilibria, rather than the distributions themselves. We now show that utilitarian welfare of an equilibrium---the sum of agents' utilities---can generically be improved whenever more than two agents mix. Further, we demonstrate that an equilibrium can be Pareto improved when the number of mixing agents is logarithmic in the size of the population.

Formally, for each $\mu\in \Delta(A)$, assign the expected payoff vector $u(\mu)\in \R^n$ given by $u_i(\mu)=\sum_a u_i(a)\mu(a)$. For any game $\G$, let $U^{CE}(\G) \subset \R^n$ be the (convex) set of correlated equilibrium payoff vectors. We say that a Nash equilibrium $\nu$ with payoff vector $u(\nu)$ is \textbf{payoff-extreme} if $u(\nu)$ is an extreme point of $U^{CE}(\G)$. 

Payoff-extremality is tightly related to welfare improvability. Indeed, consider any linear objective over the space of payoffs. Such an objective corresponds to a weighted utilitarian welfare function with weights~$\alpha=(\alpha_1,\ldots, \alpha_n)$: 
\begin{equation*}
    W_\alpha(\mu)=\sum_i \alpha_i \sum_a u_i(a)\mu(a)= \sum_i \alpha_i u_i(\mu).
\end{equation*} 
Thus, if a Nash equilibrium is not payoff-extreme, $W_\alpha$ can be improved unless $\alpha$ is orthogonal to the corresponding face of $U^{CE}(\G)$. For a generic game~$\G$, this exception occurs only for a zero-measure set of weights $\alpha$. Since this zero-measure set depends on $\G$, there is no a priori guarantee that for a given weight vector $\alpha$---for example, $\alpha=(1,\ldots, 1)$, corresponding to standard utilitarian welfare---the induced objective $W_\alpha$ admits an improvement in a generic game. 

The following proposition illustrates that, for a generic game, equilibria involving sufficient randomization are not payoff-extreme and their utilitarian welfare can, in fact, be improved.

\begin{proposition}
\label{prop:payoff_extreme}
In a generic game, a Nash equilibrium $\nu$ with more than two agents randomizing is not payoff-extreme. Furthermore, its utilitarian welfare, $\sum_i u_i(\nu)$, can be strictly improved.
\end{proposition}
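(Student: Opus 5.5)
The argument reuses the tangent-space construction behind Theorem~\ref{th_almost_pure} and then perturbs utilities, not just the objective. Let $\nu$ be a regular Nash equilibrium with support $S=S_1\times\cdots\times S_n$ and $k\geq 3$ mixing agents; in a generic game every equilibrium is regular. Define the linear space
\[
T=\Big\{\tau\in\R^S:\ \textstyle\sum_a\tau(a)=0,\ \sum_{a_{-i}}\tau(a_i,a_{-i})\big(u_i(a_i,a_{-i})-u_i(a_i',a_{-i})\big)=0\ \text{for all } i,\ a_i\neq a_i'\in S_i\Big\}.
\]
By regularity, the incentive constraints outside the support are slack, so $\nu\pm\varepsilon\tau\in\ce(\G)$ for every $\tau\in T$ and small $\varepsilon>0$. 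The proof of Theorem~\ref{th_almost_pure} gives $\dim T\geq 2^{k-3}\geq 1$. The payoff map $\mu\mapsto u(\mu)$ is linear, so $u(\nu)\pm\varepsilon\,u(\tau)$ lie in $U^{CE}(\G)$.

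The first claim reduces to showing that, generically, some $\tau\in T$ has $u(\tau)\neq 0$. The second claim reduces to showing that some $\tau\in T$ has $\sum_i u_i(\tau)\neq 0$; replacing $\tau$ by $-\tau$ if needed then strictly raises utilitarian welfare. Since the first follows from the second, it suffices to show that the linear functional $L(\tau)=\sum_{a\in S}\tau(a)\sum_i u_i(a)$ does not vanish identically on $T$ for generic utilities.

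The plan for this genericity step is a perturbation argument. Fix an agent $j$ and a profile $a^*\in S$, and add $\delta$ to $u_j(a^*)$ while also shifting $u_j$ on the profiles $(a_j',a^*_{-j})$ for $a_j'\in S_j$, so that all payoff differences $u_j(a_j,a_{-j})-u_j(a_j',a_{-j})$ within $S_j$ are unchanged. A simple way is to add $\delta$ to $u_j(\cdot,a^*_{-j})$ for every action of $j$. This leaves the constraints defining $T$, the set $T$ itself, and the equilibrium $\nu$ unchanged, and it changes $L(\tau)$ by $\delta\sum_{a_j}\tau(a_j,a^*_{-j})$. Therefore, if $L\equiv 0$ on $T$ for an open set of games, differentiating in $\delta$ forces $\sum_{a_j\in S_j}\tau(a_j,a^*_{-j})=0$ for all $\tau\in T$, all $j$, and all $a^*_{-j}\in S_{-j}$. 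In other words, every $\tau\in T$ would have all one-dimensional ``line sums'' equal to zero.

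The main obstacle is ruling out this conclusion, that is, showing that generically $T$ is not contained in the space of line-sum-zero arrays. I would attack it by counting dimensions within $T$. The subspace of $T$ with all line sums zero satisfies the original constraints plus the additional line-sum constraints, and under the zero-line-sum condition the incentive constraints become redundant. This suggests comparing the generic dimension of $T$ with that of the smaller space. Two ways to settle it are a direct construction, for instance a $\tau$ supported on a $2\times2\times2$ subcube of three mixing agents with nonzero marginals that still satisfies the incentive constraints, or an algebraic-genericity argument. In the latter, the condition ``$L\equiv0$ on $T$'' is a polynomial condition in the utilities (rank conditions on a matrix whose entries are payoffs), so it suffices to exhibit a single game, with the given support pattern, where it fails. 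The resulting bad set is then a proper algebraic subset and has measure zero. To handle all supports at once, I would take the finite union over the possible supports $S$ and their equilibria, each of which varies smoothly by regularity.
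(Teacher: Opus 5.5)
Your reduction is the same as the paper's. Shifting $u_j(\cdot,a^*_{-j})$ by $\delta$ is a strategically equivalent perturbation, and your ``line-sum-zero'' space is exactly the paper's $T_{S,0}$. Everything therefore hinges on showing $T\not\subseteq T_{S,0}$ for generic utilities, and that is precisely the step your proposal leaves open. Two ingredients are missing.

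First, you need $\dim T_{S,0}=\prod_i(|S_i|-1)$. This vanishes as soon as some agent plays purely, which settles $k<n$. For $k=n\geq 4$, however, you must exceed it using the refined bound $\dim T\geq 2^k-2k-2+\prod_i(|S_i|-1)$ from Proposition~\ref{lm_inequality}. The bound $2^{k-3}$ you quote is in general far too weak.

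Second, for $n=k=3$ no dimension count can work. There the counting bound on $\dim T$ equals $\dim T_{S,0}-\tfrac12\sum_{i<j}(|S_i|-|S_j|)^2$, which never exceeds $\dim T_{S,0}$. The paper's footnote even exhibits a game with $T_\nu=T_{S,0}$. In this case an actual witness is needed. The paper obtains it from Theorem~5.3 of \cite{mckelvey1997maximal}: a positive-measure set of games with two distinct regular totally mixed equilibria $\nu\neq\nu'$ on $S$. Then $\nu'-\nu\in T\setminus T_{S,0}$, because some one-agent marginal of $\nu'-\nu$ is nonzero.

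The routes you sketch do not supply this witness:
\begin{itemize}
\item The remark that the incentive constraints become redundant under zero line sums is false. It would give $T_{S,0}\subseteq T$, hence $T=T_{S,0}$ in generic $2\times2\times2$ games, which is exactly the inclusion you must exclude.
\item A $\tau$ supported on a $2\times2\times2$ subcube has to satisfy $2\sum_i(|S_i|-1)+1$ linear conditions in $8$ unknowns. So it generically does not exist once some $|S_i|\geq 3$.
\item Write $T=\ker B_S(u)$ and $\ell(u)=\big(\sum_i u_i(a)\big)_{a\in S}$. The bad event ``$L\equiv 0$ on $T$'' is $\operatorname{rank}[B_S(u);\ell(u)]=\operatorname{rank}B_S(u)$, which is not a polynomial condition. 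A single witness at which $B_S(u)$ drops rank (so that $T$ is enlarged) says nothing about generic $u$.
\end{itemize}
The witness must lie where $B_S(u)$ attains its maximal rank $r$. This is why the paper works with a positive-measure set of witnesses and intersects it with the full-rank locus before taking $(r+1)\times(r+1)$ minors.
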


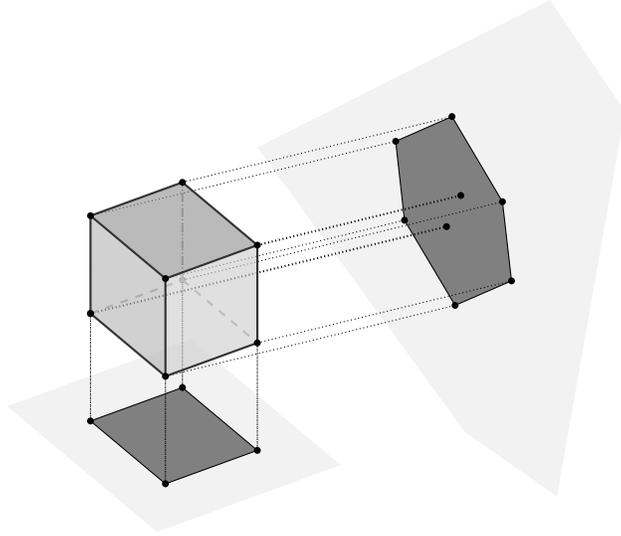
\begin{figure}
    \centering
\begin{tikzpicture}[scale=1.3,
    font=\small,
    color_cube_face/.style={fill=gray!40, fill opacity=0.6},
     color_cube_face_left/.style={fill=gray!60, fill opacity=0.6},
      color_cube_face_top/.style={fill=gray!90, fill opacity=0.6},
    color_cube_edge/.style={thick, draw=black!80},
    color_hidden_edge/.style={thick, draw=gray!70, dashed},
    color_proj_plane/.style={fill=gray!20, fill opacity=0.5},
    color_proj_shape/.style={fill=gray!100},
    color_proj_line/.style={draw=black, thin, densely dotted},
    label_style/.style={font=\tiny, text=black},
    label_style_emph/.style={font=\tiny\bfseries, text=black},
    arrow_style/.style={-latex, thick}
]

\pgfmathsetmacro{\xAngle}{-40}
\pgfmathsetmacro{\yAngle}{20}
\pgfmathsetmacro{\zAngle}{90}

\begin{scope}[x={({cos(\xAngle)*1cm},{sin(\xAngle)*1cm})}, y={({cos(\yAngle)*1cm},{sin(\yAngle)*1cm})}, z={({cos(\zAngle)*1cm},{sin(\zAngle)*1cm})}]

        \coordinate (P000) at (0,0,0); \coordinate (P100) at (1,0,0);
        \coordinate (P010) at (0,1,0); \coordinate (P110) at (1,1,0);
        \coordinate (P001) at (0,0,1); \coordinate (P101) at (1,0,1);
        \coordinate (P011) at (0,1,1); \coordinate (P111) at (1,1,1);

        \coordinate (P'00) at (0,0,-1.1); \coordinate (P'10) at (1,0,-1.1);
        \coordinate (P'01) at (0,1,-1.1); \coordinate (P'11) at (1,1,-1.1);

        \fill[color_proj_plane] (-0.5,-0.5,-1.1) -- (1.5,-0.5,-1.1) -- (1.5,1.5,-1.1) -- (-0.5,1.5,-1.1) -- cycle;

        \filldraw[color_proj_shape] (P'00) -- (P'10) -- (P'11) -- (P'01) -- cycle;

        \draw[color_proj_line] (P000) -- (P'00); \draw[color_proj_line] (P001) -- (P'00);
        \draw[color_proj_line] (P100) -- (P'10); \draw[color_proj_line] (P101) -- (P'10);
        \draw[color_proj_line] (P010) -- (P'01); \draw[color_proj_line] (P011) -- (P'01);
        \draw[color_proj_line] (P110) -- (P'11); \draw[color_proj_line] (P111) -- (P'11);

\coordinate (P'000) at (2.8616, 1.5409, 2.2013);
\coordinate (P'100) at (3.3270, 1.2547, 1.7925);
\coordinate (P'010) at (2.5723, 2.3852, 1.9811);
\coordinate (P'001) at (2.4528, 1.3208, 2.8868);
\coordinate (P'110) at (3.0440, 2.0991, 1.5723);
\coordinate (P'101) at (2.9214, 1.0346, 2.4780);
\coordinate (P'011) at (2.1667, 2.1667, 2.6667);
\coordinate (P'111) at (2.6352, 1.8805, 2.2579);

        \fill[color_proj_plane] (5,1,1) -- (1,5,1) -- (0,5,1.5) -- (1,1,2) -- (5,0,2) -- cycle;
        \filldraw[color_proj_shape] (P'100) -- (P'110) -- (P'010) -- (P'011) -- (P'001) -- (P'101) -- cycle;

        \draw[color_proj_line] (P100) -- (P'100);
        \draw[color_proj_line] (P010) -- (P'010);
        \draw[color_proj_line] (P001) -- (P'001);
        \draw[color_proj_line] (P110) -- (P'110);
        \draw[color_proj_line] (P101) -- (P'101);
        \draw[color_proj_line] (P011) -- (P'011);
        \draw[color_proj_line, thick] (P000) -- (P'000);
        \draw[color_proj_line, thick] (P111) -- (P'111);

        \fill[gray] (P010) circle (1.0pt);         
        \draw[color_hidden_edge] (P000) -- (P010);
        \draw[color_hidden_edge] (P010) -- (P011);
        \draw[color_hidden_edge] (P010) -- (P110);
        \fill[color_cube_face] (P100) -- (P110) -- (P111) -- (P101) -- cycle; 
        \fill[color_cube_face_top] (P001) -- (P101) -- (P111) -- (P011) -- cycle; 
        \fill[color_cube_face_left] (P000) -- (P100) -- (P101) -- (P001) -- cycle; 
        \draw[color_cube_edge] (P100) -- (P110);
        \draw[color_cube_edge] (P101) -- (P111) -- (P011);
        \draw[color_cube_edge] (P100) -- (P101); \draw[color_cube_edge] (P110) -- (P111); 
        \draw[color_cube_edge] (P001) -- (P101); \draw[color_cube_edge] (P001) -- (P011);
        \draw[color_cube_edge] (P000) -- (P100);
        \draw[color_cube_edge] (P000) -- (P001);

        \fill (P'000) circle (1.0pt);
        \fill (P'111) circle (1.0pt);
         \fill (P000) circle (1.0pt);
        \fill (P111) circle (1.0pt);
        \fill (P100) circle (1.0pt);
        \fill (P101) circle (1.0pt);
        \fill (P110) circle (1.0pt);
        \fill (P001) circle (1.0pt);
        \fill (P011) circle (1.0pt);

 \fill (P'10) circle (1.0pt);
        \fill (P'01) circle (1.0pt);
        \fill (P'00) circle (1.0pt);
        \fill (P'11) circle (1.0pt);

                \fill (P000) circle (1.0pt);
        \fill (P'100) circle (1.0pt);
        \fill (P'101) circle (1.0pt);
        \fill (P'110) circle (1.0pt);
        \fill (P'001) circle (1.0pt);
        \fill (P'011) circle (1.0pt);
        \fill (P'010) circle (1.0pt);
    \end{scope}

\end{tikzpicture}
    \caption{Projections of a cube onto two planes:
A projection along the cube's edge collapses pairs of vertices, so each corner of the resulting square traces back to the whole edge. By contrast, a generic projection forms a hexagon whose corners trace back to a single vertex of the cube.} 
    \label{fig:cube}
\end{figure}

The proof of Proposition~\ref{prop:payoff_extreme} appears in Appendix~\ref{app_payoff_space}. To build intuition for this result, observe that the set $U^{CE}(\G)$ arises as the linear projection of the correlated equilibrium polytope onto a lower-dimensional payoff space via the mapping $\mu \to u(\mu)$. Much like the sharpest points of a shadow must trace back to the sharpest points or edges of the object casting it, the extreme points of a projection of a convex set are contained in the projection of the set’s extreme points. Accordingly, the extreme payoffs in $U^{CE}(\G)$ trace back to extreme correlated equilibria under the payoff mapping. However, there is no a priori guarantee that an entire edge, or a lower-dimensional face, of the correlated equilibrium polytope will project to an extreme payoff. Just as in Figure~\ref{fig:cube}, a vertical edge of the cube collapses to a single corner of its square shadow.

Nonetheless, under a generic projection, the sharpest points of an object’s shadow trace back to unique sharp points of the original object, as in the hexagonal shadow shown in Figure~\ref{fig:cube}. Projection to the payoff space is, of course, not a generic projection. However, one can show that for a generic game, each extreme payoff vector corresponds to a unique point in the correlated equilibrium polytope. The key observation is that correlated equilibria are unchanged if we transform each agent's utility function from $u_i(a)$ to 
\begin{equation*}
u_i'(a) = u_i(a) + \delta_i(a_{-i})\quad  \text{for some}\quad \delta_i\colon A_{-i}\to \R.    
\end{equation*}
In other words, adding a function of others' actions yields a strategically equivalent game. While this transformation leaves the set of correlated equilibria intact, it \emph{does} alter the direction in which their payoffs are projected. Although these perturbations result only in a restricted set of directions, they are sufficient to establish that, for a generic game, extreme payoff points can arise only from extreme correlated equilibria.

Now, consider a Nash equilibrium in a generic game involving more than two agents mixing. If it is payoff-extreme, the discussion above implies that it must originate from a single extreme point of the correlated equilibrium polytope, contradicting Theorem~\ref{th_almost_pure}. This establishes the first statement of Proposition~\ref{prop:payoff_extreme}. The need to guarantee a unique projection explains why we need the genericity requirement in the proposition.\footnote{Curiously, the precise restriction depends on whether exactly $3$ or more agents mix. When more than $3$ agents mix, we show that for any regular Nash equilibrium in any game, welfare can be improved in all but a measure-zero set of strategically equivalent games. In other words, with $4$ or more mixers, any regular Nash equilibrium---up to strategic equivalence---admits an improvement. With exactly $3$ mixing agents, however, there exist degenerate strategic interactions for which no improvement is possible even after the transition to a strategically equivalent game. Genericity allows us to exclude such degenerate cases.}

While every extreme point of a projection arises as the image of an extreme point, the converse does not hold. Specifically, not every extreme point projects to an extreme point. For example, two vertices of the cube project to the interior of the hexagon in Figure~\ref{fig:cube}. Accordingly, not every extreme point of the correlated equilibrium polytope yields an extreme point of $U^{CE}$. In particular, pure Nash equilibria and those involving exactly two randomizing agents---while generically extreme in the set of correlated equilibria by Theorem~\ref{th_almost_pure}---are not necessarily payoff-extreme. For instance, the mixed Nash equilibrium in Aumann’s game~\eqref{eq_Aumann_game}, discussed at the end of Section~\ref{sec:model}, lies in the interior of $U^{CE}$ despite being an extreme correlated equilibrium. Whether a Nash equilibrium with at most two agents mixing is payoff-extreme depends on the structure of the game and the equilibrium itself. There is no general, detail-free characterization.\footnote{Relatedly, \cite{lehrer2011equilibrium} show that, in two-agent games, correlated and Nash equilibrium payoffs are essentially unconstrained.}

We now explain why the generic improvability of the utilitarian welfare $W(\nu) =  \sum_i u_i(\nu)$ of a Nash equilibrium $\nu$ with more than two agents mixing follows from non-extremality of such equilibria in payoff space. Since $\nu$ is not an extreme point of $U^{CE}$, the value of any weighted welfare function $W_\alpha$ can be strictly improved via a correlated equilibrium, except when the weight vector $\alpha$ lies in a game-dependent set of measure zero. Hence, the set of pairs $(\Gamma, \alpha)$ for which $W_\alpha$ cannot be improved has measure zero in the product space. By Fubini's theorem, it follows that for almost every $\alpha$, the corresponding set of such games $\Gamma$ has measure zero.\footnote{Fubini's theorem implies that for a subset $E$ of a product space $X\times Y$ with measure $\mu=\mu_X\times \mu_Y$ such that $\mu(E)=0$, the section $E_x=\{y\colon (x,y) \in E\}$ has $\mu_Y$-measure zero for $\mu_X$-almost all $x$.} In particular, there exists some $\alpha^* \in \R_{>0}^n$ such that $W_{\alpha^*}$ can be strictly improved at any Nash equilibrium that is not payoff-extreme in a generic game. Maximizing $W_{\alpha^*}$ in a game with utility functions $u_i$ is equivalent to maximizing the utilitarian welfare $W$ in the rescaled game with utilities $u_i' = \alpha_i^* \cdot u_i$. Thus, the utilitarian welfare of a Nash equilibrium with more than two mixing agents can be strictly improved in a generic game.\footnote{The formal proof, presented in the appendix, proceeds more directly: we establish that the payoff vector of a Nash equilibrium belongs to a face of $U^{CE}(\G)$ not parallel to the hyperplanes $\sum_i u_i=\mathrm{const}$. This guarantees that, along a particular direction, there is a perturbation of the Nash equilibrium that improves utilitarian welfare, and one that decreases it, thus ensuring both claims of the proposition at once.} This conclusion is not specific to the case of equal weights: the same reasoning applies to any fixed weighted welfare objective with strictly positive weights~$\alpha \in \R_{>0}^n$.

\medskip

The welfare improvement in Proposition~\ref{prop:payoff_extreme} can in fact be strengthened to Pareto improvement if the number of mixing agents exceeds a threshold growing logarithmically with the total number of agents~$n$.

\begin{proposition}\label{prop_Pareto}
In a generic game with $n$ agents, for any Nash equilibrium with at least $8+\log_2 n$ mixing agents, there is a correlated equilibrium that strictly improves all agents' utilities. In particular, any such equilibrium is Pareto dominated by a correlated equilibrium.
\end{proposition}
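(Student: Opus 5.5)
We work in the tangent space $T$ of $\ce(\G)$ at $\nu$. Let $\nu$ be a Nash equilibrium with $k\geq 8+\log_2 n$ mixing agents and support $S=S_1\times\cdots\times S_n$. By regularity (generic games), the off-support incentive constraints are slack. Hence $\nu+\varepsilon\tau\in\ce(\G)$ for all small $\varepsilon>0$ and every $\tau\in\R^S$ that has zero total mass and satisfies the on-support incentive equalities. The appendix version of Theorem~\ref{th_almost_pure} gives $\dim T\geq 2^{k-3}$.

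Each agent $i$ defines a linear functional $L_i(\tau)=\sum_{a\in S}u_i(a)\tau(a)$ on $T$. It suffices to find $\tau\in T$ with $L_i(\tau)>0$ for all $i$; then $\nu+\varepsilon\tau$ is the desired correlated equilibrium. By Gordan's alternative (Farkas), no such $\tau$ exists iff there are weights $\lambda\geq 0$, $\lambda\neq 0$, with $\sum_i\lambda_i L_i\equiv 0$ on $T$. So the plan is to show that, generically, no nonnegative nontrivial combination of the payoff functionals vanishes on $T$.

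The key tool is strategic equivalence. Replacing $u_i$ by $u_i+\delta_i(a_{-i})$ leaves $\ce(\G)$, and hence $T$, unchanged, but shifts $L_i$ by the functional $\tau\mapsto\sum_a\delta_i(a_{-i})\tau(a)$. These shifts live in the subspace $D_i$ of functionals depending only on $a_{-i}$.

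1. First, for each nonempty set $J$ of agents, I would show that generically the restrictions $\{L_i|_T\}_{i\in J}$ are linearly independent. This suffices: if $\sum_i \lambda_i L_i\equiv 0$ on $T$ with $\lambda\ge0$ nonzero, then with $J=\{i:\lambda_i>0\}$ the family $\{L_i|_T\}_{i\in J}$ is linearly dependent.
2. By a dimension/transversality argument over the perturbations $\delta_i$, this reduces to a statement about $T$ alone: the subspaces $D_i|_T$ must be "independent enough." Concretely, I need a vector $\tau\in T$, or a family of them, on which the $D_i$-perturbations can be moved independently across agents.
3. I would construct explicit elements of $T$ supported on small sub-boxes $\prod_j\{a_j,b_j\}$ of $S$ over the mixing agents, namely the signed cylinder functions $\prod_j(\mathbf 1_{a_j}-\mathbf 1_{b_j})$ and their variants. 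A product of differences over at least three mixing coordinates automatically satisfies each mixing agent's incentive equalities up to the correct weighting, which is the source of the $2^{k-3}$ bound.
4. For agent $i$, such a $\tau$ pairs trivially with $D_i$ unless $\tau$ is "constant" in coordinate $i$. Cylinder functions that omit coordinate $i$ therefore serve as agent-specific test directions. I need about $n$ such directions with a triangular incidence pattern: test direction $r$ is sensitive to the $D_{i_r}$-perturbation but not to those of earlier agents. That is exactly what generic independence across all $n$ agents requires.

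The main obstacle is this combinatorial step. Each cylinder, i.e.\ each subset of mixing coordinates together with its pattern, must be assigned to agents so that every one of the $n$ agents, including the $n-k$ non-mixing ones, receives a distinguishing direction. The directions must also stay inside $T$ and remain mutually non-interfering. This is a cylinder-packing problem of the kind studied by \cite{felzenbaum1993packing}, and I would invoke their packing bounds. With $k$ mixers, roughly $2^{k-c}$ disjoint or suitably independent cylinders of codimension $\geq3$ are available, and we need $2^{k-c}\geq n$. This yields a threshold of the form $k\geq c+\log_2 n$, and I expect the constant bookkeeping to give $c=8$.

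Once independence holds for a Zariski-dense, hence generic, set of games, Gordan's alternative produces a direction $\tau$ with $L_i(\tau)>0$ for all $i$. The perturbed correlated equilibrium $\nu+\varepsilon\tau$ then strictly improves every agent's utility.
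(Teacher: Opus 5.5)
Your reduction is correct: by Gordan's alternative it suffices that no nonnegative nontrivial combination of the $L_i$ vanishes on $T$. This holds in particular if $\tau\mapsto(L_1(\tau),\ldots,L_n(\tau))$ maps $T$ onto $\R^n$, which is exactly the paper's target in Proposition~\ref{prop_payoff_dimension}. Non-mixers are not the difficulty: for them $\tau$ is determined by its $A_{-i}$-marginal, so strategic-equivalence functionals already exhaust $T^*$.

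The gap is Step~3, and everything after it depends on it. Signed products of differences do \emph{not} lie in $T$ in a generic game, with or without weighting by $\nu$. Take $\tau=\nu\prod_{j\in J}h_j$ with $\sum_{a_j}\nu_j(a_j)h_j(a_j)=0$, and a mixer $i\in J$. The incentive equality for $a_i\neq a_i'$ reads $\nu_i(a_i)h_i(a_i)\sum_{a_{-i}}\nu_{-i}(a_{-i})\prod_{j\in J\setminus\{i\}}h_j(a_j)\big(u_i(a_i,a_{-i})-u_i(a_i',a_{-i})\big)=0$. So it requires an order-$(|J|-1)$ interaction coefficient of agent $i$'s payoff difference to vanish; for mixers outside $J$ the order is $|J|$. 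This fails for generic payoffs. It holds only where payoff differences have no higher-order interactions (e.g.\ pairwise/polymatrix structure), which is presumably where ``at least three coordinates'' comes from. In that class the payoff functionals themselves also vanish on these directions.

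What your products actually satisfy are $u$-independent zero-marginal conditions in the coordinates of $J$. When $J$ is all agents they are elements of $T_{S,0}$, such as the $\phi_c$ of Lemma~\ref{lm_zero_marginals}. By contrast, $T$ is cut out by $u$-dependent equations, and for a generic game no explicit nonzero element of it is available. Nor is the bound $2^{k-3}$ produced by such vectors: it is a count of unknowns minus equations (Lemma~\ref{lm_cone}, the polygon inequality, Proposition~\ref{lm_inequality}). So the triangular test-direction scheme and the packing count have nothing to act on, and $c=8$ is a guess rather than a derivation.

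The missing idea is a way to certify the rank of the payoff map on $T$ for generic games without knowing $T$ explicitly. The paper does this with a single witness game plus an algebraic transfer, and it notes that strategic equivalence alone is insufficient here. The construction runs as follows:
\begin{itemize}
\item The packing lemma is applied to the mixers; this is where $k\geq 12$ enters, giving disjoint $i$-cylinders $C_i$ with $|C_i|>|S_i|^2$.
\item Choose $B_{-i}$ with $|B_{-i}|=|S_i|$ and bijections $f_i\colon S_i\to B_{-i}$, and set $u_i^*(a)=1$ iff $a_{-i}=f_i(a_i)$.
\item The uniform distribution is then a Nash equilibrium, and agent $i$'s incentive equalities say exactly that $\tau(a_i,\cdot)$ is constant on $B_{-i}$.
\item Disjointness of the sets $S_i\times B_{-i}$ makes all $M=1+\sum_i|S_i|(|S_i|-1)$ equations independent, so $\dim T=Q$.
\item The payoff map on $T$ then has rank $\min\{n,Q\}$, with the non-mixers' payoffs chosen freely.
\end{itemize}

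Any repair of your route needs the same full-rank feature. The transfer to generic $u$ goes through minors of $B_S(u)$ and of $B_S(u)$ stacked with the payoff rows, and it works only if the witness attains the \emph{maximal} rank of the incentive system. Independence of the $L_i$ on a larger-than-generic tangent space at a special game does not propagate, so your closing appeal to Zariski-density is unsupported as stated. The threshold then follows from $Q\geq 2^{k-3}\geq n$ together with $k\geq 12$, both implied by $k\geq 8+\log_2 n$.
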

For a large number of agents $n$, this proposition shows that even a negligible fraction of mixing agents renders a Nash equilibrium Pareto dominated.

\begin{figure}
    \centering
\begin{tikzpicture}[scale=1.3,
    color_cube_face/.style={fill=gray!40, fill opacity=0.6},
    color_cube_face_left/.style={fill=gray!60, fill opacity=0.6},
    color_cube_face_top/.style={fill=gray!90, fill opacity=0.6},
    color_cube_edge/.style={thick, draw=black!80},
    color_hidden_edge/.style={thick, draw=gray!70, dashed},
    color_proj_plane/.style={fill=gray!20, fill opacity=0.5},
    color_proj_shape/.style={fill=gray!100},
    color_proj_line/.style={draw=black, thin, densely dotted},
    label_style/.style={font=\tiny, text=black},
    label_style_emph/.style={font=\tiny\bfseries, text=black},
    arrow_style/.style={-Latex, thick},
    color_point_face/.style={fill=red!60!black},
    color_nhood_face/.style={fill=red, fill opacity=0.3, draw=red!50!black, thin},
    color_nhood_proj_face/.style={fill=red, fill opacity=0.3, draw=red!50!black, thin},
    color_nhood_proj/.style={fill=red!60, fill opacity=0.7, draw=red!30!black, thin},
    color_proj_line_emph/.style={draw=red!70!black, thick, densely dotted}
]

\pgfmathsetmacro{\xAngle}{-40}
\pgfmathsetmacro{\yAngle}{20}
\pgfmathsetmacro{\zAngle}{90}

\pgfmathsetmacro{\r}{0.15} 
\pgfmathsetmacro{\nsq}{1.3*1.3 + 0.7*0.7 + 1.0*1.0} 
\pgfmathsetmacro{\ax}{1.3} \pgfmathsetmacro{\ay}{0.7} \pgfmathsetmacro{\az}{1.0} \pgfmathsetmacro{\dval}{7} 
\pgfmathsetmacro{\qx}{1} \pgfmathsetmacro{\qy}{0.5} \pgfmathsetmacro{\qz}{0.5} 
\pgfmathdeclarefunction{projx}{1}{%
  \pgfmathparse{(\qx + ((\dval - (\ax*\qx + \ay*(\qy+\r*cos(#1)) + \az*(\qz+\r*sin(#1))))/\nsq)*\ax)}%
}
\pgfmathdeclarefunction{projy}{1}{%
  \pgfmathparse{(\qy+\r*cos(#1) + ((\dval - (\ax*\qx + \ay*(\qy+\r*cos(#1)) + \az*(\qz+\r*sin(#1))))/\nsq)*\ay)}%
}
\pgfmathdeclarefunction{projz}{1}{%
  \pgfmathparse{(\qz+\r*sin(#1) + ((\dval - (\ax*\qx + \ay*(\qy+\r*cos(#1)) + \az*(\qz+\r*sin(#1))))/\nsq)*\az)}%
}

\begin{scope}[x={({cos(\xAngle)*1cm},{sin(\xAngle)*1cm})}, y={({cos(\yAngle)*1cm},{sin(\yAngle)*1cm})}, z={({cos(\zAngle)*1cm},{sin(\zAngle)*1cm})}]

        \coordinate (P000) at (0,0,0); \coordinate (P100) at (1,0,0);
        \coordinate (P010) at (0,1,0); \coordinate (P110) at (1,1,0);
        \coordinate (P001) at (0,0,1); \coordinate (P101) at (1,0,1);
        \coordinate (P011) at (0,1,1); \coordinate (P111) at (1,1,1);
        \coordinate (P'000) at (2.8616, 1.5409, 2.2013);
        \coordinate (P'100) at (3.3270, 1.2547, 1.7925);
        \coordinate (P'010) at (2.5723, 2.3852, 1.9811);
        \coordinate (P'001) at (2.4528, 1.3208, 2.8868);
        \coordinate (P'110) at (3.0440, 2.0991, 1.5723);
        \coordinate (P'101) at (2.9214, 1.0346, 2.4780);
        \coordinate (P'011) at (2.1667, 2.1667, 2.6667);
        \coordinate (P'111) at (2.6352, 1.8805, 2.2579);
        \coordinate (Q) at (1, 0.5, 0.5);
        \coordinate (Q') at (2.9827, 1.5676, 2.0252);

        \fill[color_proj_plane] (5,1,2) -- (1,5,1) -- (0,5,1) -- (1,1,2) -- (5,0,2) -- cycle;
        
        \filldraw[color_proj_shape] (P'100) -- (P'110) -- (P'010) -- (P'011) -- (P'001) -- (P'101) -- cycle;

        \filldraw[color_nhood_proj] plot[smooth, variable=\t, domain=0:360, samples=50]
            ({projx(\t)}, {projy(\t)}, {projz(\t)});
            
        \draw[color_proj_line] (P100) -- (P'100);
        \draw[color_proj_line] (P010) -- (P'010);
        \draw[color_proj_line] (P001) -- (P'001);
        \draw[color_proj_line] (P110) -- (P'110);
        \draw[color_proj_line] (P101) -- (P'101);
        \draw[color_proj_line] (P011) -- (P'011);
        \draw[color_proj_line, thick] (P111) -- (P'111);
        \draw[color_proj_line_emph] (Q) -- (Q');

        \fill[gray] (P010) circle (1.0pt); 
        \draw[color_hidden_edge] (P000) -- (P010);
        \draw[color_hidden_edge] (P010) -- (P011);
        \draw[color_hidden_edge] (P010) -- (P110);
        \fill[color_nhood_face] (P100) -- (P110) -- (P111) -- (P101) -- cycle; 
        \fill[color_cube_face_top] (P001) -- (P101) -- (P111) -- (P011) -- cycle; 
        \fill[color_cube_face_left] (P000) -- (P100) -- (P101) -- (P001) -- cycle; 
        \draw[color_cube_edge] (P100) -- (P110);
        \draw[color_cube_edge] (P101) -- (P111) -- (P011);
        \draw[color_cube_edge] (P100) -- (P101); \draw[color_cube_edge] (P110) -- (P111); 
        \draw[color_cube_edge] (P001) -- (P101); \draw[color_cube_edge] (P001) -- (P011);
        \draw[color_cube_edge] (P000) -- (P100);
        \draw[color_cube_edge] (P000) -- (P001);

        \fill (P000) circle (1.0pt); 
        \fill (P'010) circle (1.0pt);
        \fill (P001) circle (1.0pt); \fill (P'001) circle (1.0pt);
        \fill (P011) circle (1.0pt); \fill (P'011) circle (1.0pt);

               \fill[color_nhood_proj_face] (P'100) -- (P'101) -- (P'111) -- (P'110) -- cycle;
              
        \filldraw[color_nhood_face] plot[variable=\t, domain=0:360, samples=50]
            (1, {0.5+\r*cos(\t)}, {0.5+\r*sin(\t)});
        \fill[color_point_face] (Q) circle (1.2pt);
        \node[label_style, above right=2pt, color=red!30!black] at (Q) {$\nu$};
        
        \fill[color_point_face] (Q') circle (1.2pt);
                \fill (P100) circle (1.0pt); \fill (P'100) circle (1.0pt);
        \fill (P110) circle (1.0pt); \fill (P'110) circle (1.0pt);
        \fill (P111) circle (1.0pt); \fill (P'111) circle (1.0pt);
        \fill (P101) circle (1.0pt); \fill (P'101) circle (1.0pt);

    \end{scope}

\end{tikzpicture}
\caption{Under a generic projection of a polytope to an $n$-dimensional space, any point in the relative interior of an $n$-dimensional face is projected to the interior of the image.}
 \label{fig:cube_Pareto}
\end{figure}
The proposition is proved in Appendix~\ref{app_Pareto} where we also show that the logarithmic order of the bound is tight. The intuition behind the result is geometric: to conclude that a Nash equilibrium $\nu$ admits a Pareto improvement, it suffices to show that the corresponding payoff vector $u(\nu)$ lies in the interior of the correlated equilibrium payoff set $U^{\mathrm{CE}}$. 

Recall that $U^{\mathrm{CE}}$ can be thought of as a projection of the correlated equilibrium polytope onto the $n$-dimensional space of payoffs. Under a generic projection of a polytope to an $n$-dimensional space, interior points of $n$-dimensional faces are mapped to interior points of the image, as illustrated in Figure~\ref{fig:cube_Pareto}.
By formula~\eqref{eq_dimension_bound_intuition}, a Nash equilibrium with $k$ mixing agents lies in the relative interior of a face of the correlated
equilibrium polytope of dimension $d \geq 2^{k-3}$. The image of this face
under a collection of linear objectives $f_1,\ldots, f_d$, viewed as a projection onto
$\R^d$, has nonempty interior whenever the collection is generic,
meaning it does not collapse the dimension of the face.
\begin{corollary}\label{cor_multiple_objectives}
    Take any game and a regular Nash equilibrium with $k\geq 3$ mixing agents. For any collection of at most $2^{k-3}$ generic linear objectives, there is a correlated equilibrium that strictly improves these objectives simultaneously.
\end{corollary}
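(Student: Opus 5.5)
The plan is to combine the dimension bound~\eqref{eq_dimension_bound_intuition} from the strengthened version of Theorem~\ref{th_almost_pure} with an elementary open-mapping argument for linear maps restricted to an affine subspace.

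\textbf{Step 1: locate $\nu$ in a face.} Let $\nu$ be a regular Nash equilibrium with $k\geq 3$ mixing agents and support $S$. Let $F$ be the minimal face of $\ce(\G)$ containing $\nu$. By regularity, the incentive constraints outside $S$ are slack at $\nu$, and $\nu(a)>0$ for every $a\in S$. Hence, for every direction $\tau$ in the tangent space $T\subset\R^S$ described before~\eqref{eq_dimension_counting}, both $\nu\pm\varepsilon\tau$ lie in $\ce(\G)$ for small $\varepsilon>0$. Therefore $\nu+(\text{a small ball in }T)\subset F$, so $\nu$ lies in the relative interior of $F$ and $\dim F=\dim T\geq d:=2^{k-3}$. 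This is exactly the content of the strengthened Theorem~\ref{th_almost_pure} proved in Appendix~\ref{app_th1_proof}, and I take it as given.

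\textbf{Step 2: make ``generic'' precise.} Take linear objectives $f_1,\dots,f_m$ with $m\leq d$, viewed as vectors in $\R^A$. I call the collection generic (for this $\nu$) if the linear map $f=(f_1,\dots,f_m)\colon T\to\R^m$ is surjective. Equivalently, the $m\times\dim T$ matrix of the restrictions of the $f_j$ to a basis of $T$ has rank $m$. Since $\dim T\geq d\geq m$, the rank-deficient collections form the zero set of a nonzero polynomial, namely the sum of squares of the $m\times m$ minors. This polynomial is nonzero because some choice of the $f_j$ makes the restriction an identity block. So genericity holds on an open dense set of full measure in $(\R^A)^m$, matching the paper's notion of genericity.

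\textbf{Step 3: construct the improvement.} By surjectivity, pick $\tau\in T$ with $f_j(\tau)=1$ for all $j$. For small $\varepsilon>0$, $\mu=\nu+\varepsilon\tau\in F\subset\ce(\G)$, and $f_j(\mu)=f_j(\nu)+\varepsilon>f_j(\nu)$ for every $j$. This gives the simultaneous strict improvement.

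\textbf{Main obstacle.} The only nontrivial input is the lower bound $\dim T\geq 2^{k-3}$. Establishing it requires combining the count~\eqref{eq_dimension_counting} with the polygon inequality~\eqref{eq_polygon_intuition} and optimizing over support sizes, and that work is deferred to the appendix. The one subtlety within the corollary itself is making sure the genericity condition is read relative to the face through $\nu$ (i.e., to $T$), rather than to all of $\R^A$; Step 2 handles this.
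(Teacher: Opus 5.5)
Your proposal is correct and follows essentially the paper's own argument. The regular equilibrium lies in the relative interior of a face of dimension $\dim T_\nu\geq 2^k-2k-1\geq 2^{k-3}$ (Theorem~\ref{th_almost_pure_appendix}), and a generic collection of objectives maps that face onto a set with the image of $\nu$ in its interior. You make ``generic'' precise as surjectivity of $(f_1,\ldots,f_m)$ on $T_\nu$, which matches the paper's ``does not collapse the dimension of the face,'' and choosing $\tau\in T_\nu$ with $f_j(\tau)=1$ for all $j$ is simply the direct form of that interior-point step.
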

Consequently, in general, one additional mixing agent increases the number of objectives that can be improved simultaneously by a factor of two. 

Pareto improvements correspond to improving $n$ objectives, $u_1,\ldots, u_n$. From the corollary, if $ k\geq 3 + \log_2 n$ agents randomize, at least $n$ generic objectives can be improved. Hence, provided that the projection onto the payoff space acts as if it were generic, the payoff vector $u(\nu)$ is in the interior of the set of correlated equilibrium payoffs. The genericity requirement we need is that the dimension of this face equals the dimension of its image. As we show in the appendix, the payoff projection behaves in this desired way under the additional assumption that $k\geq 12$.\footnote{In Proposition~\ref{prop:payoff_extreme}, moving to a strategically equivalent game ensured that the payoff projection behaved generically. For Proposition~\ref{prop_Pareto}, this is insufficient because all $n$ payoff dimensions must be tracked simultaneously. The formal proof therefore takes a different route: starting from $S = S_1 \times \cdots \times S_n$, we construct a single game $\G^*$ with a Nash equilibrium $\nu_S$ supported on $S$ whose payoff vector lies in the interior of $U^{\mathrm{CE}}(\G^*)$, and then use algebraic arguments to extend the conclusion to generic games. Such an extension is possible since the game $\G^*$ is constructed so that it has the maximal number of independent incentive constraints active at~$\nu_S$. Achieving this requires at least $12$ mixing agents, which gives rise to the condition $k \geq 12$.} We conclude that the equilibrium admits a Pareto improvement if $k \geq 3 + \log_2 n$ and $k\geq 12$,  conditions satisfied simultaneously by requiring $k \geq 8 + \log_2 n$.

\section{Symmetric Games}\label{sec:symmetric}

Many games considered in the literature exhibit symmetry: an agent's payoffs depend only on her actions and the actions others choose, but not on her or others' identity. Put differently, permuting the labels of agents and, accordingly, the actions they take, makes no difference to payoffs; see, for example, \cite*{dasgupta1986existence}. 

Formally, we say that a game $\G$ is symmetric if it is invariant under all permutations of agents $\pi$. That is, all action sets are identical, and for every permutation $\pi$ of agents we have $u_i(a) = u_{\pi(i)}(a_{\pi(1)}, \ldots, a_{\pi(n)})$. Similarly, a distribution $\mu$ over actions is symmetric if, for any permutation $\pi$, we have $\mu(a)=\mu(a_{\pi(1)},\ldots, a_{\pi(n)})$ for all $a\in A$. 

Consider a symmetric $n$-agent game $\G$ with $n\geq 3$. Let $\nu$ be a symmetric mixed Nash equilibrium, where agents randomize over all available actions, so that $\nu$ is totally mixed. Our Corollary~\ref{cor_fully_mixed_asymmetric} indicates that $\nu$ is not extreme within the set of correlated equilibria. What if symmetry is desirable in and of itself? A designer may want to treat agents symmetrically and favor equity over any other objective. Additionally, if the objective treats agents symmetrically---e.g., the social (utilitarian) welfare---its optimum can be achieved via symmetric correlated equilibria. As we show in Appendix~\ref{app_symm_games}, a totally mixed equilibrium is not extreme even within the set of symmetric correlated equilibria. Thus, we have the following corollary.

\begin{corollary}\label{cor_not_extreme_totally_mixed}
Let $\Gamma$ be a symmetric game with $n\geq 3$ agents. If $\nu$ is a symmetric totally-mixed Nash equilibrium, then $\nu$ is not extreme within the set of correlated equilibria. Furthermore, $\nu$ is not extreme within the set of symmetric correlated equilibria.   
\end{corollary}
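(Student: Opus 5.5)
The first claim follows directly from Corollary~\ref{cor_fully_mixed_asymmetric}. In a symmetric game all agents share the same action set, and a totally mixed $\nu$ has $S_i=A_i$ for all $i$. With $n\ge 3$ agents all randomizing over the same number $m\ge 2$ of actions, the corollary applies. (If $m=1$, the game is trivial and there is no mixing; we assume $m\ge 2$, as is implicit in ``mixed.'')

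For the symmetric claim, we will work in the tangent-space picture from Section~\ref{sec:improvability}. Since $\nu$ is totally mixed, $\nu(a)>0$ for every $a\in A$, so any perturbation $\nu\pm\varepsilon\tau$ stays nonnegative for small $\varepsilon$. All incentive constraints~\eqref{eq_CE_definition} hold with equality at $\nu$: each agent is indifferent among all her actions, which are all in the support. Hence $\nu\pm\varepsilon\tau$ is a symmetric correlated equilibrium whenever $\tau$ satisfies three conditions: (i) $\sum_a\tau(a)=0$; (ii) $\tau$ satisfies every constraint~\eqref{eq_CE_definition} with equality, i.e. $\sum_{a_{-i}}\tau(a_i,a_{-i})\big(u_i(a_i,a_{-i})-u_i(a_i',a_{-i})\big)=0$; and (iii) $\tau$ is symmetric (permutation invariant). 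It therefore suffices to exhibit a nonzero $\tau$ satisfying (i)--(iii).

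We count dimensions inside the space of symmetric vectors. A symmetric $\tau$ is a function of the action-count vector $c\in\{(c_1,\dots,c_m):\sum c_j=n\}$, so this space has dimension $\binom{n+m-1}{m-1}$. For a symmetric $\tau$ in a symmetric game, the constraint for agent $i$ and pair $(a_i,a_i')$ coincides with the constraint for agent $1$ and the same pair. Hence (ii) imposes at most $m(m-1)$ independent conditions, and (i) imposes one more. So it suffices that
\[
\binom{n+m-1}{m-1} > m(m-1)+1 .
\]
This is the step most likely to fail: for $n=3$ and large $m$, the left side is $\binom{m+2}{3}\approx m^3/6$, which exceeds $m^2$ only for $m$ large, and the inequality fails for small $m$. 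For example, $m=2$ gives $4>3$, but $m=3$ gives $10>7$ and $m=4$ gives $20>13$, both fine; I expect the inequality to hold for all $m\ge2$, $n\ge3$. I will verify it directly. For $n=3$, $\binom{m+2}{3}-m^2+m-1=(m^3+3m^2+2m)/6-m^2+m-1=(m^3-3m^2+8m-6)/6$. This equals $1/3>0$ at $m=2$ and is increasing thereafter. For $n>3$ the left side only grows in $n$.

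If the raw count ever proves too tight, the fallback is to reduce the constraints further. Among the $m(m-1)$ pairwise constraints, only $m-1$ per recommended action are needed up to linear dependence, or equivalently one can use the indifference structure. Alternatively, one can exploit the fact that $\nu$ itself, being a product distribution, satisfies them with a redundancy such as the sum over $a_i$ of constraints. Either way we obtain a nonzero symmetric $\tau$, so $\nu=\tfrac12(\nu+\varepsilon\tau)+\tfrac12(\nu-\varepsilon\tau)$ with both terms symmetric correlated equilibria, and $\nu$ is not extreme in that set.
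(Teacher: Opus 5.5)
Your proof is correct. The first claim is handled exactly as in the paper, but for the symmetric claim you take a more direct route.

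The paper works with the geometry of exchangeable distributions. The extreme exchangeable distributions are the symmetrized point masses, one for each frequency vector in $\Delta_{m,n}$ (Diaconis--Freedman). Because they have disjoint supports, the full-support i.i.d.\ product $\nu$ must charge all $\binom{n+m-1}{m-1}$ of them. By Winkler's theorem, an extreme symmetric correlated equilibrium, being cut out by the $m(m-1)$ single-agent incentive constraints, charges at most $m(m-1)+1$ of them.

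You instead write down the tangent space at $\nu$ within the subspace of symmetric vectors, and compare $\binom{n+m-1}{m-1}$ unknowns with at most $m(m-1)+1$ homogeneous equations. This is the same count in disguise: Winkler's bound is precisely this linear-algebra fact applied to the simplex of exchangeable distributions, in whose relative interior $\nu$ lies. Both arguments end at the inequality $\binom{n+m-1}{m-1}>m(m-1)+1$.

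What each version buys:
\begin{itemize}
\item Yours is self-contained and parallels Lemma~\ref{lm_cone}. It gives an explicit improving direction, and the lower bound $\binom{n+m-1}{m-1}-m(m-1)-1$ on the dimension of the face of the symmetric correlated equilibrium polytope through $\nu$.
\item The paper's produces global structural lemmas, which it reuses for the urn representation in Proposition~\ref{prop_structure_of_extreme_symmetric_CE} and for Corollary~\ref{th_de_Finetti_CE}.
\end{itemize}

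Minor points:
\begin{itemize}
\item The symmetric claim already implies the general one. A nontrivial decomposition into symmetric correlated equilibria is in particular a decomposition into correlated equilibria. So the appeal to Corollary~\ref{cor_fully_mixed_asymmetric} is valid but unnecessary.
\item At $m=2$ your cubic $(m^3-3m^2+8m-6)/6$ equals $1$, not $1/3$ (indeed $4-3=1$). It is increasing for all $m$, since its derivative is positive everywhere (the quadratic $3m^2-6m+8$ has negative discriminant).
\item The remark that the inequality ``fails for small $m$'' is false and should be deleted.
\item The fallback paragraph is not needed, and its claimed extra redundancies among the constraints are not substantiated, so drop it as well.
\end{itemize}
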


We now show that extreme symmetric correlated equilibria take a very particular form, regardless of the underlying game's details. The key observation is that any symmetric correlated equilibrium corresponds to an exchangeable distribution over actions. That is, the distribution of $(a_1,\ldots, a_n)$ coincides with that of $(a_{\pi(1)},\ldots, a_{\pi(n)})$ for any permutation $\pi$.

\begin{proposition}\label{prop_structure_of_extreme_symmetric_CE}
    Any extreme symmetric correlated equilibrium of a symmetric game $\G$ with $n$ agents and $m$ actions $a_1,\ldots, a_m$ arises from a fixed collection of $M\leq  m(m-1)+1$ urns, each containing $n$ balls labeled with actions, as follows:
\begin{itemize}
\item An urn is selected at random according to some distribution $p \in \Delta_M$. Agents do not know which urn is selected.
\item Agents approach the selected urn in some order and draw balls without replacement.
\item Each agent takes an action matching her ball’s label.
\end{itemize}
\end{proposition}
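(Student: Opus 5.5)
The plan is to describe the set of symmetric correlated equilibria as a polytope in a low-dimensional space of "type counts" and then apply the Winkler-type support bound to it.

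\emph{Step 1: reduce to count vectors.} A symmetric (exchangeable) distribution $\mu$ on $A=\{a_1,\ldots,a_m\}^n$ is determined by the induced distribution $q$ over count vectors $c=(c_1,\ldots,c_m)\in\Z_{\geq 0}^m$ with $\sum_l c_l=n$. Here $c_l$ is the number of agents recommended $a_l$. Conditional on $c$, $\mu$ is uniform over all profiles with that count vector. This uniform distribution is exactly what one obtains by drawing $n$ balls without replacement from an urn containing $c_l$ balls labeled $a_l$, with agents approaching in a uniformly random (or any fixed) order. This is the finite de Finetti representation of \cite{diaconis1980finite}, and here it is exact and elementary. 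So the symmetric distributions form the simplex $\Delta(C)$, where $C$ is the set of count vectors, and the map $q\mapsto\mu$ is an affine bijection onto the symmetric distributions. Extreme points therefore correspond.

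\emph{Step 2: count the incentive constraints.} Because $\Gamma$ and $\mu$ are symmetric, the incentive constraint~\eqref{eq_CE_definition} for agent $i$ and a pair $(a_l,a_{l'})$ coincides with the constraint for any other agent $j$ and the same pair. To see this, apply a permutation $\pi$ exchanging $i$ and $j$; the conditions $u_i(a)=u_{\pi(i)}(a_\pi)$ and $\mu(a)=\mu(a_\pi)$ make the two sums equal. Thus, within the symmetric distributions, the CE conditions reduce to at most $m(m-1)$ linear inequalities. Written in terms of $q$, each reads $\sum_{c} q(c)\,g_{l,l'}(c)\ge 0$, where $g_{l,l'}$ is computed from the conditional uniform distribution. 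The symmetric CE set is therefore the intersection of the simplex $\Delta(C)$ with at most $m(m-1)$ half-spaces.

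\emph{Step 3: bound the support.} Apply the linear programming principle of \cite{winkler1988extreme}, already used in the paper. At an extreme point of $\{q\in\Delta(C)\colon \text{$K$ linear inequalities}\}$, the number of linearly independent tight constraints must equal $|C|-1$ within the affine hull of the simplex. The only constraints are positivity constraints $q(c)\ge 0$ and the $K$ incentive constraints, so at most $K+1$ coordinates $q(c)$ are positive. With $K\le m(m-1)$ we obtain $|\supp q|\le m(m-1)+1$. Each $c$ in the support is an urn, and $p=q$ restricted to the support is the distribution over urns. This gives the stated procedure with $M\le m(m-1)+1$.

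\emph{Main obstacle.} The delicate point is Step 1 together with the correspondence of extreme points. One must check that extremality within the symmetric CE set, viewed as a subset of $\Delta(A)$, is equivalent to extremality of $q$ in the reduced polytope. This holds because the correspondence is an affine bijection, so it preserves convex combinations in both directions. One must also check that agents not knowing the urn corresponds to the conditioning being only on one's own ball, which matches the CE constraint. The rest is the routine Winkler-style count.
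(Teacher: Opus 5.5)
Your proposal is correct and follows the paper's route. The paper combines the Diaconis--Freedman characterization of extreme exchangeable distributions as urn draws (Lemma~\ref{lm_extreme_exchangeable}) with the observation that, by symmetry, only one agent's $m(m-1)$ incentive constraints matter, and then applies Winkler's bound (Lemma~\ref{lm_extreme_symmetric_correlated}); your Steps 1--3 use exactly these ingredients, except that you prove the count-vector (orbit) representation and the vertex-counting bound directly instead of citing them.
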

The proposition is proved in Appendix~\ref{app_symm_games}. Intuitively, the set of exchangeable distributions over actions is convex, so any such distribution can be expressed as a mixture of extreme exchangeable distributions. For infinite sequences of exchangeable random variables, the famous de Finetti theorem characterizes the extreme points as i.i.d. product distributions. For finite $n$, the extreme exchangeable distributions correspond to sampling without replacement from an urn with a given composition \citep{diaconis1980finite}. Hence, any exchangeable distribution over action profiles corresponds to a mixture over urns. 

Symmetric correlated equilibria form a convex subset of the set of exchangeable distributions, defined by incentive constraints. By symmetry, it is enough to impose these constraints on a single agent. For each action taken with positive probability, there are at most $m - 1$ constraints---one for each possible deviation---yielding a total of at most $m(m - 1)$ constraints. By \citet{winkler1988extreme}, any extreme point of this constrained set must be a mixture over at most $m(m - 1) + 1$ extreme exchangeable distributions. This gives the desired representation.

The proposition provides a simple parametrization of extreme symmetric correlated equilibria, particularly convenient when the number of actions is small. We illustrate the implications of the proposition in the following section, where we explicitly construct improving correlated equilibria in binary-action games. Indeed, when $m=2$, we only need $M=3$ urns. The composition of each urn is determined by a single number: the fraction of balls marked $a_1$ in the urn. A distribution $p$ over three urns adds $2$ more parameters. As a result, all extreme symmetric correlated equilibria form a $5$-parametric family. More generally, we need $(m^2+1)(m-1)$ parameters. Importantly, this number does not depend on the number of agents $n$.

Proposition \ref{prop_structure_of_extreme_symmetric_CE} implies further simplification when the number of agents is large relative to the number of available actions. In that case, the joint distributions corresponding to draws with and without replacement become close to one another. Thus, we are back to the framework of the classical de~Finetti theorem with infinite products of i.i.d. distributions instead of urns. 

Formally, a collection of random variables $\xi_1,\ \xi_2,\ldots, \xi_n$ with values in an $m$-element finite set is said to be {$\varepsilon$-close} to a mixture of i.i.d. distributions if, for any $k=1,\ldots, n$, the total variation distance between the joint distribution of any $k$ random variables $\xi_{i_1},\ldots, \xi_{i_k}$ and the closest mixture of i.i.d. distributions is at most $\varepsilon k$.\footnote{The total variation distance between two probability distributions is defined as half the sum of the absolute differences between probabilities across all possible outcomes.} \cite*{diaconis1980finite} showed that any exchangeable sequence of length~$n$  is {$\varepsilon$-close} to a mixture of i.i.d. distributions with $\varepsilon=2m/n$. Combining their insight with Proposition~\ref{prop_structure_of_extreme_symmetric_CE}, we obtain the following corollary.

\begin{corollary}\label{th_de_Finetti_CE}
Any symmetric correlated equilibrium of a symmetric game with $n$ agents, each having $m$ strategies, is $\varepsilon$-close to a mixture of i.i.d. distributions with
$\varepsilon={2m}/{n}.$
For extreme symmetric correlated equilibria, mixing at most $m(m-1)+1$ distributions suffices.
\end{corollary}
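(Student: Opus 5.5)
The plan is to combine the exchangeable-sequence approximation of \cite*{diaconis1980finite} with the structural description in Proposition~\ref{prop_structure_of_extreme_symmetric_CE}. The first observation is that a symmetric correlated equilibrium $\mu$ of a symmetric game is, by definition, invariant under all permutations of coordinates. Hence the random action profile $(\xi_1,\ldots,\xi_n)\sim\mu$ is an exchangeable sequence of length $n$ with values in the $m$-element action set. The first claim then follows directly from the finite de~Finetti theorem of \cite*{diaconis1980finite}. That result states that for any exchangeable $(\xi_1,\ldots,\xi_n)$ on an $m$-point set and any $k\le n$, the law of $(\xi_{i_1},\ldots,\xi_{i_k})$ is within total variation $2mk/n$ of a mixture of i.i.d.\ laws. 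Exchangeability makes the choice of indices immaterial, so this matches the definition of $\varepsilon$-closeness with $\varepsilon=2m/n$.

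For the second claim, take an extreme symmetric correlated equilibrium $\mu$. By Proposition~\ref{prop_structure_of_extreme_symmetric_CE}, it has the form $\mu=\sum_{j=1}^M p_j\,\eta_j$ with $M\le m(m-1)+1$, where $\eta_j$ is the law of drawing $n$ balls without replacement from urn $j$. The argument then proceeds urn by urn. Each $\eta_j$ is itself exchangeable, and its $k$-marginal is the law of $k$ draws without replacement from urn $j$. The natural comparison is with the i.i.d.\ law $\lambda_j^{\otimes k}$, where $\lambda_j\in\Delta(A_i)$ is the empirical composition of urn $j$. The Diaconis--Freedman bound for sampling with versus without replacement gives $\|\eta_j^{(k)}-\lambda_j^{\otimes k}\|_{TV}\le 2mk/n$; this is exactly the estimate underlying their theorem. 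Total variation is convex in the pair of measures, so mixing with the same weights $p_j$ yields
\begin{equation*}
\Big\|\mu^{(k)}-\sum_{j=1}^M p_j\,\lambda_j^{\otimes k}\Big\|_{TV}\le \sum_{j} p_j\,\|\eta_j^{(k)}-\lambda_j^{\otimes k}\|_{TV}\le \frac{2mk}{n}.
\end{equation*}
This exhibits a mixture of at most $m(m-1)+1$ i.i.d.\ distributions, and importantly the same mixture works for every $k$.

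The main point needing care is to invoke the correct form of the Diaconis--Freedman estimate. Specifically, I would use the version comparing a single urn (sampling without replacement) to i.i.d.\ sampling from its composition, with the bound $2mk/n$ for $m$ colors. This avoids applying only the abstract existence statement, which would not control the number of mixture components. I would also check that the definition of $\varepsilon$-closeness is satisfied uniformly in $k$ by a single mixing measure. If the definition instead allows a different approximant for each $k$, the argument is only easier.
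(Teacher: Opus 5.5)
Your proposal is correct and follows the paper's route. Symmetric correlated equilibria are exchangeable, so the first claim is a direct application of \cite*{diaconis1980finite}; for extreme ones, the urn representation of Proposition~\ref{prop_structure_of_extreme_symmetric_CE}, the per-urn bound for sampling with versus without replacement, and convexity of total variation together yield a mixture of at most $m(m-1)+1$ i.i.d.\ laws. This is exactly the ``combining their insight with Proposition~\ref{prop_structure_of_extreme_symmetric_CE}'' that the paper invokes, and your observation that a single mixture works for every $k$ is a slight strengthening of what the definition of $\varepsilon$-closeness requires.
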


The corollary implies that, in the large-population limit, the set of symmetric correlated equilibria can be described via the following simple procedure. Pick a set of states $\Theta$ with prior $p \in \Delta(\Theta)$, a signal set $S = A_1 = \ldots = A_n$ equal to the common action set, and a family of distributions $\nu_\theta \in \Delta(S)$ for each $\theta \in \Theta$.
The state $\theta$ is realized according to the distribution $p$, and i.i.d. signals $s_i\sim \nu_\theta$ are privately released to each agent $i$. Agents, knowing the distributions over states and signals, best reply to their observed signals. 

Thus, symmetric correlated equilibria coincide asymptotically with the Bayesian Nash equilibria of a game obtained after adding a payoff-irrelevant random state variable $\theta$ with conditionally independent and identically distributed private signals. We use this observation in the following section.

\section{Construction of Improving Correlated Equilibria}\label{sec_binary_action}

When more than two agents mix in a Nash equilibrium, it is generically improvable. How do we identify a superior correlated equilibrium? 

In this section, we focus on utilitarian and Pareto welfare objectives and consider symmetric binary-action games, a class that captures a range of settings---including public goods provision, political participation (e.g., protests), and congestion scenarios---where individuals choose between two alternatives. In a rich subset of these environments, Nash equilibria are inherently sub-optimal: \textit{no} Nash equilibrium achieves utilitarian efficiency. We present a geometric approach for constructing desirable correlated equilibria.

Formally, we study the class of symmetric games in which each agent chooses an action from $\{0,1\}$.
For simplicity, we focus on games where the payoff of action~$0$ is normalized to~$0$. Thus, the game can be parameterized by a function $f\colon [0,1]\to \R$ that maps the overall fraction of agents choosing action 1 to the resulting payoff from that action:
\begin{equation}\label{eq_binary_congestion}
u_i(a_i,a_{-i})=\begin{cases}
    f\left(\frac{|\{j\in N\colon a_j=1\}|}{n}\right),& a_i=1\\
    0, & a_i=0.
\end{cases}
\end{equation}
Any symmetric binary-action game is strategically equivalent to a game of this form.\footnote{In general, agent $i$'s choice between the two actions is driven by the payoff difference $u_i(1,a_{-i})-u_i(0,a_{-i})$. Setting the payoff of action $0$ to $0$ and the payoff of $1$ to this difference, we maintain all agents' incentives and get a game of the form of~\eqref{eq_binary_congestion}. Naturally, for objectives that do not respect strategic equivalence, such as utilitarian welfare considerations, utility from the $0$-action may influence assessments. Our approach can accommodate general utilities for the $0$-action, though doing so entails more tedious calculations.} 

We are interested in the large-population behavior of the game. We assume that $f$ takes both positive and negative values, $f(1)<0$, and $f$ is continuous. These requirements reflect that action $1$ is desirable only if not too many agents choose it, and small shifts in behavior have limited impact on others' payoffs. 

Such payoff structures arise naturally across a range of settings. In models of political protests, individuals choose whether to participate, with success determined by turnout \citep{chwe1999structure}; when participation is costly, high turnout can generate free-riding incentives, reducing the net value of joining. In public goods provision, agents decide whether to contribute, with the value of contributions depending on the share of contributors and potentially declining as participation grows \citep{kurz1994game}. Similarly, in congestion environments, agents choose between routes whose travel times depend on the fraction of users, so that heavily used routes become increasingly unattractive \citep{rosenthal1973class}.

In this class of games, decentralized outcomes exhibit inherent inefficiencies when the population grows large.

\begin{proposition}\label{prop:binary_payoffs}
As~$n\to \infty$, all agents' equilibrium payoffs at all Nash equilibria converge to $0$. 
\end{proposition}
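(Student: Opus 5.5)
Since action $0$ yields $0$, an agent's equilibrium payoff equals the expected payoff of her action. Fix a sequence of games indexed by $n$ and a Nash equilibrium $\nu^{(n)}=\nu_1\times\cdots\times\nu_n$, with $p_i=\nu_i(1)$. The plan is to show that $\max_i |u_i(\nu^{(n)})|\to 0$. First, every agent's payoff is nonnegative, since she can always choose $0$. Agents with $p_i=0$ earn exactly $0$. So it suffices to show that any agent who plays $1$ with positive probability earns $o(1)$.

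Let $X_{-i}=|\{j\ne i\colon a_j=1\}|$, a sum of independent Bernoulli$(p_j)$ variables. If $p_i>0$, agent $i$'s payoff is $\E f\big((X_{-i}+1)/n\big)\ge 0$, with equality if she mixes. The key tool is concentration. By Chebyshev, $\mathrm{Var}(X_{-i}/n)\le 1/(4n)$, and $X_{-i}$ differs from $X_{-j}$ by at most $1$. Hence all fractions $(X_{-i}+1)/n$ concentrate within $O(n^{-1/3})$ of the common value $q_n=\sum_j p_j/n$, with probability $1-O(n^{-1/3})$. Since $f$ is uniformly continuous on $[0,1]$ and bounded, we get $\E f((X_{-i}+1)/n)=f(q_n)+o(1)$ uniformly in $i$. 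Therefore every agent's payoff, and every agent's \emph{deviation} payoff from playing $1$, equals $\max(f(q_n),0)+o(1)$ or $f(q_n)+o(1)$, respectively.

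Now split into cases according to the equilibrium conditions. If some agent plays $1$ with positive probability, then $f(q_n)\ge -o(1)$, and that agent's payoff is $f(q_n)+o(1)$. If some agent plays $0$ with positive probability, then $f(q_n)\le o(1)$, so every agent's payoff, being either $0$ or $f(q_n)+o(1)$, is $o(1)$. The remaining case is that all agents play $1$ purely, so $q_n=1$ and the payoff is $f(1)<0$. This is impossible in equilibrium for large $n$, because deviating to $0$ gives $0$. Hence some agent puts weight on $0$ for all large $n$, and the previous case gives payoffs $\in[0,o(1)]$ for everyone.

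The main obstacle is the uniformity step: making sure the $o(1)$ error is uniform over agents and over all equilibria, not just along one sequence. I would handle it with an explicit modulus-of-continuity bound. Let $\omega$ be the modulus of continuity of $f$ and $\|f\|_\infty$ its sup norm. Chebyshev with threshold $\delta$ gives
\[
|\E f((X_{-i}+1)/n)-f(q_n)|\le \omega(\delta+2/n)+2\|f\|_\infty/(4n\delta^2).
\]
Choosing $\delta=n^{-1/3}$ yields a bound that depends only on $n$. The hypothesis that $f$ takes positive values is not needed for this direction.
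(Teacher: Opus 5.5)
Your argument is correct, but it takes a different route from the paper.

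\textbf{The paper's route.} The paper uses no concentration at all. Since $f(1)<0$, some agent $j$ puts positive weight on action $0$. This holds for every $n$, not only for large $n$. Hence $\E f\bigl((1+\sum_{k\ne j}a_k)/n\bigr)\le 0$. Now take an agent $i$ who plays $1$ surely. Because $a_i=1$, the fractions $(1+\sum_{k\ne i}a_k)/n$ and $(1+\sum_{k\ne j}a_k)/n$ differ pointwise by exactly $(1-a_j)/n\le 1/n$. So $i$'s payoff lies in $[0,\omega(1/n)]$, where $\omega(1/n)$ is the paper's $\delta_n$. Mixers and agents playing $0$ earn exactly $0$.

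\textbf{Comparison.}
\begin{itemize}
\item Rate: the paper's coupling gives payoffs of at most $L/n$ for $L$-Lipschitz $f$, which is the rate claimed in the main text. Your Chebyshev bound $\omega(n^{-1/3}+2/n)+\|f\|_\infty/(2n^{1/3})$ only yields $O(n^{-1/3})$.
\item Assumptions: your variance bound relies on independence across agents. The paper's comparison is a deterministic inequality that only uses $a_i=1$ almost surely.
\item What your approach buys: a uniform mean-field description. Every agent's payoff from action $1$ equals $f(q_n)+o(1)$. So any equilibrium with weight on both actions has $f(q_n)\approx 0$. This formalizes the paper's informal remark that limit equilibria correspond to zeros of $f$.
\end{itemize}

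\textbf{Streamlining.} Your first case, where some agent plays $1$ with positive probability, is not needed. The case where some agent puts weight on $0$ already covers every equilibrium for every $n$.
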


The proposition speaks to \textit{all} equilibria, pure or mixed, symmetric or asymmetric. The proof, contained in Appendix~\ref{app_binary}, also implies that, for Lipschitz continuous $f$ with constant $L$, each agent's equilibrium payoff is between $0$ and~$\frac{L}{n}$ for any finite $n$.
The proof intuition can be explained using Figure~\ref{fig:Nash_binary}. 
 \begin{figure}
 \begin{center}
\begin{tikzpicture}[scale=1.2]
    \draw[->, thick] (-0.5,0) -- (7.5,0);
    \draw[->, thick] (0,-1.0) -- (0,2) node[right] {Payoff};
    \node[below] at (7.0,0) {$1$};
    \node[below] at (-0.15,0) {$0$};

    \draw[thick] (7.0,-0.05) -- (7.0,0.05);
    
    \draw[thick] plot[smooth] coordinates {
        (0,1.5)
        (1,0) 
        (2,-1) 
        (3,0.5)
        (3.5,0)
        (4,-0.5) 
        (5,0)
        (6,0.5)
        (6.5,0)
        (7.0,-1)
    };
    
    \foreach \x/\y in {1/0, 3.5/0, 6.5/0} {
        \filldraw[black] (\x,\y) circle (0.1);
    };
    
    \begin{scope}
        \clip (0,0) -- (0,2.5) -- 
            plot[smooth] coordinates {
                (0,1.5) (1,0)
                (2,-1)(3,0.5) (3.5,0) (4,-0.5)
                (5,0) (6,0.5) (6.5,0)} --
            cycle;
        \fill[pattern=north east lines, pattern color=black, opacity=0.7] 
            (-0.5,0) rectangle (7.5,3);
    \end{scope}

    \node[right] at (7.5,0) {Fraction};
    \node[right] at (7.5,-0.4) {choosing 1};
\end{tikzpicture}

\end{center}
     \caption{{Payoff from action $1$ as a function of the fraction of agents playing this action. Highlighted zeros correspond to stable Nash equilibria, others to unstable.}}
     \label{fig:Nash_binary}
 \end{figure}
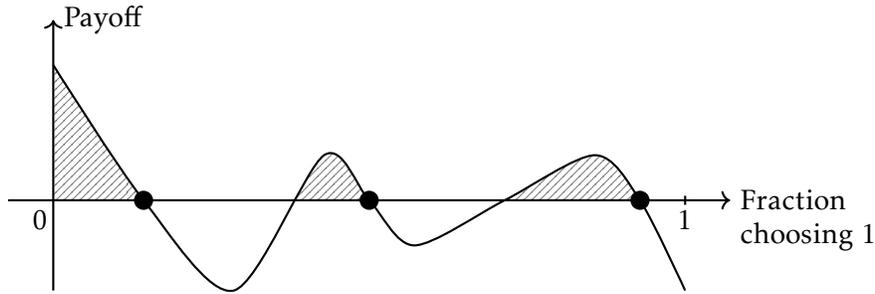
For profiles of actions falling into the shaded areas, agents choosing the action $0$ would have an incentive to switch to action $1$. Similarly, for profiles falling into the blank areas below the axis, where $f<0$, agents choosing $1$ would benefit from switching to action~$0$. Accordingly, limit Nash equilibria correspond to zeros of $f$. The highlighted downward intersection points reflect stable equilibria that can be closely matched by pure asymmetric equilibria. The upward intersection points represent unstable mixed equilibria, which cannot be approximated via pure Nash equilibria in the finite game. 

Any symmetric Nash equilibrium in which agents choose action~$1$ with positive probability necessarily involves all agents using mixed strategies. By Corollary~\ref{cor_not_extreme_totally_mixed}, such equilibria are not extreme, and hence, any non-degenerate objective can be improved. As we now show, this conclusion applies to the utilitarian welfare objective. Specifically, we explicitly construct symmetric correlated equilibria that provide a strict improvement over \textit{any} Nash equilibrium, or convex combinations thereof. 

Our approach combines dimension reduction via the de Finetti theorem described in the previous section, with a geometric insight connecting the optimum to the convex hull of an auxiliary set, an idea reminiscent of \cite*{aumann1995repeated} and \cite*{kamenica2011bayesian}. While we focus on utilitarian welfare for illustrative purposes, the approach can be extended to other objectives.

By Corollary \ref{th_de_Finetti_CE}, to identify a beneficial symmetric correlated equilibrium, we can focus on equilibria that are generated via a two-step process. First, the fraction $\theta$ of agents choosing $1$ is realized from some distribution. Second, a fraction $\theta$ of agents, determined uniformly at random, is prescribed the action $1$, while others are prescribed the action $0$. 

The per-capita utilitarian welfare is given by $W=\E[\theta f(\theta)]$, where $\theta$ is the (random) fraction of agents choosing the action $1$. The expectation is taken with respect to the distribution of $\theta$, which we aim to determine. Any symmetric correlated equilibrium satisfies two constraints. First, any agent prescribed the action $1$ should prefer it to action $0$. Being prescribed the action $1$ clearly provides an agent useful information on the realized fraction of agents directed at the same action. Using Bayes' rule, the incentive constraint for an agent prescribed the action $1$ is $\E[\theta f(\theta)]\geq 0$. Similarly, the constraint for an agent prescribed the action $0$ is  $\E[(1-\theta)f(\theta)] \leq 0$.
The first of these two constraints can be rewritten as $W\geq 0$. Since the value of $0$ is attained at Nash equilibria, the first incentive constraint is automatically satisfied at the optimum. Thus, our ultimate design problem is:
\begin{equation*}
\text{ max } W=\E[\theta f(\theta)] \quad \text{over distributions on $[0,1]$ subject to } \quad \E[(1-\theta)f(\theta)] \leq 0. 
\end{equation*}
We denote $IC=\E[(1-\theta)f(\theta)]$. To find the optimal $W$, we describe all the possible joint \emph{feasible} values $(W, IC)\in \R^2$; that is, pairs that arise for some distribution of $\theta$.
Consider the curve $\varphi\colon [0,1]\to \R^2$ given by $\varphi(\theta)=\big(\theta f(\theta),\ (1-\theta)f(\theta)\big)$ and let $\Phi\subset \R^2$ be its image:
\begin{equation*} \Phi=\Big\{\varphi(\theta)\in\R^2\ \colon \  \theta\in [0,1]  \Big\}.\end{equation*} 
Each point of $\Phi$ corresponds to the feasible pair $(W, IC)$ generated by a point-mass distribution at $\theta$. A general distribution of $\theta$ is a convex combination of point-mass distributions. Therefore, the set of feasible $(W,IC)$ is the convex hull of $\Phi$. That is, a point $(W, IC)$ is feasible if and only if
    $(W, IC)\in \conv[\Phi].$
Finding the optimum under the incentive constraint~$IC\leq 0$ reduces to
\begin{equation*} \text{ {max} } W \quad \text{{over}}\quad (W,IC)\in\conv[\Phi]\cap (\R\times \R_-).\end{equation*} 
This problem can be solved geometrically: one plots the curve $\Phi$, computes its convex hull, and intersects the hull with the lower half-plane. The optimal value $W^*$ is thus the rightmost point of the resulting set.

\begin{figure}
    \centering
\begin{tikzpicture}[
    scale=3.5,
    every node/.style={inner sep=1pt},
    main_curve/.style={draw=black, ultra thick, line cap=round, line join=round},
    axis/.style={draw=black, thick, -latex},
    marker/.style={draw=black, very thick}
]

\tikzset{
    declare function={
        f(\t) = (8*\t*(1-\t) - 1);
        x(\t) = \t * f(\t);
        y(\t) = (1-\t) * f(1-\t);
    }
}

\definecolor{fill_color}{RGB}{150, 240, 240}

\foreach \i in {1,...,100} {
    \fill[black!10] 
        ({x(0)}, {y(0)}) --
        ({x(\i/100)}, {y(\i/100)}) -- 
        ({x(1-\i/100)}, {y(1-\i/100)}) --
        ({x(1)}, {y(1)}) -- cycle;
}

\begin{scope}
    \clip (-1.2, -1.2) rectangle (1, 0); 
    \foreach \i in {1,...,100} {
    \fill[pattern=north west lines, pattern color=black!50]
        ({x(0)}, {y(0)}) --
        ({x(\i/100)}, {y(\i/100)}) -- 
        ({x(1-\i/100)}, {y(1-\i/100)}) --
        ({x(1)}, {y(1)}) -- cycle;
}
\end{scope}

\node[anchor=south west] at (0.15, 0.65) {$\conv[\Phi]\cap (\R\times\R_-)$};

\draw[main_curve, variable=\t, domain=0:1, samples=201, smooth, black!90]
    plot ({x(\t)}, {y(\t)});

\draw[axis] (-1.2, 0) -- (0.9, 0) node[below=2pt] {W};
\draw[axis] (0, -1.2) -- (0, 0.9) node[left=2pt] {IC};

\node[fill=black, circle, minimum size=4pt] at (0,0) {};

\coordinate (P_start)  at ({x(1)}, {y(1)});
\coordinate (P_end)    at ({x(0)}, {y(0)});
\coordinate (P_star)   at ({x(0.65)}, {y(0.65)});
\coordinate (P_star_inv)   at ({x(1-0.65)}, {y(1-0.65)});
\coordinate (P_firstbest)   at ({x(0.59685647168)}, {y(0.59685647168)});
\coordinate (P_firstbest_welfare)   at ({x(0.59685647168)}, 0);

\draw[black!90] (P_end) -- (P_star);
\draw[black!90] (P_start) -- (P_star_inv);
\draw[black!90] (P_start) -- (P_end);

\node[fill=black, circle, minimum size=6pt] at (P_start) {};
\node[anchor=north, yshift=-2pt] at ($(P_start)-(0.05,0.02)$) {$\varphi(1)$};

\node[fill=black, circle, minimum size=6pt] at (P_end) {};
\node[anchor=west, xshift=6pt] at (P_end) {$\varphi(0)$};

\node[fill=black, circle, minimum size=6pt] at (P_star) {};
\node[anchor=west, xshift=3pt] at (P_star) {$\varphi(\theta^*)$};

\node[fill=black, circle, minimum size=6pt] at (P_firstbest) {};
\node[anchor=west, xshift=3pt, yshift=2] at (P_firstbest) {first best};

\coordinate (Intersection) at (0.41, 0);
\node[marker, fill=white, circle, minimum size=6pt] at (Intersection) {};
\node (w_label) at ($(Intersection)+(0.13,-0.1)$) {W$^{*}$};
\end{tikzpicture}
    \caption{The set of feasible $(W,IC)$ and the optimal welfare $W^{*}$ for $f(\theta)=8\theta(1-\theta)-1$.   
    }
    \label{fig:optimal_W}
\end{figure}
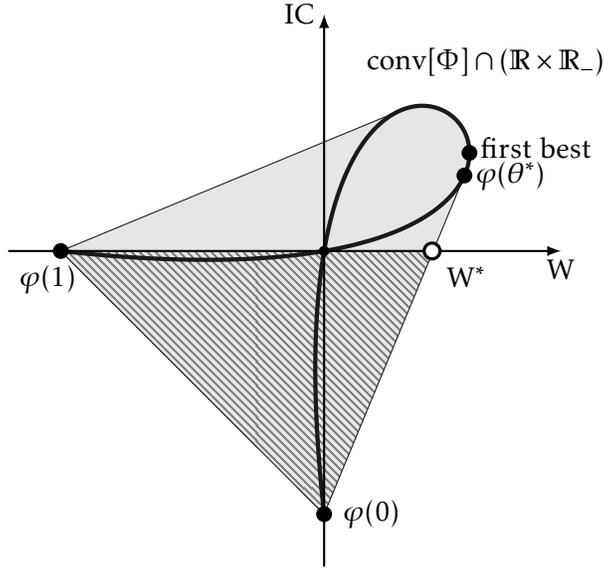

To illustrate this procedure, Figure~\ref{fig:optimal_W} depicts the curve $\Phi$, its convex hull, and the optimal welfare $W^*$ for $f(\theta)=8\theta(1-\theta)-1$. While the function~$f$ is chosen arbitrarily, it reflects settings in which the action $1$ is desirable if the fraction of others choosing~$1$ is not too low nor too high, roughly between $0.15$ and $0.85$. Such a concave~$f$ could emerge if action $1$ is costly with increasing marginal costs and decreasing marginal benefits.

The underlying optimal correlated equilibrium can be read from the figure. Indeed, the optimal welfare $W^*$ belongs to the segment of the convex hull spanned by~$\varphi(0)$ and $\varphi(\theta^*)$, where $\varphi(\theta^*)$ is in the positive quadrant. Since $IC=0$ at $W^*$---meaning that the incentive constraint binds---the weight $p^*$ of the point $\theta^*$ in the convex combination can be determined from the identity $(1-p^*)f(0)+p^*(1-\theta^*)f(\theta^*)=0$. In this example, we get $W^*=\sqrt{2}-1\approx 0.41$, attained at $\theta^*=1-\sqrt{2}/4\approx 0.65$ and $p^*=(4+\sqrt{2})/7\approx 0.77$.  

Intuitively, the correlated equilibrium we construct prescribes either that all agents choose action~$0$ or that a fraction~$\theta^*$ of agents choose action~$1$. An agent who receives the recommendation to choose~$1$ therefore infers that a fraction~$\theta^*$ of the population does the same. By construction, since $f(\theta^*)>0$, it is in her interest to follow this recommendation. An agent recommended to choose~$0$, however, cannot infer the population share of each action with certainty. The value~$\theta^*$ is selected to be high enough that the possibility that others are choosing~$1$, occurring with probability~$p^*$, deters her from deviating. Naturally, the resulting welfare~$W^*$ falls short of the first best (unconstrained) welfare, which maximizes~$\theta f(\theta)$. Indeed, $\theta^* \approx 0.65$ exceeds the value of $\theta$ corresponding to the first best: $\arg\max_{\theta} \theta f(\theta)=\frac{4 + \sqrt{10}}{12} \approx 0.60$.

The construction illustrated in the example applies more generally. Identical arguments generate the following corollary.

\begin{corollary}
    Let $f$ be any continuous function satisfying $f(0),f(1)<0$ and having precisely two zeros in the interval $[0,1]$. Then, an optimal correlated equilibrium for the induced game places probability $p^*$ on $\theta^*$ and $1-p^*$ on $0$, where $p^*$ and $\theta^*$ solve the following problem:
    \begin{equation*}  \max \ p \theta f(\theta)\quad \text{\emph{over \ \  $(\theta,p)\ \in \ [0,1]^2$ \ \ subject to }} \quad (1-p)f(0)+p(1-\theta)f(\theta)=0.\end{equation*} 
    Furthermore, if $f$ is differentiable, $\theta^*$ is strictly above the first best, $\arg\max\, \theta f(\theta)$.
\end{corollary}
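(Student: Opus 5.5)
The plan is to repeat the geometric argument used for $f(\theta)=8\theta(1-\theta)-1$. Let the two zeros of $f$ be $z_1<z_2$. Since $f$ is continuous with $f(0),f(1)<0$ and has exactly these zeros, $f>0$ on $(z_1,z_2)$ and $f<0$ on $[0,z_1)\cup(z_2,1]$. Hence $\varphi(\theta)=(\theta f(\theta),(1-\theta)f(\theta))$ lies in the open positive quadrant for $\theta\in(z_1,z_2)$ and in the closed negative quadrant otherwise. By Corollary~\ref{th_de_Finetti_CE} and the reduction above, the problem is to maximize $W$ over $\conv[\Phi]\cap(\R\times\R_-)$.

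\textbf{Step 1: reduce to two-point distributions.} This is a linear program over distributions of $\theta$ with one inequality constraint. An optimum exists because $\Phi$ is compact. By the usual extreme-point/Carathéodory argument (as in \citet{winkler1988extreme}), some optimal distribution is supported on at most two points.

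The constraint must bind. If it were slack, the optimum would be a point mass at some $\theta$, which is either the first best $\arg\max\theta f(\theta)$ or some $\theta$ with $f(\theta)\le 0$. The first best lies in $(z_1,z_2)$, where $IC=(1-\theta)f(\theta)>0$, so it is infeasible. A point mass with $f(\theta)\le0$ gives $W\le 0$. This is strictly worse than mixing $\varphi(0)$ with any positive-quadrant point, which yields $W>0$ on the axis $IC=0$. So the optimum is a binding mixture of $P=\varphi(\theta)$ with $\theta\in(z_1,z_2)$ and $Q=\varphi(\theta')$ with $f(\theta')<0$.

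\textbf{Step 2: show the negative point should be $\theta'=0$.} Write $P=(a,b)$ with $a,b>0$ and $Q=(q_1,q_2)$ with $q_1\le 0$, $q_2<0$. The $W$-intercept of the segment $PQ$ with the axis $IC=0$ is
\[
a-b\,\frac{a-q_1}{b-q_2}=a-b\,\frac{a+\theta'|f(\theta')|}{b+(1-\theta')|f(\theta')|}.
\]
So the task is to show that $\theta'=0$ minimizes the ratio $\frac{a+\theta'|f(\theta')|}{b+(1-\theta')|f(\theta')|}$ over $\theta'$ with $f(\theta')<0$. For $\theta'\ge z_2$ I would argue that the numerator is inflated by $\theta'|f(\theta')|$ while the denominator gains little. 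For $\theta'\in[0,z_1)$, the point $\theta'=0$ is the only one contributing nothing to the numerator.

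I expect this to be the main obstacle, and I doubt it follows from the stated hypotheses alone. If $|f|$ is much larger at a small $\theta'>0$ than at $0$, the ratio can be smaller there than at $\theta'=0$. In that case the supporting line of $\conv[\Phi]$ through the optimal positive point touches $\Phi$ at $\varphi(\theta')$ rather than at $\varphi(0)$. I would first try to show that the lower-left supporting segment of $\conv[\Phi]$ ends at $\varphi(0)$. If that fails, I would add a mild monotonicity condition on $f$ near $0$, namely that $|f|$ is maximized at $0$ on $[0,z_1]$ relative to the slope condition above. Alternatively, I would state the corollary with $0$ replaced by the minimizing $\theta'$.

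\textbf{Step 3: the optimization and the comparison with the first best.} With $\theta'=0$ fixed, the binding constraint is exactly $(1-p)f(0)+p(1-\theta)f(\theta)=0$ and the objective is $p\,\theta f(\theta)$, which is the stated problem. For the last claim, solve the constraint for $p(\theta)=\frac{-f(0)}{(1-\theta)f(\theta)-f(0)}$, and maximize $\log\big(p(\theta)\theta f(\theta)\big)$ over $\theta\in(z_1,z_2)$. Differentiating $\log p(\theta)$ gives
\[
\frac{d}{d\theta}\log p(\theta)=-\frac{\big((1-\theta)f(\theta)\big)'}{(1-\theta)f(\theta)-f(0)}.
\]
At the first best $\hat\theta$ we have $(\theta f)'(\hat\theta)=0$, hence $f'(\hat\theta)=-f(\hat\theta)/\hat\theta<0$. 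Then $\big((1-\theta)f\big)'=f'-(\theta f)'=f'<0$ at $\hat\theta$, so $\log p$ is strictly increasing there. The derivative of the total objective at $\hat\theta$ is therefore strictly positive, and by continuity of $f'$ the maximizer satisfies $\theta^*>\hat\theta$.
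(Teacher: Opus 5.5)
Your route is the paper's: its only justification is that the convex-hull picture for $f(\theta)=8\theta(1-\theta)-1$ carries over (``identical arguments''). Your doubt about Step~2 is correct, and that step cannot be proved, because the corollary is false as stated. In the paper's example, $f$ is increasing on $[0,z_1]$, so $\varphi(0)$ is the lowest point of $\Phi$ and the relevant edge of $\conv[\Phi]$ ends there. Nothing in the hypotheses forces this.

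For a counterexample, fix a small $\eta>0$ and let $f$ be piecewise linear through $(0,-\eta)$, $(0.1,-10)$, $(0.3,0)$, $(0.5,1)$, $(0.7,0)$, $(1,-1)$. Then $f$ is continuous, $f(0),f(1)<0$, and its only zeros are $0.3$ and $0.7$. Consider the stated problem. A $\theta$ with $f(\theta)\le 0$ forces $p=1$ and a value $\le 0$. For $f(\theta)>0$, the constraint gives $p=\eta/(\eta+(1-\theta)f(\theta))$ and value $\frac{\eta\theta f(\theta)}{\eta+(1-\theta)f(\theta)}<\frac{\eta\theta}{1-\theta}<\frac{7\eta}{3}$. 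But $\frac1{19}\delta_{0.1}+\frac{18}{19}\delta_{0.5}$ satisfies $\E[(1-\theta)f(\theta)]=\frac{-9+9}{19}=0$ and has $W=\frac{-1+9}{19}=\frac8{19}$. So for $\eta=0.01$ no equilibrium of the claimed form is optimal.

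Only $f(0)$, $f(0.1)$, $f(0.5)$ and the sign pattern were used, so a smooth $f$ with these features also works; differentiability does not help. This is exactly the scenario you anticipated: $\varphi(0.1)=(-1,-9)$ lies far below $\varphi(0)=(0,-\eta)$ on a nearly vertical ray.

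Your proposed repair is the right one. Assume $(1-\theta')f(\theta')\ge f(0)$ for all $\theta'\in[0,z_1]$; this is implied by $|f|$ being maximal at $0$ there, as in the example.
\begin{itemize}
\item For $\theta'<z_1$, your ratio $R(\theta')$ has numerator $a+\theta'|f(\theta')|\ge a$ and denominator $b+(1-\theta')|f(\theta')|\le b+|f(0)|$. Hence $R(\theta')\ge R(0)$.
\item For $\theta'>\theta$ (in particular $\theta'\ge z_2$), $R(\theta')>a/b$, because this reduces to $b\theta'>a(1-\theta')$, i.e.\ $\theta'>\theta$. The intercept $a-bR(\theta')$ is then negative.
\end{itemize}
Without such a hypothesis, the correct conclusion is your alternative: a binding mixture of an optimally chosen $\theta'\in[0,z_1)$ and some $\theta\in(z_1,z_2)$.

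Two smaller fixes.
\begin{itemize}
\item $f>0$ on $(z_1,z_2)$ uses the section's standing assumption that $f$ takes positive values. Otherwise $f$ could touch $0$ from below.
\item In Step~3, a positive derivative at $\hat\theta$ only rules out $\theta^*=\hat\theta$, not $\theta^*<\hat\theta$. Continuity of $f'$ is neither assumed nor needed. Write the value as $|f(0)|\big/\big(\tfrac1\theta-1+\tfrac{|f(0)|}{\theta f(\theta)}\big)$. For $\theta<\hat\theta$, both terms of the denominator are at least their values at $\hat\theta$, the first strictly, so the value is strictly smaller than at $\hat\theta$. Combined with your derivative computation, this gives $\theta^*>\hat\theta$.
\end{itemize}
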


For functions $f$ that change sign more than twice, the solution may instead involve randomization between two interior points $\theta^*_1$ and $\theta^*_2$. These points can be identified using the geometric method described above.\footnote{The sufficiency of a distribution supported on two points follows from Carath\'eodory's theorem, which states that any point in the convex hull of a $k$-dimensional set can be expressed as a convex combination of at most $k+1$ points. The relevant set is a (one-dimensional) boundary of $\conv[\Phi]$.} In our setting, $\conv[\Phi]$ is two-dimensional because, under the utilitarian welfare objective, the incentive constraint $\E[\theta f(\theta)] \ge 0$ is automatically satisfied at the optimum. For alternative objectives, however, both incentive constraints may bind. In that case, an additional coordinate,  $IC' = \E[\theta f(\theta)]$, must be introduced, yielding a three-dimensional set $\Phi \subset \R^3$. Consequently, the optimal solution may require randomization over three values of~$\theta$.

For finite and possibly small $n$, the fraction $\theta$ of agents taking any action is constrained to be a rational number with denominator $n$. With this modification, the same approach applies in this setting. In particular, the correlated equilibrium we constructed above can be approximated via a sequence of correlated equilibria in games played in finite populations of size $n$. 

More generally, the approach extends naturally to settings with more than two actions. For a symmetric game with $m$ actions per agent, one introduces a coordinate for each of the $\ell\leq m(m-1)$ incentive constraints that may bind at the optimum, as well as a coordinate for the objective value. This yields a set $\Phi\subset \R^{1+\ell}$, where each point represents a feasible tuple arising from a deterministic assignment of actions. The optimal correlated equilibrium can be identified geometrically: compute $\conv[\Phi]$, intersect it with the halfspaces $IC_k\leq 0$ for $k=1,\ldots,\ell$ corresponding to the incentive constraints, and identify the point with the maximal objective value $W$ in the resulting region. 

\section{Discussion}

In our analysis, we focused on the improvability of Nash equilibria via correlation. While the previous literature identified various settings where Nash equilibria are extreme, our paper highlights a general link between randomness in Nash equilibria and their improvability, regardless of the designer's objective. Our analysis is detail-free in that the characterization of improvable Nash equilibria does not depend on specific features of the game or the designer’s objective.

It would be interesting to extend the analysis to games with a continuum of actions. While our focus is on finite games, our arguments suggest that as agents randomize over richer action sets, equilibria lie on higher-dimensional faces of the correlated equilibrium polytope, opening additional directions for improvement. This points to potentially greater scope for improvability in continuous-action settings. However, even the existence of Nash equilibrium in these settings is not guaranteed. Therefore, such extensions must be tailored to specific classes of games and inevitably would no longer be fully detail-free.

There are several other natural directions for extending our analysis: focusing on particular classes of games or objectives and assessing the gains from correlation, incorporating incomplete information, and considering alternative equilibrium notions.\footnote{For example, the follow-up paper by \cite{sandomirskiy2026delegation} considers coarse correlated equilibria (CCE), where agents commit ex ante to follow a mediator’s recommendation. By definition, $\ce \subseteq \mathrm{CCE}$, implying that Nash equilibria are, if anything, even less likely to be extreme.} We hope the tools we develop prove useful for such future investigations.

\newpage

\bibliographystyle{ecta}	
\bibliography{main}

@article{winkler1988extreme,
  title={Extreme points of moment sets},
  author={Winkler, Gerhard},
  journal={Mathematics of Operations Research},
  volume={13},
  number={4},
  pages={581--587},
  year={1988},
  publisher={INFORMS}
}

@article{kurz1994game,
  title={Game theory and public economics},
  author={Kurz, Mordecai},
  journal={Handbook of Game Theory with Economic Applications},
  volume={2},
  pages={1153--1192},
  year={1994},
  publisher={Elsevier}
}

@book{aumann1995repeated,
  title={Repeated games with incomplete information},
  author={Aumann, Robert J and Maschler, Michael and Stearns, Richard E},
  year={1995},
  publisher={MIT press}
}

@article{rosenthal1974correlated,
  title={Correlated equilibria in some classes of two-person games},
  author={Rosenthal, Robert W},
  journal={International Journal of Game Theory},
  volume={3},
  number={3},
  pages={119--128},
  year={1974},
  publisher={Springer}
}

@article{chin1974structure,
  title={Structure of equilibria in N-person non-cooperative games},
  author={Chin, HH and Parthasarathy, T and Raghavan, TES},
  journal={International Journal of Game Theory},
  volume={3},
  number={1},
  pages={1--19},
  year={1974},
  publisher={Springer}
}

@article{kamenica2011bayesian,
  title={Bayesian persuasion},
  author={Kamenica, Emir and Gentzkow, Matthew},
  journal={American Economic Review},
  volume={101},
  number={6},
  pages={2590--2615},
  year={2011},
  publisher={American Economic Association}
}

@article{chwe1999structure,
  title={Structure and strategy in collective action},
  author={Chwe, Michael Suk-Young},
  journal={American journal of sociology},
  volume={105},
  number={1},
  pages={128--156},
  year={1999},
  publisher={The University of Chicago Press}
}

@article{rosenthal1973class,
  title={A class of games possessing pure-strategy Nash equilibria},
  author={Rosenthal, Robert W},
  journal={International Journal of Game Theory},
  volume={2},
  pages={65--67},
  year={1973},
  publisher={Physica-Verlag}
}

@inproceedings{kolumbus2022auctions,
  title={Auctions between regret-minimizing agents},
  author={Kolumbus, Yoav and Nisan, Noam},
  booktitle={Proceedings of the ACM Web Conference 2022},
  pages={100--111},
  year={2022}
}

@article{aumann1987correlated,
  title={Correlated equilibrium as an expression of Bayesian rationality},
  author={Aumann, Robert J},
  journal={Econometrica: Journal of the Econometric Society},
  pages={1--18},
  year={1987},
  publisher={JSTOR}
}

@book{guide2006infinite,
  title={Infinite dimensional analysis: A hitchhiker's guide},
  author={Border, Kim},
  year={2006},
  publisher={Springer}
}

@article{gerardi2007deliberative,
  title={Deliberative voting},
  author={Gerardi, Dino and Yariv, Leeat},
  journal={Journal of Economic theory},
  volume={134},
  number={1},
  pages={317--338},
  year={2007},
  publisher={Elsevier}
}

@article{harsanyi1973oddness,
  title={Oddness of the number of equilibrium points: a new proof},
  author={Harsanyi, John C},
  journal={International Journal of Game Theory},
  volume={2},
  pages={235--250},
  year={1973},
  publisher={Springer}
}

@article{neyman1997correlated,
  title={Correlated equilibrium and potential games},
  author={Neyman, Abraham},
  journal={International Journal of Game Theory},
  volume={26},
  pages={223--227},
  year={1997},
  publisher={Springer}
}

@article{nau2004geometry,
  title={On the geometry of Nash equilibria and correlated equilibria},
  author={Nau, Robert and Canovas, Sabrina Gomez and Hansen, Pierre},
  journal={International Journal of Game Theory},
  volume={32},
  pages={443--453},
  year={2004},
  publisher={Springer}
}

@article{viossat2008having,
  title={Is having a unique equilibrium robust?},
  author={Viossat, Yannick},
  journal={Journal of Mathematical Economics},
  volume={44},
  number={11},
  pages={1152--1160},
  year={2008},
  publisher={Elsevier}
}

@inproceedings{feldman2016correlated,
  title={Correlated and coarse equilibria of single-item auctions},
  author={Feldman, Michal and Lucier, Brendan and Nisan, Noam},
  booktitle={Web and Internet Economics: 12th International Conference, WINE 2016, Montreal, Canada, December 11-14, 2016, Proceedings 12},
  pages={131--144},
  year={2016},
  organization={Springer}
}

@article{jann2015correlated,
  title={Correlated equilibria in homogeneous good Bertrand competition},
  author={Jann, Ole and Schottm{\"u}ller, Christoph},
  journal={Journal of Mathematical Economics},
  volume={57},
  pages={31--37},
  year={2015},
  publisher={Elsevier}
}

@inproceedings{wu2008correlated,
  title={Correlated equilibrium of {B}ertrand competition},
  author={Wu, John},
  booktitle={Internet and Network Economics: 4th International Workshop, WINE 2008, Shanghai, China, December 17-20, 2008. Proceedings 4},
  pages={166--177},
  year={2008},
  organization={Springer}
}

@article{lehrer2011equilibrium,
  title={Equilibrium payoffs of finite games},
  author={Lehrer, Ehud and Solan, Eilon and Viossat, Yannick},
  journal={Journal of Mathematical Economics},
  volume={47},
  number={1},
  pages={48--53},
  year={2011},
  publisher={Elsevier}
}

@article{einy2022strong,
  title={Strong robustness to incomplete information and the uniqueness of a correlated equilibrium},
  author={Einy, Ezra and Haimanko, Ori and Lagziel, David},
  journal={Economic Theory},
  volume={73},
  number={1},
  pages={91--119},
  year={2022},
  publisher={Springer}
}

@article{pavlov2023correlated,
  title={Correlated equilibria and communication equilibria in all-pay auctions},
  author={Pavlov, Gregory},
  journal={Review of Economic Design},
  pages={1--33},
  year={2023},
  publisher={Springer}
}

@article{ashlagi2008value,
  title={On the value of correlation},
  author={Ashlagi, Itai and Monderer, Dov and Tennenholtz, Moshe},
  journal={Journal of Artificial Intelligence Research},
  volume={33},
  pages={575--613},
  year={2008}
}

@book{vives1999oligopoly,
  title={Oligopoly Pricing: Old Ideas and New Tools},
  author={Vives, X},
  year={1999},
  publisher={MIT Press}
}

@article{palfrey1984participation,
  title={Participation and the provision of discrete public goods: a strategic analysis},
  author={Palfrey, Thomas R and Rosenthal, Howard},
  journal={Journal of public Economics},
  volume={24},
  number={2},
  pages={171--193},
  year={1984},
  publisher={Elsevier}
}

@article{diekmann1985volunteer,
  title={Volunteer's dilemma},
  author={Diekmann, Andreas},
  journal={Journal of conflict resolution},
  volume={29},
  number={4},
  pages={605--610},
  year={1985},
  publisher={Sage Publications 275 South Beverly Drive, Beverly Hills, California 90212}
}

@article{roberson2006colonel,
  title={The {C}olonel {B}lotto game},
  author={Roberson, Brian},
  journal={Economic Theory},
  volume={29},
  number={1},
  pages={1--24},
  year={2006},
  publisher={Springer}
}

@article{papadimitriou2008computing,
  title={Computing correlated equilibria in multi-player games},
  author={Papadimitriou, Christos H and Roughgarden, Tim},
  journal={Journal of the ACM (JACM)},
  volume={55},
  number={3},
  pages={1--29},
  year={2008},
  publisher={ACM New York, NY, USA}
}

@article{aumann1974subjectivity,
  title={Subjectivity and correlation in randomized strategies},
  author={Aumann, Robert J},
  journal={Journal of mathematical Economics},
  volume={1},
  number={1},
  pages={67--96},
  year={1974},
  publisher={Elsevier}
}

@article{peeters1999structure,
  title={On the structure of the set of correlated equilibria in two-by-two bimatrix games},
  author={Peeters, RJAP and Potters, JAM},
  year={1999}
}

@article{dasgupta1986existence,
  title={The existence of equilibrium in discontinuous economic games, I: Theory},
  author={Dasgupta, Partha and Maskin, Eric},
  journal={The Review of economic studies},
  volume={53},
  number={1},
  pages={1--26},
  year={1986},
  publisher={Wiley-Blackwell}
}

@article{hendrickx2002relation,
  title={A relation between Nash equilibria and correlated equilibria},
  author={Hendrickx, Ruud and Peeters, Ronald and Potters, Jos},
  journal={International game theory review},
  volume={4},
  number={04},
  pages={405--413},
  year={2002},
  publisher={World Scientific}
}

@article{ahunbay2024uniqueness,
  title={On the Uniqueness of Bayesian Coarse Correlated Equilibria in Standard First-Price and All-Pay Auctions},
  author={Ahunbay, Mete {\c{S}}eref and Bichler, Martin},
  journal={arXiv preprint arXiv:2401.01185},
  year={2024}
}

@article{lopomo2011bidder,
  title={Bidder collusion at first-price auctions},
  author={Lopomo, Giuseppe and Marx, Leslie M and Sun, Peng},
  journal={Review of Economic Design},
  volume={15},
  number={3},
  pages={177--211},
  year={2011},
  publisher={Springer}
}

@article{kreps1981,
  title={Finite N-person non-cooperative games with unique equilibrium points},
  author={Kreps, Victoria},
  journal={International Journal of Game Theory },
  volume={10},
  pages={125--129},
  year={1981},
  publisher={Springer}
}

@article{he2021private,
  title={Private private information},
  author={He, Kevin and Sandomirskiy, Fedor and Tamuz, Omer},
    journal={Journal of Political Economy},
  volume={forthcoming},
  number={},
  pages={},
  year={2025},
  publisher={The University of Chicago Press Chicago, IL}
}

@techreport{moulin2014coarse,
  title={Coarse correlated equilibria in an abatement game},
  author={Moulin, Herve and Ray, Indrajit and Gupta, Sonali Sen},
  year={2014},
  institution={Cardiff Economics Working Papers}
}

@article{gerard1978correlation,
  title={Correlation and duopoly},
  author={G{\'e}rard-Varet, Louis-Andr{\'e} and Moulin, Herv{\'e}},
  journal={Journal of economic theory},
  volume={19},
  number={1},
  pages={123--149},
  year={1978},
  publisher={Elsevier}
}

@article{moulin1978strategically,
  title={Strategically zero-sum games: the class of games whose completely mixed equilibria cannot be improved upon},
  author={Moulin, Herv{\'e} and Vial, J -P},
  journal={International Journal of Game Theory},
  volume={7},
  pages={201--221},
  year={1978},
  publisher={Springer}
}

@article{diaconis1980finite,
  title={Finite exchangeable sequences},
  author={Diaconis, Persi and Freedman, David},
  journal={The Annals of Probability},
  pages={745--764},
  year={1980},
  publisher={JSTOR}
}

@article{mckelvey1997maximal,
  title={The maximal number of regular totally mixed Nash equilibria},
  author={McKelvey, Richard D and McLennan, Andrew},
  journal={Journal of Economic Theory},
  volume={72},
  number={2},
  pages={411--425},
  year={1997},
  publisher={Elsevier}
}

@article{wilson1971computing,
  title={Computing equilibria of n-person games},
  author={Wilson, Robert},
  journal={SIAM Journal on Applied Mathematics},
  volume={21},
  number={1},
  pages={80--87},
  year={1971},
  publisher={SIAM}
}

@article{govindan2003short,
  title={A short proof of Harsanyi's purification theorem},
  author={Govindan, Srihari and Reny, Philip J and Robson, Arthur J},
  journal={Games and Economic Behavior},
  volume={45},
  number={2},
  pages={369--374},
  year={2003},
  publisher={Citeseer}
}

@article{harsanyi1973games,
  title={Games with randomly disturbed payoffs: A new rationale for mixed-strategy equilibrium points},
  author={Harsanyi, John C},
  journal={International journal of game theory},
  volume={2},
  number={1},
  pages={1--23},
  year={1973},
  publisher={Springer}
}

@book{van1991stability,
  title={Stability and perfection of Nash equilibria},
  author={Van Damme, Eric},
  volume={339},
  year={1991},
  publisher={Springer}
}

@article{evangelista1996note,
  title={A note on correlated equilibrium},
  author={Evangelista, Fe S and Raghavan, TES},
  journal={International Journal of Game Theory},
  volume={25},
  pages={35--41},
  year={1996},
  publisher={Springer}
}

@article{nash1950non,
  title={Non-cooperative games},
  author={Nash, John F},
  year={1950},
  publisher={Princeton University}
}

@article{calvo2006,
  title={The set of correlated equilibria of 2x2 games},
  author={Calvó-Armengol, Antoni 
},
  journal={mimeo},
  year={2006},
}

@article{liu1996correlated,
  title={Correlated equilibrium of Cournot oligopoly competition},
  author={Liu, Luchuan},
  journal={Journal of Economic Theory},
  volume={68},
  number={2},
  pages={544--548},
  year={1996},
  publisher={Elsevier}
}

@article{agranov2018collusion,
  title={Collusion through communication in auctions},
  author={Agranov, Marina and Yariv, Leeat},
  journal={Games and Economic Behavior},
  volume={107},
  pages={93--108},
  year={2018},
  publisher={Elsevier}
}

@article{goeree2011experimental,
  title={An experimental study of collective deliberation},
  author={Goeree, Jacob K and Yariv, Leeat},
  journal={Econometrica},
  volume={79},
  number={3},
  pages={893--921},
  year={2011},
  publisher={Wiley Online Library}
}

@article{kleiner2021extreme,
  title={Extreme points and majorization: Economic applications},
  author={Kleiner, Andreas and Moldovanu, Benny and Strack, Philipp},
  journal={Econometrica},
  volume={89},
  number={4},
  pages={1557--1593},
  year={2021},
  publisher={Wiley Online Library}
}

@article{kleiner2024extreme,
  title={The Extreme Points of Fusions},
  author={Kleiner, Andreas and Moldovanu, Benny and Strack, Philipp and Whitmeyer, Mark},
  journal={arXiv preprint arXiv:2409.10779},
  year={2024}
}

@article{siegel2009all,
  title={All-pay contests},
  author={Siegel, Ron},
  journal={Econometrica},
  volume={77},
  number={1},
  pages={71--92},
  year={2009},
  publisher={Wiley Online Library}
}

@article{stahl1989oligopolistic,
  title={Oligopolistic pricing with sequential consumer search},
  author={Stahl, Dale O},
  journal={The American Economic Review},
  pages={700--712},
  year={1989},
  publisher={JSTOR}
}

@article{baye1996all,
  title={The all-pay auction with complete information},
  author={Baye, Michael R and Kovenock, Dan and De Vries, Casper G},
  journal={Economic Theory},
  volume={8},
  number={2},
  pages={291--305},
  year={1996},
  publisher={Springer}
}

@book{bochnak2013real,
  title={Real algebraic geometry},
  author={Bochnak, Jacek and Coste, Michel and Roy, Marie-Fran{\c{c}}oise},
  volume={36},
  year={2013},
  publisher={Springer Science \& Business Media}
}

@article{varian1980model,
  title={A model of sales},
  author={Varian, Hal R},
  journal={The American economic review},
  volume={70},
  number={4},
  pages={651--659},
  year={1980},
  publisher={JSTOR}
}

@article{shilony1977mixed,
  title={Mixed pricing in oligopoly},
  author={Shilony, Yuval},
  journal={Journal of Economic Theory},
  volume={14},
  number={2},
  pages={373--388},
  year={1977},
  publisher={Elsevier}
}

@article{adachi2004costly,
  title={Costly participation in voting and equilibrium abstention: a uniqueness result},
  author={Adachi, Takanori},
  journal={Economics Bulletin},
  volume={4},
  number={2},
  pages={1--5},
  year={2004},
  publisher={AccessEcon}
}

@article{augias2025economics,
  title={The economics of convex function intervals},
  author={Augias, Victor and Uhe, Lina},
  journal={arXiv preprint arXiv:2510.20907},
  year={2025}
}

@article{yang2026stochastic,
  title={Stochastic Optimization and Coupling},
  author={Yang, Frank and Yang, Kai Hao},
  journal={arXiv preprint arXiv:2603.11448},
  year={2026}
}

@article{sandomirskiy2026delegation,
  title={Delegation in Strategic Environments and Equilibrium Uniqueness},
  author={Sandomirskiy, Fedor and Wincelberg, Ben},
  journal={arXiv preprint arXiv:2602.21470},
  year={2026}
}

@article{yang2024monotone,
  title={Monotone Function Intervals: Theory and Applications},
  author={Yang, Kai Hao and Zentefis, Alexander K},
  journal={American Economic Review},
  volume={114},
  number={8},
  pages={2239--2270},
  year={2024},
  publisher={American Economic Association 2014 Broadway, Suite 305, Nashville, TN 37203}
}

@article{arieli2023optimal,
  title={Optimal persuasion via bi-pooling},
  author={Arieli, Itai and Babichenko, Yakov and Smorodinsky, Rann and Yamashita, Takuro},
  journal={Theoretical Economics},
  volume={18},
  number={1},
  pages={15--36},
  year={2023},
  publisher={Wiley Online Library}
}

@article{forges2020correlated,
  title={Correlated equilibria and communication in games},
  author={Forges, Fran{\c{c}}oise},
  journal={Complex Social and Behavioral Systems: Game Theory and Agent-Based Models},
  pages={107--118},
  year={2020},
  publisher={Springer}
}

@article{agranov2014communication,
  title={Communication in multilateral bargaining},
  author={Agranov, Marina and Tergiman, Chloe},
  journal={Journal of Public Economics},
  volume={118},
  pages={75--85},
  year={2014},
  publisher={Elsevier}
}

@article{lahr2024extreme,
  title={Extreme Points in Multi-Dimensional Screening},
  author={Lahr, Patrick and Niemeyer, Axel},
  journal={arXiv preprint arXiv:2412.00649},
  year={2025}
}

@article{kamenica2024commitment,
  title={Commitment and Randomization in Communication},
  author={Kamenica, Emir and Lin, Xiao},
  journal={arXiv preprint arXiv:2410.17503},
  year={2024}
}

@article{milgrom1990rationalizability,
  title={Rationalizability, learning, and equilibrium in games with strategic complementarities},
  author={Milgrom, Paul and Roberts, John},
  journal={Econometrica: Journal of the Econometric Society},
  pages={1255--1277},
  year={1990},
  publisher={JSTOR}
}

@article{kajii1997robustness,
  title={The robustness of equilibria to incomplete information},
  author={Kajii, Atsushi and Morris, Stephen},
  journal={Econometrica: Journal of the Econometric Society},
  pages={1283--1309},
  year={1997},
  publisher={JSTOR}
}

@article{bergemann2015limits,
  title={The limits of price discrimination},
  author={Bergemann, Dirk and Brooks, Benjamin and Morris, Stephen},
  journal={American Economic Review},
  volume={105},
  number={3},
  pages={921--957},
  year={2015},
  publisher={American Economic Association 2014 Broadway, Suite 305, Nashville, TN 37203}
}

@article{forges2024subjectivity,
  title={“Subjectivity and correlation in randomized strategies”: Back to the roots},
  author={Forges, Fran{\c{c}}oise and Ray, Indrajit},
  journal={Journal of Mathematical Economics},
  volume={114},
  pages={103044},
  year={2024},
  publisher={Elsevier}
}

@article{dubey1978finiteness,
 ISSN = {0364765X, 15265471},
 URL = {http://www.jstor.org/stable/3690047},
 author = {Pradeep Dubey},
 journal = {Mathematics of Operations Research},
 number = {1},
 pages = {1--8},
 publisher = {INFORMS},
 title = {Inefficiency of Nash Equilibria},
 urldate = {2025-04-04},
 volume = {11},
 year = {1986}
}

@article{cao2025correlated,
  title={Correlated equilibrium of games with concave potential functions},
  author={Cao, Zhigang and Tan, Zhibin and Zhou, Jinchuan},
  journal={Operations Research Letters},
  volume={60},
  pages={107241},
  year={2025},
  publisher={Elsevier}
}

@article{rivera2018efficiency,
  title={Efficiency of correlation in a bottleneck game},
  author={Rivera, Thomas J and Scarsini, Marco and Tomala, Tristan},
  journal={HEC Paris Research Paper No. ECO/SCD-2018-1289},
  year={2018}
}

@inproceedings{nikzad2022constrained,
  title={Constrained majorization: Applications in mechanism design},
  author={Nikzad, Afshin},
  booktitle={Proceedings of the 23rd ACM Conference on Economics and Computation},
  pages={330--331},
  year={2022}
}

@article{manelli2007multidimensional,
  title={Multidimensional mechanism design: Revenue maximization and the multiple-good monopoly},
  author={Manelli, Alejandro M and Vincent, Daniel R},
  journal={Journal of Economic theory},
  volume={137},
  number={1},
  pages={153--185},
  year={2007},
  publisher={Elsevier}
}

@article{felzenbaum1993packing,
  title={Packing lines in a hypercube},
  author={Felzenbaum, Alexander and Holzman, Ron and Kleitman, Daniel J},
  journal={Discrete mathematics},
  volume={117},
  number={1-3},
  pages={107--112},
  year={1993},
  publisher={Elsevier}
}

@article{fudenberg1999conditional,
  title={Conditional universal consistency},
  author={Fudenberg, Drew and Levine, David K},
  journal={Games and Economic Behavior},
  volume={29},
  number={1-2},
  pages={104--130},
  year={1999},
  publisher={Elsevier}
}

@article{foster1997calibrated,
  title={Calibrated learning and correlated equilibrium},
  author={Foster, Dean P and Vohra, Rakesh V},
  journal={Games and Economic Behavior},
  volume={21},
  number={1-2},
  pages={40--55},
  year={1997},
  publisher={Academic Press}
}

@article{hart2000simple,
  title={A simple adaptive procedure leading to correlated equilibrium},
  author={Hart, Sergiu and Mas-Colell, Andreu},
  journal={Econometrica},
  volume={68},
  number={5},
  pages={1127--1150},
  year={2000},
  publisher={Wiley Online Library}
}

@article{canovas1999nash,
  title={Nash equilibria from the correlated equilibria viewpoint},
  author={Canovas, Sabrina Gomez and Hansen, Pierre and Jaumard, Brigitte},
  journal={International Game Theory Review},
  volume={1},
  number={01},
  pages={33--44},
  year={1999},
  publisher={World Scientific}
}

@article{ui2008correlated,
  title={Correlated equilibrium and concave games},
  author={Ui, Takashi},
  journal={International Journal of Game Theory},
  volume={37},
  pages={1--13},
  year={2008},
  publisher={Springer}
}

@article{georgalos2020nash,
  title={Nash versus coarse correlation},
  author={Georgalos, Konstantinos and Ray, Indrajit and SenGupta, Sonali},
  journal={Experimental Economics},
  volume={23},
  pages={1178--1204},
  year={2020},
  publisher={Springer}
}

@article{friedman2022empirical,
  title={On the empirical relevance of correlated equilibrium},
  author={Friedman, Daniel and Rabanal, Jean Paul and Rud, Olga A and Zhao, Shuchen},
  journal={Journal of Economic Theory},
  volume={205},
  pages={105531},
  year={2022},
  publisher={Elsevier}
}

@article{dokka2023equilibrium,
  title={Equilibrium design in an n-player quadratic game},
  author={Dokka, Trivikram and Moulin, Herv{\'e} and Ray, Indrajit and SenGupta, Sonali},
  journal={Review of economic design},
  volume={27},
  number={2},
  pages={419--438},
  year={2023},
  publisher={Springer}
}

@misc{cripps1995extreme,
  title={Extreme correlated and Nash equilibria in two-person games},
  author={Cripps, M},
  year={1995},
  publisher={working paper, University of Warwick}
}

@article{palfrey1983strategic,
  title={A Strategic Calculus of Voting},
  author={Palfrey, Thomas R and Rosenthal, Howard},
  journal={Public Choice},
  volume={41},
  number={1},
  pages={7--53},
  year={1983},
  publisher={Springer}
}

@article{baye1993rigging,
  title={Rigging the Lobbying Process: An Application of the All-Pay-Auction},
  author={Baye, Michael and Kovenock, Dan and de Vries, Casper},
  journal={The American Economic Review},
  pages={289--294},
  year={1993}
}
	
\pagebreak

\appendix

\section{Extremality in the Space of Action Distributions}
\label{app_th1_proof}

In this appendix we formulate and prove results pertaining to extremality of Nash equilibria in the space of action distributions. Our ultimate goal is to prove the following theorem, which generalizes Theorem~\ref{th_almost_pure}. 

\begin{theorem}\label{th_almost_pure_appendix}
Consider a regular Nash equilibrium supported on 
$S = S_1 \times \ldots \times S_n$ and let 
$k = \bigl|\{i \colon |S_i|\geq 2\}\bigr|$ be the number of agents who randomize their actions. This equilibrium is an extreme point of the correlated equilibrium polytope if and only if $k\leq 2$. For $k\geq 3$, the equilibrium lies in the interior of a face of dimension greater than or equal to 
\begin{equation}\label{eq_dimension}
2^k-2k-1 \ + \ \begin{cases} \prod_{i\colon |S_i|\geq 2} \  (|S_i|-1)-1 , & k\geq 4\\
     0, & k=3
\end{cases}.
\end{equation}
\end{theorem}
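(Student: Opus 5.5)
The plan is to reduce extremality to a dimension count for the set of feasible perturbations of $\nu$. After that, the statement becomes a purely combinatorial inequality, which I would derive from the polygon inequality~\eqref{eq_polygon_intuition}.

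\emph{Step 1: the face carrying $\nu$.} Let $F$ be the minimal face of $\ce(\G)$ containing $\nu$. I will describe $F$ locally as the set of $\mu\in\Delta(A)$ with $\supp(\mu)\subseteq S$ that satisfy, with equality, every incentive constraint~\eqref{eq_CE_definition} that binds at $\nu$.

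By regularity, the constraints for $a_i\in S_i$ and $a_i'\notin S_i$ are strict at $\nu$, so they remain slack under small perturbations. Constraints for $a_i\notin S_i$ are trivial on measures supported on $S$. Constraints with $a_i,a_i'\in S_i$ bind by indifference.

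Hence $\nu$ lies in the relative interior of $F$, and $\dim F$ equals the dimension of the space $T$ of $\tau\in\R^S$ satisfying:
\begin{itemize}
\item $\sum_{a\in S}\tau(a)=0$;
\item for every $i$ and distinct $a_i,a_i'\in S_i$, $\sum_{a_{-i}\in S_{-i}}\tau(a_i,a_{-i})\bigl(u_i(a_i,a_{-i})-u_i(a_i',a_{-i})\bigr)=0$.
\end{itemize}
Pure agents contribute no constraints. Writing $s_i=|S_i|$ for the $k$ mixers, this gives the counting bound~\eqref{eq_dimension_counting}:
\begin{equation*}
\dim F\;\geq\;\prod_i s_i-1-\sum_i s_i(s_i-1).
\end{equation*}
Since $\nu$ is extreme if and only if $\dim F=0$, the "if $k\geq3$ then not extreme" direction follows once this bound is shown to be at least~\eqref{eq_dimension}, which is positive for $k\geq 3$.

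\emph{Step 2: the combinatorial inequality (main obstacle).} Substitute $x_i=s_i-1\geq1$. By Lemma~\ref{lm_polygon}, $x_i\leq\sum_{j\neq i}x_j$ for all $i$. The goal is
\begin{equation*}
\prod_i(1+x_i)-1-\sum_i x_i(x_i+1)\;\geq\;2^k-2k-1+\mathbf 1_{k\geq4}\Bigl(\prod_i x_i-1\Bigr).
\end{equation*}
Sanity checks: for $(2,2,2,2)$ both sides equal $7$, and for $(3,2,2)$ both sides equal $1$.

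Expanding $\prod(1+x_i)=\sum_{R}\prod_{i\in R}x_i$ over subsets $R$ of mixers, the left side becomes
\begin{equation*}
\sum_{|R|\geq2}x^R-\sum_i x_i^2,
\end{equation*}
since the $\sum_i x_i$ terms cancel. The squares $x_i^2$ are what must be paid for, and the polygon inequality is exactly what allows this. By $x_i\leq\sum_{j\ne i}x_j$ we get $x_i^2\leq\sum_{j\neq i}x_ix_j$, so
\begin{equation*}
\sum_i x_i^2\leq 2\sum_{|R|=2}x^R.
\end{equation*}

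For $k=3$ this is too weak on its own, since the pair terms appear only once on the left. I would therefore do a direct case analysis, ordering $x_1\geq x_2\geq x_3$ with $x_1\leq x_2+x_3$, and use the triple term $x_1x_2x_3$ to absorb the deficit. The target $1$ should follow from $x_1x_2x_3\geq x_1^2-x_1x_2-x_1x_3+\ldots$, via monotonicity arguments in $x_1$.

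For $k\geq4$, the sets of size at least $3$ supply enough extra terms. My plan is:
\begin{itemize}
\item keep the top term $x^{[k]}=\prod x_i$ to match $\prod x_i-1$;
\item use the size-$3$ terms together with the once-counted pair terms to dominate $\sum_i x_i^2$;
\item show that the remaining subsets each contribute at least $1$, which yields $2^k-2k-1$.
\end{itemize}
If the bookkeeping fails, I would argue instead by induction on $k$, adding one mixer at a time. Adding $x_{k+1}$ multiplies the product by $1+x_{k+1}$, while the polygon inequality for the smaller family can fail only for the largest $x_i$. That case would be handled by a separate monotonicity check in the largest coordinate.

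\emph{Step 3: the converse for $k\leq2$.} For $k=0$, $\nu$ is a point mass, which is clearly extreme. The case $k=1$ is impossible, because the polygon inequality forces $x_1\leq0$.

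For $k=2$, the polygon inequality gives $s_1=s_2=s$. Any $\mu\in F$ fixes the pure agents' actions, so $F$ is a face of the correlated equilibrium polytope of the two-agent game between the mixers, whose actions are restricted to $S_1\times S_2$. I would then invoke the two-agent extremality results of \cite*{cripps1995extreme}, \cite*{evangelista1996note} and \cite*{canovas1999nash}.

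Alternatively, one can check directly that $T=\{0\}$. Regularity makes the $s-1$ indifference conditions of agent 1 linearly independent as functionals of agent 2's mixed strategy, and symmetrically for agent 2. Then the $2s(s-1)+1$ constraints have full rank $s^2$ on $\R^{S}$. Concretely, $\tau(a_1,\cdot)$ must be orthogonal to an $(s-1)$-dimensional space of difference vectors in $\R^{s}$, so it is a multiple of the direction fixed by indifference; applying the same reasoning to columns forces $\tau$ to be proportional to $\nu$, and zero total mass then gives $\tau=0$.
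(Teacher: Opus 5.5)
Your Steps 1 and 3 match the paper: the tangent-space system of Lemma~\ref{lm_cone}, the polygon inequality, and the argument of Lemma~\ref{lm_extreme_two_mixing} forcing $\tau$ to be a multiple of $\nu$ when exactly two agents mix. The gap is Step 2, which is the real content of the theorem. You only outline it, and the outline for $k\geq 4$ cannot succeed as stated.

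The inequality holds with equality at $x=\mathbf 1$, where every mixer uses two actions, and each monomial satisfies $x^R\geq 1$. After matching the top term, there are $2^k-k-2$ monomials with $2\le |R|\le k-1$, while the target is $2^k-2k-2$. At $\mathbf 1$ the slack is therefore exactly $k$, so whatever pays for $\sum_i x_i^2$ may cost only $k$ units there. Your plan spends entire pair and triple monomials on $\sum_i x_i^2$ and then counts each remaining subset as at least $1$. This falls short by $\binom{k}{2}+\binom{k}{3}-k>0$: for $k=4$ no subsets remain to produce the required $6$, and for $k=5$ only five remain against the required $20$.

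The $k=3$ sketch is also not a proof, and monotonicity in $x_1$ is false there. The derivative $x_2+x_3+x_2x_3-2x_1$ equals $-1$ at $(2,1,1)$. The left side is concave in $x_1$ and attains the bound $1$ at both endpoints $(1,1,1)$ and $(2,1,1)$ of the admissible interval when $x_2=x_3=1$. The induction fallback is unspecified, and, as you note, the polygon inequality is not inherited by subfamilies.

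What is missing is to measure everything relative to $\mathbf 1$. For $k\geq 4$ the paper does this by letting $F$ be the difference of the two sides and noting $F(\mathbf 1)=0$. It then shows that $F$ is non-decreasing along segments from $\mathbf 1$, because $\partial_iF\geq 2\sum_{j\neq i}x_j-2x_i\geq 0$ on the convex region cut out by $x\geq\mathbf 1$ and the polygon inequality. The step $\sum_{j<\ell,\ j,\ell\neq i}x_jx_\ell\geq\sum_{j\neq i}x_j$ is where $k\geq 4$ enters. For $k=3$ the paper writes the difference explicitly as $\sum_i (x_i-1)\bigl(\sum_{j\neq i}x_j-x_i\bigr)+\prod_i(x_i-1)$.

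Your subset expansion can be repaired the same way. The target is equivalent to $\sum_R (x^R-1)\geq\sum_i(x_i^2-1)$, with $R$ ranging over subsets of mixers with $|R|\geq 2$, excluding the full set when $k\geq 4$. Set $y_i=x_i-1\geq 0$; the polygon inequality becomes $y_i\leq (k-2)+\sum_{j\neq i}y_j$. This gives $\sum_i (x_i^2-1)\leq k\sum_i y_i+2\sum_{i<j}y_iy_j$. The pair and triple excesses alone contribute at least $\bigl(k-1+\binom{k-1}{2}\bigr)\sum_i y_i+(k-1)\sum_{i<j}y_iy_j$. Since $k-1+\binom{k-1}{2}\geq k$ and $k-1\geq 2$, this suffices for every $k\geq 3$.
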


\vspace{3mm}

The theorem implies that, when $k \geq 3$ agents are mixing non-trivially, the equilibrium lies in the interior of a face whose dimension grows at least exponentially in the number $k$ of mixers since
\begin{equation*}
    2^{k} - 2k - 1 \ \geq \  2^{k-3} \qquad \text{for any}\quad k\geq 3.
\end{equation*}
Furthermore, the bound given by \eqref{eq_dimension} indicates that this growth rate further increases with the number of actions over which agents randomize. For example, if each of the $k\geq 4$ agents randomizes over $m$ actions, then the dimension of the face is at least $(m-1)^k$.

The dimension of the face containing a Nash equilibrium represents the number of linearly independent perturbations that preserve the incentive constraints; essentially, the number of directions in which potential improvements can occur. We now formalize the notion of such perturbations.

\subsection{The Tangent Space} 
Consider a correlated equilibrium $\mu\in \Delta(A)$. A vector $\tau \in \R^A$ represents a feasible perturbation direction at $\mu$ if, for a sufficiently small $\varepsilon > 0$, the distributions $\mu + \varepsilon \tau$ and $\mu - \varepsilon \tau$ are both correlated equilibria. The set of such vectors $\tau$ forms a linear space, which is the tangent space to the face of the polytope $\ce(\G)$ containing $\mu$.
Let $T_\mu$ be the set of all such directions $\tau$:
\begin{equation}\label{eq_tangent_space_mu}
T_\mu = \big\{\tau\in \R^A\colon \mu\pm\varepsilon \tau \ \ \text{ is a correlated equilibrium for small enough  \ \  $\varepsilon>0$}\big\}.
\end{equation}
This is the  \textbf{tangent space} to the set of correlated equilibria at $\mu$. The dimension of a face of the correlated equilibrium polytope to which $\mu$ belongs is the dimension of the tangent space $T_\mu$. In particular, $\mu$ is an extreme correlated equilibrium if and only if $T_\mu=\{0\}$; that is, the dimension of the tangent space is zero. We are interested in extremality of Nash equilibria and thus aim to understand whether the dimension $\dim T_\nu$ is positive or zero.

A Nash equilibrium $\nu$ supported on $S=S_1\times \ldots \times S_n\subseteq A$  is \textbf{quasi-strict} if no action outside $S_i$ is a best response to $\nu_{-i}$. That is, 
\begin{equation}\label{eq_quasistrict}
    \sum_{a_{-i}} \nu(a_i,a_{-i}) (u_i(a_i,a_{-i}) - u_i(a_i',a_{-i})) > 0, \quad \forall i,  a_i\in S_i,  a_i' \in A_i\setminus S_i.
\end{equation}
The following lemma describes the tangent space~$T_\nu$ to a quasi-strict Nash equilibrium $\nu$ and offers a lower bound on its dimension. 
\begin{lemma}\label{lm_cone}
   The tangent space $T_\nu$ to a quasi-strict Nash equilibrium $\nu$ supported on $S=S_1\times \ldots \times S_n\subseteq A$ consists of all vectors $\tau\in \R^A$ solving the following system:
   \begin{equation}\label{eq_lin_independent_delta}
    \left\{
    \begin{array}{l}
          \sum_{a_{-i}} \tau(a_i,a_{-i}) (u_i(a_i,a_{-i}) - u_i(a_i',a_{-i})) = 0, \quad \forall i, \forall a_i \neq a_i' \in S_i \\
          \tau(a)=0, \quad a\not\in S\\ 
          \sum_{a \in S} \tau(a) = 0
    \end{array}.
    \right.
\end{equation}   
The dimension of the tangent space admits the following lower bound:
\begin{equation}\label{eq_cone_bound}
    \dim T_\nu\geq \prod_i |S_i|-\left(1+\sum_i |S_i|(|S_i|-1)\right).
\end{equation}
\end{lemma}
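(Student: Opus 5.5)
We prove the two inclusions between $T_\nu$ and the solution set $L$ of \eqref{eq_lin_independent_delta}, and then count constraints. Write the polytope $\ce(\G)$ as the set of $\mu\in\R^A$ satisfying $\mu(a)\ge 0$, $\sum_a\mu(a)=1$, and the incentive constraints $g_{i,a_i,a_i'}(\mu)\ge 0$. Here $g_{i,a_i,a_i'}(\mu)=\sum_{a_{-i}}\mu(a_i,a_{-i})\big(u_i(a_i,a_{-i})-u_i(a_i',a_{-i})\big)$ is linear in $\mu$. The general fact we use is standard for polytopes: a direction $\tau$ lies in $T_\mu$ if and only if $\sum_a\tau(a)=0$ and $\tau$ annihilates every constraint that is active (holds with equality) at $\mu$. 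Indeed, $\mu\pm\varepsilon\tau$ must keep every active constraint at least zero, so by linearity $\tau$ must make it exactly zero. Conversely, inactive constraints remain strictly satisfied for small $\varepsilon$ by continuity.

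We then identify the active constraints at the quasi-strict Nash equilibrium $\nu$ with support $S$. First, the nonnegativity constraints $\mu(a)\ge0$ are active exactly for $a\notin S$, which gives $\tau(a)=0$ off $S$. Second, for $a_i\notin S_i$ we have $\nu(a_i,\cdot)=0$, so $g_{i,a_i,a_i'}(\nu)=0$ is active. However, once $\tau$ vanishes off $S$, this constraint involves only coordinates $\tau(a_i,a_{-i})$ with $a\notin S$, so it holds automatically. Third, for $a_i\in S_i$ and $a_i'\notin S_i$, the constraint is strictly positive by \eqref{eq_quasistrict}, hence inactive. Fourth, for distinct $a_i,a_i'\in S_i$, we have $g_{i,a_i,a_i'}(\nu)=\nu_i(a_i)\big(U_i(a_i,\nu_{-i})-U_i(a_i',\nu_{-i})\big)=0$, because both actions are best responses to $\nu_{-i}$. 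These constraints are active and yield the first line of \eqref{eq_lin_independent_delta}.

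Together with the mass condition, this shows that $T_\nu$ equals $L$, which is exactly the system \eqref{eq_lin_independent_delta}. The only care needed is that the active-constraint characterization holds: both points $\nu\pm\varepsilon\tau$ stay in the simplex, since coordinates in $S$ are strictly positive under $\nu$ and coordinates outside $S$ are zero for both $\nu$ and $\tau$.

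For the dimension bound, we regard $L$ as a subspace of $\R^S$, which has dimension $\prod_i|S_i|$, cut out by the linear equations in the first and third lines. There are $\sum_i|S_i|(|S_i|-1)$ ordered pairs $(a_i,a_i')$ of distinct elements of $S_i$, plus one mass constraint. A subspace defined by $r$ linear equations in $\R^d$ has dimension at least $d-r$, which gives \eqref{eq_cone_bound}.

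No step is a real obstacle. The only point requiring attention is the bookkeeping of which constraints are active, in particular the observation that the active constraints for off-support recommended actions $a_i\notin S_i$ are implied by $\tau|_{A\setminus S}=0$.
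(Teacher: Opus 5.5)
Your proof is correct and follows essentially the same route as the paper. Both characterize $T_\nu$ through the constraints active at $\nu$: the binding on-support incentive constraints, $\tau$ vanishing off $S$, and zero total mass, with the off-support-deviation constraints slack by quasi-strictness. Both then bound the dimension by counting equations in $\R^S$. Your explicit observation that the incentive constraints for recommended actions $a_i\notin S_i$ are active but hold automatically once $\tau$ vanishes off $S$ covers a case the paper's proof passes over, so your bookkeeping is slightly more careful.
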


\vspace{2mm}

The system~\eqref{eq_lin_independent_delta} implies that only incentive constraints on the support are relevant; these must bind as equalities, and $\tau$ must vanish outside the support. The bound on the dimension of the tangent space in~\eqref{eq_cone_bound} follows from the fact that the dimension of the space of solutions is at least the dimension of the space minus the number of constraints imposed. The actual dimension may in fact be larger, since we do not yet exclude the possibility that some of these constraints are linearly dependent.
\begin{proof}

Perturbations $\nu + \varepsilon \tau$ and $\nu - \varepsilon \tau$ lead to correlated equilibria for small $\varepsilon$ if and only if a direction $\tau\in \R^A$ satisfies several conditions. First, since $\nu$ is a Nash equilibrium, the incentive constraints are binding on its support:
\begin{equation*} 
\sum_{a_{-i}} \nu(a_i, a_{-i}) u_i(a_i, a_{-i}) = \sum_{a_{-i}} \nu(a_i, a_{-i}) u_i(a_i', a_{-i})\quad \forall i \in N, \forall a_i\ne a_i' \in S_i.
\end{equation*}
Hence, for $\nu \pm \varepsilon \tau$ to remain a CE, the same constraints must continue to hold as equalities resulting in the first equation in~\eqref{eq_lin_independent_delta}.
By quasi-strictness of $\nu$, the incentive constraints outside of the support hold as strict inequalities and thus continue to hold for $\nu \pm \varepsilon \tau$, provided that $\varepsilon$ is small enough.

Second, for $\nu \pm \varepsilon \tau$ to have no negative weights, the support of $\tau$ must be contained within the support of $\nu$; that is, $\tau(a) = 0$ for all $a \notin S$.

Finally, to guarantee that the weights in $\nu \pm \varepsilon \tau$ sum up to one, $\tau$ must satisfy the normalization condition  $\sum_{a \in S} \tau(a) = 0$. We conclude that $\tau$ belongs to $T_\nu$ if and only if it solves the system~\eqref{eq_lin_independent_delta}.

Since $\tau(a)=0$ for $a\in A\setminus S$, we can identify $\tau$ with a vector in $\R^S$ by excluding the components outside of $S$. The total number of constraints imposed on this vector is $\sum_i |S_i|(|S_i|-1)$, with the additional constraint of zero total weight. Thus, the dimension of the space of solutions is at least $|S|$ minus  $\sum_i |S_i|(|S_i|-1)+1$, which yields the bound in~\eqref{eq_cone_bound}. 
\end{proof}

The lower bound~\eqref{eq_cone_bound} does not, by itself, guarantee that $\dim T_\nu>0$ when three or more agents mix. For instance, if $n=3$ and $|S_2|=|S_3|=2$ while $|S_1|=4$, the right hand side of~\eqref{eq_cone_bound} equals $-1$, so the bound is vacuous. However, as we next discuss, such a configuration of support sizes cannot arise at a regular Nash equilibrium. 

\subsection{Regular Equilibria and the Polygon Inequality}~\label{app_polygon}
Consider a Nash equilibrium $\nu$ supported on $S=S_1\times\cdots\times S_n$. Informally, $\nu$ is called regular if all incentive constraints for actions outside of $S$ are slack and the implicit function theorem can be applied on $S$ to conclude that the equilibrium changes smoothly under small perturbations of utility functions. The slackness requirement is captured by quasi-strictness of $\nu$ (see equation~\eqref{eq_quasistrict}). We therefore focus on formalizing the robustness to small perturbations.

To this end, for any agent $i$, we introduce local coordinates on the simplex $\Delta(S_i)$ around $\nu_i$ as follows. Fix a reference action profile $a^*\in S$. Define $\tilde{\nu}_i(\delta_i)$ for $\delta_i\in\R^{S_i\setminus\{a_i^*\}}$ by setting $\tilde{\nu}_i(\delta_i)(a_i)=\nu_i(a_i)+\delta_i(a_i)$ for all $a_i\in S_i\setminus\{a_i^*\}$ and choosing $\tilde{\nu}_i(\delta_i)(a_i^*)$ so that probabilities sum to one. Hence, $\tilde{\nu}_i(\delta_i)$ satisfies $\tilde{\nu}_i(0)=\nu_i$ and belongs to $\Delta(S_i)$ for~$\delta_i$ in a neighborhood of $0$. Viewing $\delta=(\delta_1,\ldots,\delta_n)$ as a vector in $\R^d$ with $d=\sum_{i=1}^n (|S_i|-1)$, we define the \textbf{indifference map} $F_i\colon \R^d\to \R^{S_i\setminus\{a_i^*\}}$ of agent $i$ by
\begin{equation*}
F_i(\delta)
\;=\;
\Big(u_i\big(a_i,\tilde{\nu}_{-i}(\delta_{-i})\big)-u_i\big(a_i^*,\tilde{\nu}_{-i}(\delta_{-i})\big)\Big)_{a_i\in S_i\setminus\{a_i^*\}}.
\end{equation*}
The condition that agent $i$ is indifferent among actions in $S_i$ at $\nu$ is equivalent to $F_i(0)=0$. Let $F=(F_1,\ldots,F_n)$ denote the stacked indifference maps $F\colon \R^d\to\R^d$. The equilibrium indifference conditions at $\nu$ are then equivalent to the single system $F(0)=0$. 

We can now state the formal definition. A Nash equilibrium $\nu$ is \textbf{regular} if it is quasi-strict (equation~\eqref{eq_quasistrict} holds) and the Jacobian matrix $DF(0)$ of $F$ at $\delta=0$ satisfies
\begin{equation}\label{eq_regualr}
    \det DF(0)\neq 0.
\end{equation}

Quasi-strictness ensures that actions outside of $S_i$ remain strictly suboptimal under small payoff perturbations. On the support, the non-singularity of $DF(0)$ implies, via the implicit function theorem, that $\nu$ is locally unique and varies smoothly with sufficiently small perturbations of $u=(u_1,\ldots, u_n)$. Furthermore, the regularity conditions---given by strict inequalities---are maintained by small perturbations of payoffs. This robustness of regular equilibria, and the corresponding fragility of irregular equilibria, underlies the result of \cite{harsanyi1973oddness} that, in generic games, all Nash equilibria are regular and finite in number.

\begin{lemma}[Polygon Inequality]\label{lm_polygon}
If $\nu$ is a regular Nash equilibrium supported on 
$S = S_1 \times \ldots \times S_n$, then 
\begin{equation}\label{eq_polygon_app}
    |S_i|-1\leq \sum_{j\ne i} (|S_j|-1)\qquad \text{for all $i=1,\ldots, n$},
\end{equation}
that is, one can construct a polygon with $n$ sides such that each side length is $|S_i|-1$. 
\end{lemma}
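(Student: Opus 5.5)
The plan is to exploit the block structure of the Jacobian $DF(0)$. The key observation is that agent $i$'s indifference map $F_i(\delta)$ depends only on $\delta_{-i}$, the coordinates of the other agents, and not on $\delta_i$. This holds because $u_i(a_i,\tilde\nu_{-i}(\delta_{-i}))$ is evaluated at a pure action $a_i$ of agent $i$. Order the coordinates of $\R^d$ as $(\delta_1,\ldots,\delta_n)$ and the equations as $(F_1,\ldots,F_n)$. Then $DF(0)$ is a block matrix whose $(i,j)$ block is $\partial F_i/\partial\delta_j$, of size $(|S_i|-1)\times(|S_j|-1)$, and all diagonal blocks $(i,i)$ vanish identically.

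Fix an agent $i$ and suppose, for contradiction, that $|S_i|-1>\sum_{j\ne i}(|S_j|-1)$. Consider the $|S_i|-1$ rows of $DF(0)$ belonging to $F_i$. Their entries are zero in the columns of block $i$, so these rows are supported on the remaining $\sum_{j\ne i}(|S_j|-1)$ columns. These are $|S_i|-1$ vectors in a space of dimension $\sum_{j\neq i}(|S_j|-1)<|S_i|-1$, so they are linearly dependent. Hence $DF(0)$ has linearly dependent rows and $\det DF(0)=0$, contradicting the regularity condition \eqref{eq_regualr}. This yields \eqref{eq_polygon_app} for every $i$. The degenerate case $|S_i|=1$ is trivial, since the left-hand side is then $0$; the argument above also covers it vacuously.

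Finally, I would justify the polygon interpretation. Nonnegative lengths $\ell_1,\ldots,\ell_n$ in which each $\ell_i$ is at most the sum of the others can be realized as the side lengths of a (possibly degenerate) closed polygon. This is the standard fact that a closed chain of segments exists if and only if the longest segment is at most the sum of the rest, so I would just cite it as a remark. The only step needing care is confirming that $F_i$ is independent of $\delta_i$. This follows directly from the definition, since $\tilde\nu_{-i}(\delta_{-i})$ involves only the others' coordinates, so I expect no real obstacle. The whole proof is a rank count on the Jacobian.
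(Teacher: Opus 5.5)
Your proof is correct and is essentially the paper's argument. Both use the fact that $F_i$ does not depend on $\delta_i$, so the $|S_i|-1$ rows of agent $i$'s block are supported on only $\sum_{j\ne i}(|S_j|-1)$ columns, while $\det DF(0)\neq 0$ forces those rows to be linearly independent; the paper phrases this directly (row rank equals column rank, bounded by the number of nonzero columns), whereas you phrase it as a contradiction.
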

This result follows from the analysis of \cite{mckelvey1997maximal}, who bound the maximal number of Nash equilibria with a given support in a generic game. They show that this number vanishes for equilibria that violate \eqref{eq_polygon_app}, and hence the inequality holds generically. In their work, this conclusion arises as a byproduct of a technically demanding argument that relies on advanced tools from algebraic geometry. For completeness, we provide a brief and direct proof. 
\begin{proof}
Let $F=(F_1,\ldots,F_n)$ be the indifference map defined above, written in local coordinates
$\delta\in\R^d$ with $d=\sum_i (|S_i|-1)$. Its Jacobian at $\delta=0$ is the $d\times d$ matrix $DF(0)$, which we
 represent in row blocks $DF_i(0)$ corresponding to agent $i$'s indifference equations:
\begin{equation*}
DF(0)
  \;=\;
  \begin{bmatrix}
    DF_1(0) \\ DF_2(0) \\ \vdots \\ DF_n(0)
  \end{bmatrix},
  \qquad
  DF_i(0)\in\R^{(|S_i|-1)\times d}.
\end{equation*}
By regularity (see ~\eqref{eq_regualr}), $\det DF(0)\neq 0$. Hence $DF(0)$ has full rank and its rows are linearly
independent. In particular, each block $DF_i(0)$ has full row rank of $|S_i|-1$.

Row rank equals column rank, and the latter is bounded above by the number of nonzero columns. Since $F_i$ depends only on agent $i$'s opponents’ probabilities, it is invariant to $\delta_i$. Accordingly, the columns of $DF_i(0)$ corresponding to $\delta_i$ are zero. Consequently, $DF_i(0)$ has at most $\sum_{j\ne i}(|S_j|-1)$ nonzero columns, which implies that $|S_i|-1 \;\leq\; \sum_{j\ne i}(|S_j|-1)$ for every $i$.
\end{proof}

We next show how the lower bound~\eqref{eq_cone_bound} tightens for three or more mixing agents once the polygon inequality rules out highly asymmetric equilibria.

\subsection{Auxiliary Inequalities for Three or More Mixing Agents}

Consider a regular Nash equilibrium $\nu$ supported on $S=S_1\times \ldots \times S_n$. By Lemma~\ref{lm_cone}, the dimension of the tangent space at $\nu$ satisfies the following bound
\begin{equation*} 
    \dim T_\nu\geq \prod_i |S_i|-\left(1+\sum_i |S_i|(|S_i|-1)\right).
\end{equation*} 
Here, we explore the values that the right-hand-side of this inequality can take as a function of the number of randomizing agents (agents $i$ with $|S_i|\geq 2$), by combining this inequality with the polygon inequality from Lemma~\ref{lm_polygon}. Since the analysis does not rely on our particular game-theoretic interpretation, we formulate the result in terms of an arbitrary integer sequence $m_1,\ldots,m_n$, which we later set to satisfy $m_i=|S_i|$.

\begin{proposition}\label{lm_inequality}
 Let  $m_1,\ldots, m_n$ be a sequence of integers $m_i\geq 1$ satisfying the polygon inequality
 $m_i-1\ \leq \ \sum_{j\ne i} (m_j-1)$. 
   Denote the number of indices with $m_i\geq 2$ by $k=\bigl|\{i \colon m_i\geq 2\}\bigr|$ and suppose $k\geq 3$. Then, 
\begin{equation}\label{eq_integer_sequence_lower_bound}
 \prod_{i=1}^n m_i-\left(1+\sum_{i=1}^n m_i(m_i-1)\right)\ \ \  \geq \ \ \  2^k-2k-1+\begin{cases} \left(\prod_{i\colon m_i\geq 2} \  (m_i-1)\right)-1  & k\geq 4\\
     0 & k=3
\end{cases}.
\end{equation}
\end{proposition}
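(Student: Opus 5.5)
Since coordinates with $m_i=1$ add a factor $1$ to the product and $0$ to the sum, I will first discard them and assume $n=k$ with every $m_i\ge 2$. Put $x_i=m_i-1\ge 1$. The polygon inequality becomes $x_i\le\sum_{j\ne i}x_j$. Expanding $\prod_i(1+x_i)=\sum_{T\subseteq[k]}x^T$ with $x^T=\prod_{i\in T}x_i$, the left-hand side of \eqref{eq_integer_sequence_lower_bound} simplifies, since the empty set and the singletons cancel against $1+\sum_i x_i$:
\begin{equation*}
L \;=\; \sum_{|T|\ge 2}x^T-\sum_i x_i^2 .
\end{equation*}
There are exactly $2^k-k-1$ sets with $|T|\ge 2$. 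It is therefore convenient to rewrite
\begin{equation*}
L-(2^k-2k-1) \;=\; \sum_{|T|\ge 2}\bigl(x^T-1\bigr)-\sum_i\bigl(x_i^2-1\bigr).
\end{equation*}
The task reduces to showing that this quantity is $\ge 0$ for $k=3$, and $\ge x^{[k]}-1$ for $k\ge4$. All terms $x^T-1$ are nonnegative, because $x_i\ge1$.

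The main tool is a charging argument driven by the polygon inequality. Multiplying $x_i\le\sum_{j\ne i}x_j$ by $x_i$ gives $x_i^2\le\sum_{j\ne i}x_ix_j$. Summing over $i$ gives $\sum_i x_i^2\le 2\sum_{|T|=2}x^T$, so the squares cost at most twice the pair terms. One copy of the pair terms is available directly. The second copy must be paid for by higher-order terms, using the inequality $x^{\{i,j\}}\le x^{\{i,j,l\}}$ (valid since $x_l\ge1$). For $k\ge4$ the counting is generous. Each pair $\{i,j\}$ can be charged to a distinct triple, because $\binom{k}{3}\ge\binom k2$ once $k\ge5$, with a small ad hoc assignment for $k=4$ that may also use the $4$-set. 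The full product $x^{[k]}$ is kept uncharged, which gives the extra $\prod_i x_i-1$. The constants $-1$ must be tracked through the charging. Where the charging is only needed on the excess $x_i^2-1$, I would use $x_i^2-1\le\sum_{j\ne i}(x_ix_j-1)+(k-2)$, or replace $x_i$ by $x_i-1\ge0$ variables, so that the bookkeeping is done on nonnegative quantities.

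The case $k=3$ is the main obstacle, since there is only one triple to absorb three pairs and the crude charging fails. Here I would sort $x_1\ge x_2\ge x_3\ge1$, so that the polygon condition is just $x_1\le x_2+x_3$. The target $L\ge1$ reads
\begin{equation*}
x_1x_2x_3\;\ge\;1+\tfrac12\bigl[(x_1-x_2)^2+(x_1-x_3)^2+(x_2-x_3)^2\bigr].
\end{equation*}
I would bound each gap using the polygon inequality: $x_1-x_2\le x_3$, $x_1-x_3\le x_2$, and $x_2-x_3\le x_1-x_3$. This reduces the claim to a few two-variable polynomial inequalities on the region $x_2\ge x_3\ge1$, checked by monotonicity in $x_1$, which turns out to be convex in the relevant range so that endpoints suffice. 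Integrality ($x_i\in\mathbb Z_{\ge1}$) will likely be needed at the boundary, e.g.\ at $x=(2,1,1)$ where equality $L=1$ should hold. Finally, the small cases $k=4$ (and possibly $k=5$) may also need a direct verification in the same style, if the generic charging does not close cleanly.
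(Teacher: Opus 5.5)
There is a genuine gap: the charging argument, as you set it up in the variables $x_i$, cannot close, because it loses a constant at the point where the inequality is tight. At $x=\mathbf 1$ (all $m_i=2$) the target holds with equality for every $k\ge3$. There $L=2^k-2k-1$, and for $k\ge4$ the extra term $x^{[k]}-1$ vanishes.

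Your first step $x_i^2\le\sum_{j\ne i}x_ix_j$ is strict at $\mathbf 1$, losing $k-2$ for each $i$. Your ``constant-tracked'' version $x_i^2-1\le\sum_{j\ne i}(x_ix_j-1)+(k-2)$ is literally the same inequality. After using it, for $k\ge4$ you would still need $\sum_{3\le|T|\le k-1}(x^T-1)\ge\sum_{|T|=2}(x^T-1)+k(k-2)$. The left side vanishes at $\mathbf 1$ while the right side is $k(k-2)>0$, so no assignment of pairs to triples can rescue it. Without the $-1$'s the failure is the same: you would need $\sum_{3\le|T|\le k-1}x^T-\sum_{|T|=2}x^T\ge2^k-2k-2$, but at $\mathbf 1$ the left side is only $2^k-k^2-2$.

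Your $k=3$ plan fails for the same reason. The bounds $x_1-x_2\le x_3$ and $x_1-x_3\le x_2$ are strict at $(1,1,1)$, and $x_2-x_3\le x_1-x_3$ is strict at $(2,1,1)$. Both points are equality cases of $L\ge1$, so each substitution yields a false inequality. For example, at $(1,1,1)$ either of the first two bounds raises the right side to at least $3/2>1$.

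An endpoint check in $x_1$ would be legitimate if you dropped the gap bounds. The defect $x_1x_2x_3-1-\tfrac12\sum_{i<j}(x_i-x_j)^2$ is concave in $x_1$, not convex, so its minimum over $[x_2,x_2+x_3]$ is attained at an endpoint. Integrality is not needed anywhere. At $k=4$ there is no injection from the six pairs into the four triples, and charging to the $4$-set would spend exactly the term $x^{[4]}-1$ you must keep.

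The missing idea is the one you mention only in passing: work with $y_i=x_i-1\ge0$. The polygon inequality becomes $y_i\le\sum_{j\ne i}y_j+(k-2)$. Multiplying by $y_i$ gives $\sum_iy_i^2\le2\sum_{i<j}y_iy_j+(k-2)\sum_iy_i$, and each term of this bound is tight at $y=0$.

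Now expand $x^T-1=\sum_{\emptyset\ne U\subseteq T}y^U$. In $\sum_{2\le|T|\le k-1}(x^T-1)$ the coefficient of $y_i$ is $2^{k-1}-2$ and that of $y_iy_j$ is $2^{k-2}-1$, while $\sum_i(x_i^2-1)=\sum_iy_i^2+2\sum_iy_i$. Dropping the nonnegative higher-order terms, the case $k\ge4$ reduces to $(2^{k-1}-k-2)\sum_iy_i+(2^{k-2}-3)\sum_{i<j}y_iy_j\ge0$. This holds for all $k\ge4$ without touching the $k$-set.

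For $k=3$ the same count over all $|T|\ge2$ shows the defect equals $\sum_iy_i\bigl(\sum_{j\ne i}y_j+1-y_i\bigr)+y_1y_2y_3$, a sum of nonnegative terms. In the original variables this is exactly the paper's identity $\sum_i(x_i-1)(\sum_{j\ne i}x_j-x_i)+\prod_i(x_i-1)$, so $k=3$ is not the hard case. The paper treats $k\ge4$ differently: it shows $\partial_iF\ge0$ on the convex region cut out by the polygon inequalities and integrates along segments from $\mathbf 1$, where $F(\mathbf 1)=0$. The shifted-variable count would give a uniform, purely algebraic alternative.
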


\vspace{2mm}

For the proof of Theorem~\ref{th_almost_pure_appendix}, the weaker lower bound without the additional term for $k\geq 4$ suffices. The extra term will be needed  in the discussion of payoff-space extremality appearing in Appendix~\ref{app_payoff_space}.

\begin{proof}
After reindexing, assume $m_i \ge 2$ for $ i=1,\ldots,k$ and $m_i=1$ for $i>k$. Set $x_i=m_i-1$ for all $i$. Then $x_i \geq 1$ and $x_i \le \sum_{j \ne i} x_j$ for all $i$.

If $k=3$, a direct expansion shows that the difference between the two sides of \eqref{eq_integer_sequence_lower_bound} is equal to
\begin{equation*}
    \sum_{i=1}^3 (x_i-1)\Bigl(\sum_{j\ne i}x_j-x_i\Bigr)\;+\;(x_1-1)(x_2-1)(x_3-1).
\end{equation*}
The polygon inequality, together with the fact that $x_i \ge 1$ for $i \leq 3$, imply that each term in this expression is nonnegative. The claim then follows.

Now, assume that $k \ge 4$. Define
\begin{equation*}
    F(x)=\prod_{i=1}^k(1+x_i)-\prod_{i=1}^k x_i-\sum_{i=1}^k(x_i+1)x_i-2^k+2k+1
\end{equation*}
on $\mathcal P=\big\{x \in [1,\infty)^k : x_i \le \sum_{j \ne i} x_j \ \text{ for all } i\big\}$. Then \eqref{eq_integer_sequence_lower_bound} is equivalent to $F(x)\ge 0$ for integer-valued~$x \in \mathcal P$.

Let $\mathbf 1=(1,\dots,1)$. Since $\mathbf 1 \in \mathcal P$ and $F(\mathbf 1)=0$, it suffices to show that $F$ is non-decreasing along each segment $x(t)=\mathbf 1+t(y-\mathbf 1)$ with $y \in \mathcal P$ and $0 \le t \le 1$. Along such a segment, $\frac{d}{dt}F(x(t))=\sum_{i=1}^k (y_i-1)\,\partial_iF(x(t))$,
so it suffices to prove that $\partial_iF \ge 0$ on $\mathcal P$.
From the definition of~$F$,
\begin{equation*}
\partial_iF(x)=\prod_{j \ne i}(1+x_j)-\prod_{j \ne i}x_j-(2x_i+1).
\end{equation*}
After expanding $\prod_{j \ne i}(1+x_j)$, the constant term and the top-degree term cancel. Thus,
\begin{equation*}
\partial_iF(x)\ge \sum_{j \ne i}x_j+\sum_{\substack{j<\ell\\ j,\ell \ne i}}x_jx_\ell-2x_i
\end{equation*}
since all omitted terms are nonnegative. Also,
\begin{equation*}
    2\sum_{\substack{j<\ell\\ j,\ell \ne i}}x_jx_\ell
=\sum_{\substack{j \ne \ell\\ j,\ell \ne i}}x_jx_\ell
\ge (k-2)\sum_{j \ne i}x_j.
\end{equation*}
Indeed, each fixed $x_j$ occurs in exactly $k-2$ ordered products $x_jx_\ell$. Furthermore, $x_jx_\ell \ge x_j$ because $x_\ell \ge 1$. Since $k \ge 4$, this yields $\sum_{j<\ell,\ j,\ell \ne i} x_jx_\ell \ge \sum_{j \ne i}x_j$. Hence,
\begin{equation*}
    \partial_iF(x) \ge 2\sum_{j \ne i}x_j-2x_i \ge 0
\end{equation*}
by the polygon inequality.
Therefore $F \ge 0$ on  $\mathcal P$. Substituting back $m_i=x_i+1$ proves~\eqref{eq_integer_sequence_lower_bound}.
\end{proof}

\subsection{Regular Equilibria with at Most Two Mixing Agents}

Analyzing regular Nash equilibria with at most two mixing agents reduces to the two-agent case, since agents playing pure actions can be replaced with dummies. Nash equilibria in two-agent games are very well understood; for completeness, we provide short self-contained proofs of the specific facts we need, using the tools developed above.

Consider a regular Nash equilibrium $\nu$ supported on $S=S_1\times\cdots\times S_n$ and suppose that the number of mixing agents $k = \bigl|\{i \colon |S_i|\geq 2\}\bigr|$ is at most $2$. The polygon inequality determines the support structure in this case.

\begin{corollary}\label{cor_few_mixers}
If $\nu$ is a regular Nash equilibrium with  $k\leq 2$ mixing agents, then either $\nu$ is pure, or exactly two agents mix and their supports have the same cardinality.
\end{corollary}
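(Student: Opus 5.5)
The plan is to apply the polygon inequality (Lemma~\ref{lm_polygon}) directly to the support sizes, with the agents' indices split according to whether they mix. Let $\nu$ be regular with support $S=S_1\times\cdots\times S_n$, and set $x_i=|S_i|-1\geq 0$. Then agent $i$ mixes exactly when $x_i\geq 1$. Lemma~\ref{lm_polygon} gives $x_i\leq \sum_{j\neq i}x_j$ for every $i$. Since $k\leq 2$, we only need to consider $k=0$, $k=1$, and $k=2$.

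If $k=0$, every $S_i$ is a singleton, so $\nu$ is pure and there is nothing to prove.

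If $k=1$, let $i$ be the unique mixing agent. Then $x_i\geq 1$ while $x_j=0$ for all $j\neq i$. The polygon inequality for agent $i$ then reads $1\leq x_i\leq \sum_{j\ne i}x_j=0$, a contradiction. So regularity rules out exactly one mixing agent.

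If $k=2$, let $i$ and $j$ be the two mixing agents. All other $x_l$ vanish, so the polygon inequalities for $i$ and for $j$ become $x_i\leq x_j$ and $x_j\leq x_i$. Hence $x_i=x_j$, which means $|S_i|=|S_j|$.

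There is no real obstacle here; the only point needing care is that the sums in Lemma~\ref{lm_polygon} run over all agents. This is harmless, because agents playing pure actions contribute $x_l=0$ to those sums.
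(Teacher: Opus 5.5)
Your proposal is correct and matches the paper's argument: split into the cases $k\in\{0,1,2\}$. For $k=1$, the polygon inequality (Lemma~\ref{lm_polygon}) yields a contradiction, and for $k=2$ it gives two opposite inequalities that force $|S_i|=|S_j|$.
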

Indeed, $k=0$ corresponds to a pure $\nu$. The case of one randomizing agent is ruled out since $|S_i|\geq 2$ for only one agent $i$ contradicts the polygon inequality
$|S_i|-1\leq \sum_{j\ne i}(|S_j|-1)$: the left-hand side is positive while the right-hand side equals zero.  Finally, if $k=2$ and $i\ne j$ are the mixing agents, then the polygon inequality yields
$|S_i|-1\leq |S_j|-1$ and $|S_j|-1\leq |S_i|-1$, hence $|S_i|=|S_j|$.

With $k=2$ mixing agents, the indifference map from the regularity condition~\eqref{eq_regualr} becomes especially simple: agent $i$'s indifference equations depend only on agent $j$'s mixing weights, and vice versa. As a result, the Jacobian from the definition of regularity does not depend on the equilibrium mixing weights and its non-singularity simplifies to linear independence of payoff-difference vectors introduced in the following lemma.

\begin{lemma}\label{lm_two_mixing_regular}
Let $\nu$ be a quasi-strict Nash equilibrium in which agents $i$ and $j$ mix on supports $S_i$ and $S_j$ with
$|S_i|=|S_j|=m\ge 2$, while every $\ell\notin\{i,j\}$ plays $a_\ell^*$ deterministically. Fix reference actions
$a_i^*\in S_i$ and $a_j^*\in S_j$. For each $a_i\in S_i\setminus\{a_i^*\}$ and $a_j\in S_j\setminus\{a_j^*\}$, define
payoff-difference vectors $X_{a_i}\in\R^m$ and $Y_{a_j}\in\R^m$ by
\begin{equation}\label{eq_XY_blocks}
\begin{aligned}
X_{a_i}(a_j)
&=u_i(a_i,a_j,a_{-ij}^*)-u_i(a_i^*,a_j,a_{-ij}^*),\\
Y_{a_j}(a_i)
&=u_j(a_i,a_j,a_{-ij}^*)-u_j(a_i,a_j^*,a_{-ij}^*).
\end{aligned}
\end{equation}
Then $\nu$ is regular if and only if the families of vectors $X_{a_i}$, $a_i\in S_i\setminus\{a_i^*\}$, and
$Y_{a_j}$, $a_j\in S_j\setminus\{a_j^*\}$, are each linearly independent in $\R^m$.
\end{lemma}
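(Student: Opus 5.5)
We compute the Jacobian $DF(0)$ explicitly in the local coordinates of Appendix~\ref{app_polygon}. Only agents $i$ and $j$ mix, so $d=2(m-1)$ and $\delta=(\delta_i,\delta_j)$; the agents $\ell\notin\{i,j\}$ have $|S_\ell|=1$ and contribute no coordinates and no indifference equations. Hence $F=(F_i,F_j)$, where $F_i$ depends only on $\delta_j$ and $F_j$ only on $\delta_i$. In block form,
\begin{equation*}
DF(0)=\begin{bmatrix} 0 & B_i\\ B_j & 0\end{bmatrix},
\end{equation*}
with $B_i=\partial F_i/\partial\delta_j$ and $B_j=\partial F_j/\partial\delta_i$, both square of size $(m-1)\times(m-1)$. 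Therefore $\det DF(0)\neq 0$ if and only if both $B_i$ and $B_j$ are nonsingular.

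Next we express $B_i$ through the vectors $X_{a_i}$. For $a_i\in S_i\setminus\{a_i^*\}$, the corresponding component of $F_i$ is
\begin{equation*}
F_{i,a_i}(\delta)=\sum_{a_j\in S_j}\tilde\nu_j(\delta_j)(a_j)\,X_{a_i}(a_j),
\end{equation*}
which is affine in $\delta_j$. Since $\tilde\nu_j(\delta_j)(a_j)=\nu_j(a_j)+\delta_j(a_j)$ for $a_j\neq a_j^*$, and $\tilde\nu_j(\delta_j)(a_j^*)$ absorbs the negative sum, we get
\begin{equation*}
\partial F_{i,a_i}/\partial\delta_j(a_j)=X_{a_i}(a_j)-X_{a_i}(a_j^*).
\end{equation*}
So $B_i=XP$, where $X$ is the $(m-1)\times m$ matrix with rows $X_{a_i}$, and $P$ is the $m\times(m-1)$ matrix sending $e_{a_j}$ to $e_{a_j}-e_{a_j^*}$. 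The analogous formula gives $B_j=YP'$ for $Y$ and the corresponding matrix $P'$.

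The main step is to show that $XP$ is nonsingular if and only if the rows of $X$ are linearly independent. The naive equivalence fails: the column space of $P$ is only the hyperplane $\mathbf 1^\perp\subset\R^m$, so independent rows need not remain independent after restriction to that hyperplane. We close the gap using the equilibrium condition. At $\nu$, agent $i$ is indifferent, so $X\nu_j=F_i(0)=0$, i.e., $\nu_j\in\ker X$. Because $\nu_j$ has full support on $S_j$ with weights summing to one, $\nu_j\notin\mathbf 1^\perp$, and thus $\R^m=\mathbf 1^\perp\oplus\mathrm{span}(\nu_j)$. Hence
\begin{equation*}
\mathrm{rank}(XP)=\mathrm{rank}\big(X|_{\mathbf 1^\perp}\big)=\mathrm{rank}(X),
\end{equation*}
since $X$ vanishes on $\nu_j$. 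Therefore $XP$, which is $(m-1)\times(m-1)$, is nonsingular exactly when $\mathrm{rank}(X)=m-1$, i.e., when the $X_{a_i}$ are linearly independent. The same argument with $\nu_i$ in place of $\nu_j$ handles $Y$.

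Combining these steps with the assumed quasi-strictness of $\nu$ yields the stated equivalence.
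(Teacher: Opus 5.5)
Your proof is correct and follows essentially the same route as the paper. You use the same anti-diagonal block Jacobian, and you use agent $i$'s indifference ($X\nu_j=0$) to close the gap between nonsingularity of the restricted block and linear independence of the $X_{a_i}$. You phrase that step as the rank identity $\mathrm{rank}(XP)=\mathrm{rank}(X)$ via $\R^m=\mathbf{1}^\perp\oplus\mathrm{span}(\nu_j)$, whereas the paper argues with row dependencies (a vanishing row combination of $\widehat X$ yields a constant vector $z$, which indifference forces to be zero) — these are the same idea in dual form.
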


\begin{proof}
Differentiating the indifference map $F\colon \R^{2(m-1)}\to \R^{2(m-1)}$ at $\delta=0$ yields the block Jacobian
\begin{equation*}
DF(0)=
\begin{bmatrix}
0 & \widehat X\\
\widehat Y & 0
\end{bmatrix},
\end{equation*}
where $\widehat X$ and $\widehat Y$ are $(m-1)\times (m-1)$ matrices with entries
$\widehat X(a_i,a_j)=X_{a_i}(a_j)-X_{a_i}(a_j^*)$ and
$\widehat Y(a_j,a_i)=Y_{a_j}(a_i)-Y_{a_j}(a_i^*)$.
Thus, $\nu$ is regular if and only if both $\widehat X$ and $\widehat Y$ are non-singular.

We show that $\widehat X$ is non-singular if and only if the vectors $X_{a_i}$ are linearly independent. If the vectors $X_{a_i}$ are linearly dependent, then some nontrivial combination $\sum_{a_i}c(a_i)X_{a_i}$ vanishes. Looking at $X_{a_i}$, subtracting its $a_j^*$-coordinate from each $a_j$-coordinate shows that the same coefficients give a nontrivial linear relation among the rows of $\widehat X$, so $\widehat X$ is singular. Symmetric arguments follow for the non-singularity of $\widehat Y$.

Conversely, if $\widehat X$ is singular, there exist coefficients $c(a_i)$, not all zero, such that, for all $a_j\neq a_j^*$, we have $\sum_{a_i}c(a_i)(X_{a_i}(a_j)-X_{a_i}(a_j^*))=0$. Hence, the vector
$z=\sum_{a_i}c(a_i)X_{a_i}\in\R^m$ is constant across coordinates. On the other hand, indifference of agent $i$ at the Nash equilibrium implies  that $\sum_{a_j}\nu_j(a_j)X_{a_i}(a_j)=0$ for every $a_i\neq a_i^*$. After multiplying by $c(a_i)$ and summing up, we get $\sum_{a_j}\nu_j(a_j)z(a_j)=0$. Since $z$ is constant and $\nu_j$ is a probability distribution, this constant must be zero. Thus $z=0$, so the vectors $X_{a_i}$ are linearly dependent.
\end{proof}

We can now prove that regular Nash equilibria with two mixing agents are extreme correlated equilibria.

\begin{lemma}\label{lm_extreme_two_mixing}
If $\nu$ is a regular Nash equilibrium in which exactly two agents mix, then $\nu$ is an extreme point of the correlated
equilibrium polytope.
\end{lemma}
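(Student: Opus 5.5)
We show that the tangent space $T_\nu$ of Lemma~\ref{lm_cone} is trivial. Regularity implies quasi-strictness, so Lemma~\ref{lm_cone} applies: $\tau\in T_\nu$ vanishes outside $S$ and satisfies the on-support incentive equalities. By Corollary~\ref{cor_few_mixers}, the two mixers $i,j$ have $|S_i|=|S_j|=m\ge 2$, and all other agents play $a^*_\ell$ deterministically. Hence $\tau$ can be identified with an $m\times m$ matrix $\tau(a_i,a_j)$ on $S_i\times S_j$, with every other coordinate fixed at $a^*_{-ij}$. We must show that the equations in~\eqref{eq_lin_independent_delta} force $\tau=0$. The other agents' incentive constraints are trivial: their supports are singletons, so no on-support equalities involve them.

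First we rewrite agent $i$'s constraints using the vectors of Lemma~\ref{lm_two_mixing_regular}. For $a_i\ne a_i^*$, the constraint comparing $a_i$ with $a_i^*$ reads $\sum_{a_j}\tau(a_i,a_j)X_{a_i}(a_j)=0$. The constraint comparing $a_i^*$ with $a_i$ reads $\sum_{a_j}\tau(a_i^*,a_j)X_{a_i}(a_j)=0$. The remaining constraints compare pairs $a_i,a_i'$ with both different from $a_i^*$; each of these reads $\sum_{a_j}\tau(a_i,a_j)\big(X_{a_i}(a_j)-X_{a_i'}(a_j)\big)=0$, and together with the first set this gives $\sum_{a_j}\tau(a_i,a_j)X_{a_i'}(a_j)=0$.

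Combining these, every row $\tau(a_i,\cdot)\in\R^m$, including the row for $a_i^*$, is orthogonal to all $m-1$ vectors $X_{a_i'}$, $a_i'\ne a_i^*$. By Lemma~\ref{lm_two_mixing_regular}, regularity makes these vectors linearly independent, so their orthogonal complement is one-dimensional. Every row of $\tau$ is therefore a multiple of a single vector $w\in\R^m$. The Nash indifference gives $\sum_{a_j}\nu_j(a_j)X_{a_i'}(a_j)=0$, so we can take $w=\nu_j$. Thus $\tau(a_i,a_j)=\alpha(a_i)\nu_j(a_j)$.

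The symmetric argument for agent $j$, using the vectors $Y_{a_j}$, gives $\tau(a_i,a_j)=\beta(a_j)\nu_i(a_i)$. Both $\nu_i$ and $\nu_j$ have full support on $S_i$ and $S_j$, so equating the two forms yields $\alpha(a_i)/\nu_i(a_i)=\beta(a_j)/\nu_j(a_j)$ for all pairs. This common ratio is a constant $c$, so $\tau=c\,\nu_i\times\nu_j$. The normalization $\sum\tau=0$ gives $c=0$, hence $T_\nu=\{0\}$ and $\nu$ is extreme.

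The main obstacle is the bookkeeping step showing that all rows, not only those indexed by $a_i\neq a_i^*$, are orthogonal to all the $X$-vectors, which requires using the full set of pairwise constraints. Once that is established, the rest is linear algebra supplied directly by Lemma~\ref{lm_two_mixing_regular}.
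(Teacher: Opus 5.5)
Your proposal is correct and follows the paper's proof step for step. You identify $\tau$ with an $m\times m$ matrix on $S_i\times S_j$, use Lemma~\ref{lm_two_mixing_regular} together with Nash indifference to force each row to be a multiple of $\nu_j$ and each column a multiple of $\nu_i$, and then use the normalization $\sum\tau=0$ to kill the remaining constant. Two details are spelled out more explicitly than in the paper: that every row, including the $a_i^*$ row, is orthogonal to all the $X_{a_i'}$, and the ratio argument that turns the two factorizations into $\tau=c\,\nu_i\times\nu_j$.
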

Versions of this result appear in \cite*{cripps1995extreme}, \cite*{evangelista1996note}, and \cite*{canovas1999nash}.

\begin{proof}
Extremality is equivalent to showing that the tangent space $T_\nu$ from~\eqref{eq_tangent_space_mu} is trivial. Let $\tau\in T_\nu$. By Lemma~\ref{lm_cone}, $\tau$ vanishes outside the support $S$ and by Corollary~\ref{cor_few_mixers}, when two agents mix, they have identically-sized supports. Therefore, we can identify $\tau$ with the $m\times m$ matrix $T$ on $S_i\times S_j$ defined by $T(a_i,a_j)=\tau(a_i,a_j,a_{-ij}^*)$.

Fix $a_i\in S_i$. The tangent-space conditions in Lemma~\ref{lm_cone} imply that the row $T(a_i,\,\cdot\,)$ is orthogonal to all payoff-difference vectors relevant for agent $i$'s deviations on the support. These vectors span the same subspace as $\{X_{a_i'}:a_i'\neq a_i^*\}$. By Lemma~\ref{lm_two_mixing_regular}, this subspace has dimension $m-1$. Its orthogonal complement is therefore one-dimensional. Since $\nu_j$ is orthogonal to every $X_{a_i'}$ by agent $i$'s indifference at $\nu$, each row of $T$ must be a scalar multiple of $\nu_j$. Thus $T(a_i,a_j)=\alpha(a_i)\nu_j(a_j)$ for some coefficients $\alpha(a_i)$.

Applying the same argument to agent $j$, each column of $T$ must be a scalar multiple of $\nu_i$. Hence, there exists $c\in\R$ such that $T(a_i,a_j)=c\,\nu_i(a_i)\nu_j(a_j)$ for all $(a_i,a_j)$.

Finally, Lemma~\ref{lm_cone} also imposes the normalization condition $\sum_{a_i,a_j}T(a_i,a_j)=0$. Since
$\sum_{a_i,a_j}\nu_i(a_i)\nu_j(a_j)=1$, this forces $c=0$. Therefore, $T=0$, so that $\tau=0$. We conclude that $T_\nu=\{0\}$, and hence $\nu$ is extreme.
\end{proof}

\subsection{Proof of Theorem~\ref{th_almost_pure_appendix}}

Theorem~\ref{th_almost_pure_appendix} now follows easily from a combination of the above preparatory results.

\begin{proof}[Proof of Theorem~\ref{th_almost_pure_appendix}]
Consider a regular Nash equilibrium $\nu$ supported on $S = S_1 \times \cdots \times S_n$ with $k = \bigl|\{i : |S_i| \geq 2\}\bigr|$ mixing agents.
The dimension of the face of the correlated equilibrium polytope $\ce(\G)$ containing
$\nu$ equals $\dim T_\nu$, where $T_\nu$ is the tangent space defined
in~\eqref{eq_tangent_space_mu}. In particular, $\nu$ is an extreme point if and only
if $T_\nu = \{0\}$.

Assume first that $k\geq 3$. By Lemma~\ref{lm_cone},
\begin{equation*}
  \dim T_\nu
  \;\geq\;
  \prod_i |S_i| - \Bigl(1 + \sum_i |S_i|(|S_i|-1)\Bigr).
\end{equation*}
By regularity of $\nu$ and Lemma~\ref{lm_polygon}, the support sizes satisfy the
polygon inequality $|S_i| - 1 \leq \sum_{j \neq i}(|S_j| - 1)$.
Therefore, applying Proposition~\ref{lm_inequality} to the preceding bound yields
\begin{equation}\label{eq_dimension_th_proof}
\dim T_\nu\;\ge\;
2^k-2k-1+\begin{cases}
\prod_{i:\,|S_i|\ge 2} (|S_i|-1)-1 & k\ge 4\\
0 & k=3
\end{cases}.
\end{equation}
The right-hand side is strictly positive since $2^k-2k-1>0$ for $k\geq 3$. Hence, $\dim T_\nu > 0$, so $\nu$ is not extreme and lies in the interior of a face
of dimension at least as high as the right-hand side of~\eqref{eq_dimension_th_proof}.

Now suppose $k \leq 2$. Corollary~\ref{cor_few_mixers} implies that $\nu$ is either
pure or has exactly two mixing agents.
If $\nu$ is pure, it is extreme because point masses are extreme among all probability
distributions, hence also within $\ce(\G)$.
If exactly two agents mix, Lemma~\ref{lm_extreme_two_mixing} provides extremality directly.

Combining both cases, a regular Nash equilibrium is an extreme point of $\ce(\G)$ if and only if $k \leq 2$.
\end{proof}

\section{Proofs Pertaining to Extremality in Payoff Space}\label{app_payoff_space}

We demonstrate that, in generic $n$-agent games, Nash equilibria with $k\geq 3$ mixing agents can be welfare-improved by correlation, while those with $k\geq 8 + \log_2 n$ admit a Pareto improvement. The welfare improvement construction for games with $n\geq 4$ agents combines a dimension-counting argument with the invariance of the correlated equilibrium set under strategic equivalence; for $n = 3$, we rely on a more abstract algebraic genericity argument. The Pareto improvement results draw on a similar abstract argument and require constructing a canonical game with a Pareto-improvable Nash equilibrium of prescribed support, which we accomplish via a new cylinder-packing combinatorial technique.

\subsection{Welfare Improvements: Proof of Proposition~\ref{prop:payoff_extreme}}\label{app_welfare}

We start with a high-level overview of the proof of Proposition~\ref{prop:payoff_extreme}, formalizing the steps through a sequence of technical lemmas below.

Fix a number of agents $n\geq 3$ and a game $\Gamma$. 
Let $\nu$ be a regular Nash equilibrium supported on $S = S_1 \times \ldots \times S_n\subseteq A$ and denote the  number of agents who randomize their actions by $k = \bigl|\{i \colon |S_i|\geq 2\}\bigr|$.

Recall that $T_\nu$ defined in~\eqref{eq_tangent_space_mu} denotes the tangent space to the set of correlated equilibria at $\nu$, consisting of all perturbations $\tau\in \R^A$ such that $\nu\pm\varepsilon \tau$ are correlated equilibria for small enough $\varepsilon>0$. For quasi-strict Nash equilibria, Lemma~\ref{lm_cone} characterizes $T_\nu$ and shows that it depends on $\nu$ only through its support $S$. Moreover, the linear system in that lemma is well-defined regardless of whether a Nash equilibrium supported on $S$ exists. We therefore write $T_S$ for the solution set of this system, so that $T_S = T_\nu$ whenever $\nu$ is a quasi-strict Nash equilibrium supported on $S$.

Given $S\subseteq A$, let $\mathfrak{U}_{S\subseteq A}\subset \big(\R^A\big)^n$ be the set of utilities $u=(u_i)_i$ admitting some $\tau \in T_S$ with $\sum_i u_i(\tau)\ne 0$, where $u_i(\tau)=\sum_a u_i(a)\tau(a)$. For $u\in \mathfrak{U}_{S\subseteq A}$, any quasi-strict Nash equilibrium $\nu$ supported on $S$ admits perturbations $\mu_\pm=\nu\pm\varepsilon \tau$ that are both correlated equilibria for small enough $\varepsilon$, one with strictly higher and one with strictly lower welfare than $\nu$. Thus, $\nu$ is not welfare-optimal among correlated equilibria, and since $u(\nu)$ is the average of the distinct points $u(\mu_+)$ and $u(\mu_-)$, it is not payoff-extreme either.

We will show that whenever $S$ with $k=\bigl|\{i\colon |S_i|\geq 2\}\bigr|\geq 3$ supports a regular Nash equilibrium for some $u$, the set $\mathfrak{U}_{S\subseteq A}$ contains an open set of full measure; that is, such games are generic. Intersecting over all such $S$ and restricting to games where all Nash equilibria are regular, we conclude that for a generic game, every Nash equilibrium with $3$ or more mixers is neither payoff-extreme nor welfare-optimal.

The key difficulty in showing that $\mathfrak{U}_{S\subseteq A}$ is generic is demonstrating that it is dense. Once density is established, genericity follows by a standard algebraic argument.

\begin{lemma}\label{lm_density_genericity}
Density of $\mathfrak{U}_{S\subseteq A}$ in $(\R^A)^n$ implies that it contains an open subset of full measure.
\end{lemma}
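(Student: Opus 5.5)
The plan is to show that the complement of $\mathfrak{U}_{S\subseteq A}$ is the zero set of a nonzero polynomial in the utilities. Such a set is closed, has Lebesgue measure zero, and is nowhere dense, so the lemma follows. The key point is that $T_S$ depends on $u$ only through the linear system~\eqref{eq_lin_independent_delta}, whose coefficients are linear in $u$.

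First I would write the system~\eqref{eq_lin_independent_delta}, restricted to vectors supported on $S$, as $M(u)\tau=0$. Here $M(u)$ is a matrix whose rows are the incentive rows $\bigl(u_i(a_i,a_{-i})-u_i(a_i',a_{-i})\bigr)$ for $a_i\neq a_i'\in S_i$, together with the all-ones row. Its entries are affine in $u$. Next I would characterize membership in $\mathfrak{U}_{S\subseteq A}$: $u\notin\mathfrak{U}_{S\subseteq A}$ exactly when the welfare vector $w(u)=\sum_i u_i|_S$ vanishes on $\ker M(u)$. Equivalently, $w(u)$ lies in the row space of $M(u)$.

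The difficulty is that $\operatorname{rank} M(u)$ can vary with $u$. To handle this, let $r$ be the generic (maximal) rank of $M(u)$ over $(\R^A)^n$. The set where $\operatorname{rank}M(u)<r$ is cut out by the vanishing of all $r\times r$ minors. Since the minors are polynomials not all identically zero, this set is a proper algebraic subset and hence measure zero and closed nowhere dense.

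On the open dense set $\{\operatorname{rank}M(u)=r\}$, the condition ``$w(u)$ in the row space'' is the condition $\operatorname{rank}\begin{bmatrix}M(u)\\ w(u)\end{bmatrix}\le r$. This is again the vanishing of all $(r+1)\times(r+1)$ minors of the augmented matrix, which are polynomials in $u$. Thus $\mathfrak{U}_{S\subseteq A}\supseteq\{u:\operatorname{rank}M(u)=r\}\cap\{u: \text{some } (r+1)\text{-minor of the augmented matrix}\neq 0\}$. Both sets in this intersection are complements of real algebraic varieties.

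The density hypothesis now enters. If every $(r+1)$-minor of the augmented matrix were identically zero as a polynomial, then on the dense open set of rank $r$ we would have $u\notin\mathfrak{U}_{S\subseteq A}$. That contradicts density, since a dense set meets every nonempty open set. Hence some $(r+1)$-minor $P$ is a nonzero polynomial. Let $Q$ be a nonzero $r$-minor of $M$. Then $\mathfrak{U}_{S\subseteq A}\supseteq\{u: P(u)Q(u)\neq0\}$, and the zero set of the nonzero polynomial $PQ$ is closed with Lebesgue measure zero. The latter is standard: it follows, for instance, by induction on the number of variables with Fubini. This gives an open subset of full measure.

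The step needing the most care is the density-to-nonvanishing step. One must make sure the density contradiction is applied on the open set $\{Q\ne0\}$, where the row-space characterization is exact, rather than on all of $(\R^A)^n$. Handling it via the fixed generic rank $r$, as above, resolves this.
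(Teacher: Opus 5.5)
Your proof is correct, but it takes a different route from the paper's. The paper argues abstractly. By Tarski--Seidenberg quantifier elimination, $\mathfrak{U}_{S\subseteq A}$ is semialgebraic. Any dense semialgebraic set then contains an open set of full measure: its complement has empty interior, so each piece of a decomposition of the complement into basic semialgebraic sets must involve a nontrivial polynomial equality. You instead use the specific linear-algebraic structure. Membership in $\mathfrak{U}_{S\subseteq A}$ means $w(u)$ is not in the row space of a matrix $M(u)$ whose entries are affine in $u$. So, off the proper algebraic set $\{\operatorname{rank}M(u)<r\}$, the complement is a determinantal variety. Density is needed only to ensure that some $(r+1)\times(r+1)$ minor $P$ of the augmented matrix is not identically zero. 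This also covers $r=|S|$, where no such minor exists and density fails outright.

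Your route is more elementary and self-contained, since it avoids quantifier elimination and cell decompositions. It is also more explicit: the exceptional set lies in the zero set of one polynomial. Your argument further yields a dichotomy showing that density could be weakened to $\mathfrak{U}_{S\subseteq A}$ containing a single $u$ with $\operatorname{rank}M(u)=r$. This is the same rank-and-minors technique the paper itself uses later in Lemma~\ref{lm_three_equal} and Proposition~\ref{prop_payoff_dimension}. The paper's route, in exchange, gives a general statement valid for any dense semialgebraic set, with no need to describe $\mathfrak{U}_{S\subseteq A}$ explicitly.

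One small simplification: intersecting with $\{Q\neq0\}$ is unnecessary. Whenever $P(u)\neq0$, the augmented matrix has rank above $r\geq\operatorname{rank}M(u)$, so $\{P\neq0\}\subseteq\mathfrak{U}_{S\subseteq A}$ already.
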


\begin{proof}
$\mathfrak{U}_{S\subseteq A}$ is a semialgebraic set since the condition ``$\exists\tau\in T_S$ with $\sum_i u_i(\tau)\neq 0$'' can be expressed by finitely many polynomial equalities and inequalities in~$u$.\footnote{A semialgebraic set in $\R^d$ is any subset that can be described by finitely many polynomial equalities and inequalities. By the ``quantifier-elimination'' theorem of Tarski and Seidenberg (Theorem~1.4.2 in \cite{bochnak2013real}), sets defined by first-order logical formulas with polynomial equalities and inequalities and quantifiers (for example, $\exists \tau$ or $\forall \tau$) are also semialgebraic; that is, such sets can be expressed without using these quantifiers.}

It therefore suffices to show that every dense semialgebraic subset of $\R^d$ contains an open set whose complement has measure zero. Consider any such set, and let $C$ denote its complement. Then $C$ is a semialgebraic set with an empty interior. Every semialgebraic set can be written as a finite union of nonempty sets, each defined by finitely many polynomial equalities and \emph{strict} inequalities.\footnote{Represent each weak inequality as either an equality or strict inequality and discard empty pieces;
see \cite[Chapter~2.3]{bochnak2013real} for details.} Any piece defined solely by strict inequalities is open in $\R^d$ and therefore cannot appear in a decomposition of $C$. Hence, every nonempty piece of $C$ is contained in the zero set of a polynomial that is not identically zero. It follows that $C$ is contained in a finite union of algebraic surfaces of positive codimension, which has measure zero. Since $\mathfrak{U}_{S\subseteq A}$ is a dense semialgebraic set, it contains an open set with a zero-measure complement.
\end{proof}

The density of $\mathfrak{U}_{S\subseteq A}$ is established by exploiting the invariance of correlated equilibria under strategic equivalence. In particular, consider modifying agent $i$'s payoff by adding a function that depends only on the opponents' actions, $u'_i(a)=u_i(a)+\delta_i(a_{-i})$, yielding a strategically equivalent game $\Gamma'$ in the sense of \cite{moulin1978strategically}. This transformation leaves agents' incentives unchanged, since all payoff differences of the form $u_i(a_i,a_{-i})-u_i(a_i',a_{-i})$ remain the same. Consequently, the sets $\nash(\Gamma)$, $\ce(\Gamma)$, and $T_S$ are preserved. At the same time, the transformation alters the direction in which these equilibrium sets are projected into the payoff space.

We show that even when the welfare of a regular Nash equilibrium $\nu$ with three mixers in a game $\Gamma$ cannot be improved, it becomes improvable under a small strategically equivalent perturbation of the form $u'_i(a)=u_i(a)+\delta_i(a_{-i})$. Indeed, consider $\tau\in T_S$ and a perturbation $\delta$. Let $\delta_i(\tau)$ be the weighted average by $\tau$ of the perturbation $\delta_i$. That is, $\delta_i(\tau)=\sum_a \delta_i(a_{-i}) \tau(a)$.
Then, for any $i$,
\begin{equation}\label{eq_SE_pairing}
 \delta_i(\tau)= \sum_{a_{-i}}\delta_i(a_{-i})\Big(\sum_{a_i}\tau(a_i,a_{-i})\Big).
\end{equation}
Hence,  we can find $\delta$ such that $\sum_i \delta_i(\tau)\ne 0$ if and only if there exists some $i$ for which the $(n-1)$-dimensional
marginal $a_{-i}\to \sum_{a_i}\tau(a_i,a_{-i})$ is not identically zero. Consider, then, the following subspace: 
\begin{equation*}
    T_{S,0}=\Big\{\tau\in\R^A:\ \ \sum_{a_i}\tau(a_i,a_{-i})=0\ \text{for all $i$ and all $a_{-i}\in A_{-i}$ \  and \  $\tau(a)=0$ for $a\notin S$}\Big\}.
\end{equation*}
By \eqref{eq_SE_pairing}, $T_{S,0}$ is exactly the space of $\tau$ supported on $S$ and orthogonal to all strategically equivalent perturbations. Consequently, if there exists $\tau\in T_S\setminus T_{S,0}$, then we can pick $\delta_i$ so that $\sum_i \delta_i(\tau)\neq 0$.  Thus, either $\sum_i u_i(\tau)\neq 0$ already in $\Gamma$ or else $\sum_i u'_i(\tau)\neq 0$ in the strategically equivalent~$\Gamma'$.  Because the strategically equivalent perturbation can be chosen arbitrarily small, this establishes the density of $\mathfrak{U}_{S\subseteq A}$ once we know that $T_S\setminus T_{S,0}\neq\varnothing$ for the equilibria of interest.

It remains to show that $T_S\setminus T_{S,0}\neq\varnothing$ whenever at least three agents mix at the regular equilibrium. First, we compute the dimension of $T_{S,0}$.

\begin{lemma}\label{lm_zero_marginals}
The dimension of $T_{S,0}$ is
\begin{equation*}
\dim T_{S,0}\;=\;\prod_{i}\big(|S_i|-1\big).
\end{equation*}
In particular, $T_{S,0}=\{0\}$ whenever some agent $i$ has $|S_i|=1$.
\end{lemma}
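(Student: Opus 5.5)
We view $T_{S,0}$ as a space of arrays supported on $S$ and compute its dimension through a tensor-product decomposition. Because every $\tau\in T_{S,0}$ vanishes outside $S$, the condition $\sum_{a_i\in A_i}\tau(a_i,a_{-i})=0$ is trivial whenever $a_{-i}\notin S_{-i}$. For $a_{-i}\in S_{-i}$ it reduces to $\sum_{a_i\in S_i}\tau(a_i,a_{-i})=0$. Identify $\R^S$ with the tensor product $\bigotimes_i \R^{S_i}$. For each $i$, let $V_i\subseteq\R^{S_i}$ be the hyperplane of vectors whose coordinates sum to zero, so $\dim V_i=|S_i|-1$. The claim is that
\begin{equation*}
T_{S,0}=\bigotimes_i V_i,
\end{equation*}
from which $\dim T_{S,0}=\prod_i(|S_i|-1)$ follows immediately.

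We prove both inclusions. The inclusion $\supseteq$ is immediate. A pure tensor $v_1\otimes\cdots\otimes v_n$ with every $v_i\in V_i$ has all of its one-dimensional marginals equal to zero. The reason is that summing over $a_i$ factors out $\sum_{a_i}v_i(a_i)=0$. By linearity, the same holds on the whole span.

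For $\subseteq$, the cleanest route is a change of basis in each factor. For each $i$, fix a reference action $a_i^*\in S_i$ and use the basis of $\R^{S_i}$ consisting of $\mathbf 1_{S_i}$ together with a basis of $V_i$, for instance $e_{a_i}-e_{a_i^*}$ for $a_i\neq a_i^*$. Tensor products of these vectors form a basis of $\R^S$, and we expand $\tau$ in it. Now apply the linear map "sum over coordinate $i$", namely $\mathrm{id}\otimes\cdots\otimes\sigma_i\otimes\cdots\otimes\mathrm{id}$, where $\sigma_i(v)=\sum_{a_i}v(a_i)$. This map kills every basis tensor whose $i$-th factor lies in $V_i$. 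It sends a basis tensor with $i$-th factor $\mathbf 1_{S_i}$ to $|S_i|$ times the tensor of the remaining factors, and these images are linearly independent in $\bigotimes_{j\ne i}\R^{S_j}$. Hence the vanishing of the $i$-th marginal forces every coefficient on a basis tensor with $\mathbf 1_{S_i}$ in slot $i$ to be zero. Doing this for every $i$ leaves only basis tensors with every factor in the corresponding $V_i$, so $\tau\in\bigotimes_i V_i$.

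The final claim then follows directly. If $|S_i|=1$, then $V_i=\{0\}$, so the product is zero and $T_{S,0}=\{0\}$. This can also be seen without the decomposition: summing over $a_i$ just returns the single value $\tau(a)$, so every entry of $\tau$ must vanish.

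The only step that needs any care is the independence of the images under the marginal map in the $\subseteq$ direction, and that follows from the fact that tensor products of basis vectors are linearly independent. Apart from that, the argument is routine.
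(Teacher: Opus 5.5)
Your proof is correct. It reaches the result by a partly different route from the paper's.

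The paper also reduces to $S=A$ and fixes reference actions $a_i^*$, but it argues in coordinates. For the upper bound, it shows that any $\tau\in T_{S,0}$ is recovered from its restriction to $C=\prod_i(S_i\setminus\{a_i^*\})$. It does this via $\tau(a_i^*,a_{-i})=-\sum_{a_i\neq a_i^*}\tau(a_i,a_{-i})$ and induction on the number of reference coordinates. For the lower bound, it exhibits vectors $\phi_c$, $c\in C$, that agree with the indicator of $c$ on $C$. Those $\phi_c$ are exactly your pure tensors $\bigotimes_i(e_{c_i}-e_{a_i^*})$, so the two lower bounds are the same argument.

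The real difference is the upper bound. You replace the elimination-by-induction with an expansion of $\tau$ in the basis built from $\{\mathbf 1_{S_i}\}\cup\{e_{a_i}-e_{a_i^*}\}$. You then observe that contracting slot $i$ sends the basis tensors with $\mathbf 1_{S_i}$ in that slot to nonzero multiples of distinct basis tensors of $\bigotimes_{j\neq i}\R^{S_j}$, and kills all the others.

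What your version buys:
\begin{itemize}
\item an exact, essentially basis-free identification $T_{S,0}=\bigotimes_i V_i$, from which the dimension formula and the $|S_i|=1$ case are immediate;
\item an explicit reason why restricting to $S$ is harmless (constraints with $a_{-i}\notin S_{-i}$ are vacuous), which the paper simply takes ``without loss.''
\end{itemize}
The paper's version is more elementary and shows concretely how $\tau$ is rebuilt from its values on $C$.

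One wording slip: the constraints concern the marginals on $A_{-i}$ obtained by summing out the single coordinate $a_i$ (the paper's $(n-1)$-dimensional marginals), not ``one-dimensional marginals.'' Your computation is the right one; only the name is off.
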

\begin{proof}
We assume $S=A$ without loss. If $S_i=\{a_i^\ast\}$ for some $i$, then every $\tau\in T_{S,0}$ satisfies
$\tau(a_i^\ast,a_{-i})=\sum_{a_i}\tau(a_i,a_{-i})=0$ for all $a_{-i}$,
so $\tau=0$ and $\dim T_{S,0}=0$.
This agrees with the formula, since the product contains the factor $|S_i|-1=0$.

Now assume $|S_i|\geq 2$ for all $i$. Fix a reference action $a_i^\ast\in S_i$ for each $i$, and let 
$C=\prod_i (S_i\setminus\{a_i^\ast\})$.  
Every $\tau\in T_{S,0}$ is uniquely determined by its restriction to $C$.  
Indeed, if $a$ has at least one coordinate $a_i=a_i^\ast$, then from 
$\sum_{a_i}\tau(a_i,a_{-i})=0$ we get that
$
\tau(a_i^\ast,a_{-i})=-\sum_{a_i\ne a_i^\ast}\tau(a_i,a_{-i}),
$
so $\tau(a)$ is determined by values at profiles with fewer reference coordinates.  
By induction, all values of $\tau$ are determined once the restriction of $\tau$ to $C$ is known, implying that
\begin{equation*} \dim T_{S,0}\le |C|=\prod_i(|S_i|-1).\end{equation*} 
To prove the reverse inequality, we construct $|C|$ linearly independent vectors as follows. For each $c=(c_1,\dots,c_n)\in C$, define $\phi_c\in T_{S,0}$ by
\begin{equation*} 
\phi_c(a)=
\begin{cases}
(-1)^{|\{i:\,a_i=a_i^\ast\}|}, &\text{if }a_i\in\{c_i,a_i^\ast\}\ \forall i,\\[3pt]
0, &\text{otherwise.}
\end{cases}
\end{equation*} 
Each $\phi_c$ satisfies the zero-marginal conditions, and   $\phi_c$ coincides with the indicator of $c$ on~$C$.  
Thus, $\{\phi_c\}_{c\in C}$ are linearly independent, so $\dim T_{S,0}\ge |C|$. Therefore, 
$\dim T_{S,0}=|C|=\prod_i(|S_i|-1)$ and $\{\phi_c\}_{c\in C}$ is a basis of $T_{S,0}$.
\end{proof}

The following lemma takes care of the case of $n \geq 4$ agents and $k\geq 3$ mixers. 
\begin{lemma}\label{lm_4agents_zero_marginals}
Let $\nu$ be a regular Nash equilibrium supported on $S=S_1\times\cdots\times S_n$ with $n\geq 4$. If $k\geq 3$ agents use mixed strategies, then $T_S\setminus T_{S,0}\neq\varnothing$.
\end{lemma}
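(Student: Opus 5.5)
The plan is pure dimension counting: I will show that $\dim T_S>\dim T_{S,0}$. This suffices, because a subspace of strictly larger dimension cannot be contained in $T_{S,0}$, so some $\tau\in T_S$ lies outside it. I will not claim $T_{S,0}\subseteq T_S$, and do not need to: zero marginals in the $a_i$-direction do not by themselves make the incentive sums in Lemma~\ref{lm_cone} vanish.

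The first step is the lower bound on $\dim T_S$. Regularity gives quasi-strictness, so $T_S=T_\nu$ and Lemma~\ref{lm_cone} applies. Regularity also gives the polygon inequality (Lemma~\ref{lm_polygon}). Feeding both into Proposition~\ref{lm_inequality} with $m_i=|S_i|$ reproduces exactly the bound~\eqref{eq_dimension_th_proof}:
\begin{equation*}
\dim T_S\;\ge\;2^k-2k-1+\begin{cases}\prod_{i:\,|S_i|\ge 2}(|S_i|-1)-1, & k\ge 4,\\ 0, & k=3.\end{cases}
\end{equation*}

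The second step splits on whether every agent mixes.

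\emph{Case $k<n$.} Some agent has $|S_i|=1$. Lemma~\ref{lm_zero_marginals} then gives $T_{S,0}=\{0\}$. Since $k\ge 3$, we have $\dim T_S\ge 2^k-2k-1>0$, so any nonzero $\tau\in T_S$ works.

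\emph{Case $k=n$.} Here $n\ge 4$ forces $k\ge 4$, so the extra term in the bound is available. Lemma~\ref{lm_zero_marginals} gives $\dim T_{S,0}=\prod_i(|S_i|-1)$, and because every agent mixes this equals the product over mixing agents appearing in the bound. Hence
\begin{equation*}
\dim T_S-\dim T_{S,0}\;\ge\;2^k-2k-2\;\ge\;6>0 .
\end{equation*}

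The only delicate point is this last case. It is exactly where the hypothesis $n\ge 4$ enters. With $k=n=3$, Proposition~\ref{lm_inequality} provides no product term, so the count cannot beat $\dim T_{S,0}=\prod_i(|S_i|-1)$. That explains why the three-agent case must be handled separately, by the algebraic argument mentioned at the start of this appendix. Beyond this, I expect only bookkeeping: checking that the product in the $k\ge 4$ branch of Proposition~\ref{lm_inequality} coincides with the formula in Lemma~\ref{lm_zero_marginals} once all agents mix.
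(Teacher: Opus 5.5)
Your proposal is correct and follows the paper's proof essentially step for step. You split into $k<n$, where Lemma~\ref{lm_zero_marginals} gives $T_{S,0}=\{0\}$ and Theorem~\ref{th_almost_pure_appendix} gives $\dim T_S>0$, and $k=n\ge 4$, where the extra product term in the bound cancels $\dim T_{S,0}$ and leaves a gap of at least $2^k-2k-2\ge 6$.
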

The idea is to show that the lower bound on $\dim T_S$ from Theorem~\ref{th_almost_pure_appendix} exceeds $\dim T_{S,0}$, so $T_S$ cannot be contained in $T_{S,0}$.

\begin{proof}
If $k<n$, some agent does not randomize, so by Lemma~\ref{lm_zero_marginals}, we know that $T_{S,0}=\{0\}$. By Theorem~\ref{th_almost_pure_appendix}, for any regular equilibrium with at least three mixers we have $\dim T_S>0$. Therefore, $T_S\setminus T_{S,0}\neq\varnothing$.

If $k=n$, then all agents mix, so $k=n\geq 4$.
Theorem~\ref{th_almost_pure_appendix} gives
\begin{equation*}
    \dim T_S\;\geq\; 2^k-2k-2 \;+\prod_{i\colon |S_i|\geq 2}(|S_i|-1).
\end{equation*}
By Lemma~\ref{lm_zero_marginals}, $\dim T_{S,0}=\prod_{i\colon |S_i|\geq 2}(|S_i|-1)$, so
\begin{equation}\label{eq_dimension_difference}
    \dim T_S - \dim T_{S,0} \;\geq\; 2^k-2k-2 \;\geq\; 6
    \qquad\text{for } k\geq 4.
\end{equation}
Hence, $T_S\not\subseteq T_{S,0}$. It follows that $T_S\setminus T_{S,0}\neq\varnothing$.
\end{proof}

It remains to consider the case of $n=3$ agents, all of whom mix. In this case, the dimension of $T_{S,0}$ can be as large as that of $T_S$.\footnote{It may even happen that $T_\nu=T_{S,0}$. For instance, let $S_i=\{0,1\}$ and define $u_i(a)=(-1)^{a_i+a_j}$ for each triple of distinct indices $i,j,k$. As in a generic $2\times 2\times 2$ game, both $T_S$ and $T_{S,0}$ are one-dimensional; in this example, they coincide and are spanned by $\tau(a)=(-1)^{a_1+a_2+a_3}$.} Accordingly, ruling out the inclusion $T_S\subseteq T_{S,0}$ requires a more refined analysis of the structure of these spaces. We show that this inclusion holds only for utility functions $u$ satisfying a certain algebraic condition, and therefore fails generically. This distinction between the cases $n=3$ and $n\geq 4$ reflects a fundamental difference in strategic structure: for $n\geq 4$, the inclusion is ruled out for all $u$ by Lemma~\ref{lm_4agents_zero_marginals}, whereas for $n=3$ it may arise.

To show that $T_S$ is not contained in $T_{S,0}$ generically when $n=k=3$, we first establish this on a set of games of positive measure and then extend the conclusion to all generic games via an algebraic argument. Our starting point is the set of games admitting at least one additional Nash equilibrium $\nu'$ with the same support as $\nu$. The difference $\tau=\nu'-\nu$ then lies in $T_S$, and its marginals are nonzero because $\nu$ and $\nu'$ are distinct, so $\tau\notin T_{S,0}$.\footnote{Although we use a second Nash equilibrium $\nu'$ to find a welfare improvement of $\nu$, the improvement does not necessarily correspond to moving towards $\nu'$: it can occur in the direction of $-\tau=\nu-\nu'$.}

The next two lemmas apply to any number $n\geq k\geq 3$ of agents, although we only need it for $n=k=3$.

\begin{lemma}\label{lm_generic_two_Nash}
Let $S = S_1 \times \cdots \times S_n \subseteq A$ satisfy $k = \bigl|\{i \colon |S_i| \geq 2\}\bigr| \geq 3$, and suppose there exists $u \in (\R^{A})^n$ admitting a regular Nash equilibrium with support $S$. Then the set of $u$ for which $T_S \setminus T_{S,0} \neq \varnothing $ has positive Lebesgue measure.
\end{lemma}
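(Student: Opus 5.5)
Following the strategy sketched just before the lemma, I plan to exhibit an open set of utilities $u$ that admit \emph{two} distinct quasi-strict Nash equilibria $\nu\neq\nu'$, both supported on $S$. On any such $u$, the difference $\tau=\nu'-\nu$ lies in $T_S$: it vanishes outside $S$, has zero total mass, and satisfies the on-support equalities of Lemma~\ref{lm_cone}, because each of $\nu,\nu'$ makes every agent indifferent across her support actions and the constraints are linear in the distribution. Moreover, $\tau\notin T_{S,0}$. Since $\nu$ and $\nu'$ are distinct product distributions, some agent $i$ has $\nu_i\neq\nu'_i$. Then the marginal $a_{-i}\mapsto\sum_{a_i}\tau(a_i,a_{-i})=\nu'_{-i}(a_{-i})-\nu_{-i}(a_{-i})$ vanishes identically only if $\nu'_{-i}=\nu_{-i}$. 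Summing over all $i$, if every such marginal vanished, all marginals of all agents would coincide and hence $\nu=\nu'$. So some marginal is nonzero. It therefore suffices to produce an open nonempty set of such $u$.

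\textbf{Construction of one such game.} Start from the given $u^0$ with a regular Nash equilibrium $\nu^0$ on $S$. Pick any other fully-supported product distribution $\nu^1=\prod_i\nu^1_i$ on $S$ with $\nu^1\ne\nu^0$, close to $\nu^0$. I want to modify $u^0$ so that $\nu^1$ also becomes an equilibrium while $\nu^0$ stays one. Agent $i$'s indifference conditions are linear in $u_i$ with coefficient vectors given by $\nu^0_{-i}$ and $\nu^1_{-i}$. For each $i$, I choose perturbations of $u_i$ of the form $u_i(a_i,a_{-i})+\varepsilon_i(a_i,a_{-i})$ with $\varepsilon_i$ supported on $S_i\times S_{-i}$. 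These must satisfy $\sum_{a_{-i}}\nu^0_{-i}(a_{-i})\varepsilon_i(a_i,a_{-i})=c^0_i$, constant in $a_i$, which preserves $\nu^0$, and must also satisfy, for $\nu^1$, $\sum_{a_{-i}}\nu^1_{-i}(a_{-i})(u^0_i+\varepsilon_i)(a_i,a_{-i})=$ const in $a_i$. This is solvable for each $a_i$ separately provided $\nu^0_{-i}$ and $\nu^1_{-i}$ are linearly independent vectors in $\R^{S_{-i}}$, which holds whenever $\nu^1_{-i}\ne\nu^0_{-i}$. Since $k\ge3$, every agent has at least two mixing opponents. Choosing $\nu^1_j\ne\nu^0_j$ for all mixers $j$ therefore gives $\nu^1_{-i}\neq\nu^0_{-i}$ for all $i$. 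For pure agents the condition is vacuous. Taking $\nu^1\to\nu^0$, the required $\varepsilon_i$ can be chosen small, e.g. via the minimal-norm solution, whose norm is controlled by the ratio of $\|\nu^1_{-i}-\nu^0_{-i}\|$ to the distance between the two vectors. Because quasi-strictness is an open condition, both equilibria remain quasi-strict after the small perturbation.

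\textbf{Openness.} Regularity of $\nu^0$ in $u^0$ is an open condition, so for small perturbations $\nu^0$ stays regular in the new game $u^1$. If $\nu^1$ is also regular in $u^1$, then the implicit function theorem gives two distinct continuous equilibrium branches on a neighborhood of $u^1$. Both branches stay fully supported on $S$ and quasi-strict, which yields an open, hence positive-measure, set of $u$ on which the construction of the first paragraph applies.

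\textbf{Main obstacle: regularity of $\nu^1$.} I expect the hard step to be guaranteeing that $\nu^1$ is regular in $u^1$. My first attempt is a continuity argument: the Jacobian $DF$ at $\nu^1$ for $u^1$ is close to $DF$ at $\nu^0$ for $u^0$ when the perturbation is small, and the latter is nonsingular. This argument would need $\varepsilon_i$ of order $o(1)$, and here lies the delicacy: $\varepsilon$ is only comparable to $\|\nu^1-\nu^0\|$, not negligible. The Jacobian nevertheless moves continuously, because its entries are polynomial in $u$ and in the mixing weights.

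If that fails, I would fall back to an alternative: add a further generic perturbation that keeps both $\nu^0$ and $\nu^1$ equilibria. Such perturbations form an affine space, and $\det DF$ at $\nu^1$ is a polynomial on it. It suffices to show this polynomial is not identically zero, which can be done by noting it is nonzero at the unperturbed limit.
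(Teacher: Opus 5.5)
Your first paragraph is the same as the paper's closing step. The paper takes two regular totally mixed equilibria on $S$ (after reducing to $A_i=S_i$) from Theorem~5.3 of McKelvey and McLennan, which gives an open set of games with at least $(k-1)!\geq 2$ of them. The gap is in your replacement for that theorem.

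Your claim that the correction $\varepsilon_i$ becomes small as $\nu^1\to\nu^0$ is false, and it has to be. Since $\nu^0$ is regular for $u^0$, the implicit function theorem makes it the unique equilibrium in a fixed neighborhood of $\nu^0$, for every game in a fixed neighborhood of $u^0$. So no game near $u^0$ can have a second equilibrium near $\nu^0$.

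Quantitatively, set $x=u^0_i(a_i,\cdot)-u^0_i(a_i^*,\cdot)$, $d=\nu^1_{-i}-\nu^0_{-i}=\eta D_i+O(\eta^2)$, and let $P$ be the orthogonal projection onto $(\nu^0_{-i})^{\perp}$. Any admissible change $e$ of $x$ must satisfy $e\perp\nu^0_{-i}$ and $\langle Pd,e\rangle=-\langle d,x\rangle$. Hence $\|e\|\geq|\langle d,x\rangle|/\|Pd\|$, which tends to $|\langle D_i,x\rangle|/\|PD_i\|$. This limit is nonzero for some $(i,a_i)$ precisely because $DF(0)$ is nonsingular.

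In the limit, your games $u^1$ approach a game in which $\nu^0$ is still an equilibrium but $DF(0)\delta=0$ for the approach direction $\delta$. Two consequences follow:
\begin{itemize}
\item The claims that quasi-strictness and regularity of $\nu^0$ survive ``the small perturbation'' are unsupported.
\item The continuity argument proves the opposite of what you need: $\det DF$ at $\nu^1$ (and at $\nu^0$) tends to $0$.
\end{itemize}
Your fallback fails for the same reason. $u^0$ does not lie in the affine space of games making both $\nu^0$ and $\nu^1$ equilibria. As $\nu^1\to\nu^0$, these affine spaces converge into the locus $\{\det DF=0\}$. So the ``unperturbed limit'' gives the value $0$, not a nonzero one.

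What is missing is a non-infinitesimal argument. One way to set it up:
\begin{itemize}
\item Fix $\nu^1$ at a definite distance from $\nu^0$, with $\nu^1_j\neq\nu^0_j$ for every mixer $j$.
\item Let $\mathcal A$ be the affine space of $u|_S$ under which both distributions satisfy the on-support indifference conditions.
\item Show that $\det DF_u(\nu^0)$ and $\det DF_u(\nu^1)$ are each not identically zero on $\mathcal A$.
\end{itemize}
Their product is then a nonzero polynomial on $\mathcal A$, which yields one game in which both are regular. Your openness paragraph then finishes the proof. Quasi-strictness plays no role, since $T_S$ and $T_{S,0}$ depend only on $u|_S$, so you may take $A=S$.

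For the non-vanishing, here is one sketch. Agent $i$'s block of $DF_u(\nu^1)$ is given by the functionals $x_i(a_i)$ restricted to the tangent space $V_i$ of the product distributions at $\nu^1_{-i}$. Meanwhile, $\mathcal A$ only requires $x_i(a_i)\perp\nu^0_{-i},\nu^1_{-i}$. Every agent has two opponents $j,l$ whose strategies change, so the cross term $(\nu^0_j-\nu^1_j)\otimes(\nu^0_l-\nu^1_l)$ shows $\nu^0_{-i}-\nu^1_{-i}\notin V_i$. Hence these restrictions can be prescribed freely, and $DF_u(\nu^1)$ ranges over all matrices with zero diagonal blocks. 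By Hall's theorem, such a matrix can be nonsingular exactly under the polygon inequality, which holds by Lemma~\ref{lm_polygon}. The same argument applies to $\nu^0$.

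Without an argument of this kind, your proposal does not produce even one game with two regular equilibria on $S$. That is precisely what the paper imports from McKelvey and McLennan.
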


\begin{proof}
Since $S$ supports at least one regular Nash equilibrium, it satisfies the polygon inequality (Lemma~\ref{lm_polygon}). As the defining conditions of both $T_S$ and $T_{S,0}$ involve only the restriction of $u$ to $S$, we assume $A_i = S_i$  for all $i$ without loss. Furthermore, we can focus on $|S_i| \geq 2$ by replacing agents with singleton supports with dummy agents.

\cite*{mckelvey1997maximal} characterize the maximal number of fully-supported Nash equilibria that a generic game can have. By their Theorem~5.3, if $A_i = S_i$ satisfies the polygon inequality, there exists a positive-measure set of utility functions $u \in (\R^{S})^n$ for which the game admits at least $(n-1)!$ distinct regular Nash equilibria. In particular, since $k \geq 3$ implies that $(k-1)! \geq 2$, there are at least two such equilibria. Let $\nu$ and $\nu'$ be any two such regular equilibria, and set $\tau = \nu' - \nu$. The equilibrium indifference conditions are linear in the joint distribution, so together with $\sum_{a} \tau(a) = 0$, we have that $\tau$ satisfies the linear system given in Lemma~\ref{lm_cone}. This implies that $\tau \in T_S$ at $\nu$.

It remains to show $\tau \notin T_{S,0}$. Since $\nu \neq \nu'$ and both are probability distributions over $S$, there exists an agent $i$ such that $\nu_i \neq \nu_i'$: the $i$-th marginal of $\tau$ is nonzero. For $j\ne i$, the $(n-1)$-dimensional marginal of $\tau$ on $A_{-j}$
is nonzero as well: summing that marginal over all coordinates other than $i$
recovers the $i$-th marginal of $\tau$, which is nonzero, so the marginal on
$A_{-j}$ cannot vanish. Hence $\tau \notin T_{S,0}$, and therefore
$T_S \setminus T_{S,0} \neq \varnothing$.
\end{proof}

The following lemma shows that, in fact, the conclusion of Lemma~\ref{lm_generic_two_Nash} established for a positive-measure set of utility functions holds for generic utility profiles~$u$.

\begin{lemma}\label{lm_three_equal}
Let $S = S_1 \times \cdots \times S_n \subseteq A$ satisfy $k = \bigl|\{i \colon |S_i| \geq 2\}\bigr| \geq 3$, and suppose there exists $u \in (\R^{A})^n$ admitting a regular Nash equilibrium with support $S$. Then $T_S \setminus T_{S,0} \neq \varnothing$ for generic~$u$.
\end{lemma}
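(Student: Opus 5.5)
The plan is to promote the positive-measure conclusion of Lemma~\ref{lm_generic_two_Nash} to genericity with the same semialgebraic argument used in Lemma~\ref{lm_density_genericity}. Denote by $\mathfrak{V}_S\subseteq(\R^A)^n$ the set of utility profiles $u$ with $T_S\setminus T_{S,0}\neq\varnothing$. Since $T_{S,0}$ does not depend on $u$, membership in $\mathfrak{V}_S$ is equivalent to $T_S\not\subseteq T_{S,0}$. I intend to rewrite this as a rank condition. Let $M(u)$ be the matrix of the linear system~\eqref{eq_lin_independent_delta} that defines $T_S$, with rows coming from the incentive equalities on $S$, the vanishing of $\tau$ off $S$, and normalization; its entries are affine in $u$. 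Let $Z$ be a fixed matrix whose rows span the annihilator of $T_{S,0}$, namely the marginal-sum functionals together with the coordinate functionals outside $S$. Then $T_S\subseteq T_{S,0}$ holds if and only if every row of $Z$ lies in the row space of $M(u)$, which is the same as $\operatorname{rank} M(u)=\operatorname{rank}\begin{bmatrix} M(u)\\ Z\end{bmatrix}$.

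The first step is to note that rank conditions are expressed by the vanishing and nonvanishing of minors, which are polynomials in $u$. Hence both $\mathfrak{V}_S$ and its complement $C$ are semialgebraic. Because the defining condition also involves polynomial inequalities through rank strata, I will avoid the stratification and use a cleaner route. Let $r^*$ be the maximum of $\operatorname{rank}M(u)$ over all $u$. The set $G=\{u:\operatorname{rank}M(u)=r^*\}$ is the complement of the common zero set of all $r^*\times r^*$ minors, so it is open and of full measure, since a nonzero polynomial vanishes only on a null set. On $G$, the inclusion $T_S\subseteq T_{S,0}$ is equivalent to $\operatorname{rank}\begin{bmatrix} M(u)\\ Z\end{bmatrix}\le r^*$, and this in turn is equivalent to the vanishing of every $(r^*+1)\times(r^*+1)$ minor of the stacked matrix. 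Therefore $C\cap G$ is contained in $Z(P)$, the common zero set of a finite family of polynomials $P$.

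The second step decides whether these polynomials vanish identically, and Lemma~\ref{lm_generic_two_Nash} settles this. It gives a set of positive measure on which $T_S\not\subseteq T_{S,0}$. Intersecting that set with the full-measure set $G$ still leaves positive measure, and on the intersection some $(r^*+1)$-minor is nonzero. Polynomials that vanish on a positive-measure set are identically zero, so at least one minor $q$ is a nonzero polynomial. Then $C\subseteq (\R^A)^n\setminus G \,\cup\, \{q=0\}$, which is a closed null set. Consequently $\mathfrak{V}_S$ contains the open full-measure set $G\cap\{q\neq 0\}$, and this is exactly the genericity claim.

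The main obstacle I expect is making sure that the equivalence between inclusion and a rank condition is exact and uniform in $u$. In particular, the constraint rows must be written so that they are affine in $u$ even though $T_S$ is being identified with $T_\nu$. This is fine because Lemma~\ref{lm_cone}'s system makes sense for any $u$, and $T_S$ is defined as its solution set. A second point to check is that $Z$ spans precisely the annihilator of $T_{S,0}$, which holds because $T_{S,0}$ is cut out by exactly those functionals. The remaining steps are routine measure-theoretic facts about the zero sets of polynomials.
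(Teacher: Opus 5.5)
Your proposal is correct and follows essentially the same route as the paper. You express $T_S\subseteq T_{S,0}$ as a rank equality between the $u$-dependent constraint matrix and its stack with the fixed matrix cutting out $T_{S,0}$, restrict to the full-measure set where the constraint matrix has maximal rank $r$, and use Lemma~\ref{lm_generic_two_Nash} to certify that some $(r+1)\times(r+1)$ minor of the stacked matrix is a nonzero polynomial. The only differences are cosmetic and do not affect the argument: you keep the coordinates outside $S$ as explicit rows, and you use a single nonvanishing minor instead of the paper's sum of squares $q_S(u)$.
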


\begin{proof}
It suffices to show that the inclusion $T_S\subseteq T_{S,0}$ imposes a non-trivial algebraic condition on $u$. Consider the linear systems defining the two spaces $T_S$ and $T_{S,0}$. We identify a perturbation $\tau$ with an element of $\R^S$ by ignoring coordinates outside of $S$. A perturbation $\tau$ belongs to $T_S$ if it solves the homogeneous linear system of Lemma~\ref{lm_cone}. This system can be written as $B_S(u)\tau=0$, where $B_S(u)$ is a $\left(1+\sum_i |S_i|(|S_i|-1)\right)\times|S|$ matrix. In other words, $T_S=\ker B_S(u)$. Similarly, $T_{S,0}=\ker D_S$ for some $\left(\sum_i\prod_{j\neq i}|S_j|\right)\times|S|$ matrix $D_S$ (independent of $u$).

The inclusion $T_S\subseteq T_{S,0}$ holds if and only if the rank of the stacked matrix of $B_S(u)$ and $D_S$ equals the rank of $B_S(u)$:
\begin{equation*} 
    \mathrm{rank}\begin{bmatrix}B_S(u)\\D_S\end{bmatrix}=\mathrm{rank}\,B_S(u).
\end{equation*} 
Let $r$ denote the maximum value of $\mathrm{rank} B_S(u)$ over all $u \in (\R^{A})^n$. The entries of $B_S(u)$ depend linearly on $u$, so each $r \times r$ minor of $B_S(u)$ is a polynomial in $u$. By definition of $r$, at least one such minor is not identically zero. Therefore, $\{u \colon \mathrm{rank} B_S(u) < r\}$ is the common zero set of a collection of polynomials that are not all identically zero. Hence, it is an algebraic subset of $(\R^{A})^n$ of Lebesgue measure zero.

On the full-measure set $\{u\colon \mathrm{rank}\,B_S(u)=r\}$, the inclusion $T_S\subseteq T_{S,0}$ requires every $(r+1)\times(r+1)$ minor of the stacked matrix of $B_S(u)$ and $D_S$ to vanish. Denote by $q_S(u)$ the polynomial in $u$ defined as the sum of squares of all $(r+1)\times(r+1)$ minors of the stacked matrix. This polynomial depends only on the support $S$, and $T_S\subseteq T_{S,0}$ can only hold when $q_S(u)=0$.

It remains to show that $q_S$ is not identically zero. By Lemma~\ref{lm_generic_two_Nash}, there is a positive-measure set of utility functions for which $T_S\not\subseteq T_{S,0}$. Intersecting this set with the full-measure set $\{u\colon \mathrm{rank}\,B_S(u)=r\}$ yields a point $u^*$ at which the stacked matrix has rank of at least $r+1$. Hence, $q_S(u^*)\neq 0$, so $q_S$ is not identically zero. It follows that the set $\{u\colon q_S(u)=0\}$ is an algebraic set of measure zero.

We conclude that $T_S \setminus T_{S,0} \neq \varnothing$ for all $u$ outside the measure-zero algebraic set $\{u \colon q_S(u) = 0\} \cup \{u \colon \operatorname{rank} B_S(u) < r\}$.
\end{proof}

We can now provide the proof of the proposition.

\begin{proof}[Proof of Proposition~\ref{prop:payoff_extreme}]
By Lemmas~\ref{lm_4agents_zero_marginals} and~\ref{lm_three_equal}, if $S \subseteq A$ with $k = \bigl|\{i \colon |S_i| \geq 2\}\bigr| \geq 3$ supports a regular Nash equilibrium for some utilities, then $T_S \setminus T_{S,0} \neq \varnothing$ for generic~$u \in (\R^A)^n$. Fix such~$u$. We now show that there is a point in $\mathfrak{U}_{S \subseteq A}$ arbitrarily close to it so that $\mathfrak{U}_{S \subseteq A}$ is dense. If $u \in \mathfrak{U}_{S \subseteq A}$, the claim is immediate. Suppose $u \notin \mathfrak{U}_{S \subseteq A}$ and thus $\sum_i u_i(\tau)=0$ for all $\tau\in T_S$. Choose a perturbation $\tau \in T_S \setminus T_{S,0}$. By~\eqref{eq_SE_pairing}, there exist functions $\delta_i \colon A_{-i} \to \R$ such that $\sum_i \delta_i(\tau) > 0$. Let $u'_i = u_i + \delta_i$ be the resulting strategically equivalent game. The correlated equilibrium polytope, and hence $T_S$, is unchanged, and
for the same $\tau \in T_S$,
\begin{equation*} 
    \sum_i u'_i(\tau)
    = \sum_i u_i(\tau) + \sum_i \delta_i(\tau)
    = \sum_i \delta_i(\tau)
    > 0.
\end{equation*} 
Since the values of $\delta_i$ can be set to be arbitrarily small and $u$ is generic, $\mathfrak{U}_{S \subseteq A}$ is dense.
Lemma~\ref{lm_density_genericity} then implies that $\mathfrak{U}_{S \subseteq A}$
contains an open set of full measure and is therefore generic.

Define $\mathfrak{U}_A$ as the intersection of $\mathfrak{U}_{S \subseteq A}$ over all $S \subseteq A$ with $k = \bigl|\{i \colon |S_i| \geq 2\}\bigr| \geq 3$ that support a regular Nash equilibrium for at least some $u \in (\R^A)^n$, together with the set of utility profiles under which Nash equilibria are all regular. Each factor is generic, and the collection is finite, so $\mathfrak{U}_A$ is itself generic.

Fix $u \in \mathfrak{U}_A$ and let $\nu$ be a Nash equilibrium with $k \geq 3$ mixing agents. Denote its support by $S$. Because $\nu$ is regular and $u \in \mathfrak{U}_{S \subseteq A}$, there exists $\tau \in T_S$ with $\sum_i u_i(\tau) \neq 0$. Choose $\varepsilon > 0$ small enough so that $\mu_\pm = \nu \pm \varepsilon\tau$ are both correlated equilibria. Then,
\begin{equation*} 
    \sum_i u_i(\mu_\pm)
    = \sum_i u_i(\nu) \pm \varepsilon \sum_i u_i(\tau),
\end{equation*} 
so one of $\mu_\pm$ yields strictly higher utilitarian welfare than $\nu$,
establishing the first claim.
For the second, observe that
$\nu = \tfrac{1}{2}(\mu_+ + \mu_-)$ with $u(\mu_+)\neq u(\mu_-)$,
so the payoff vector of $\nu$ is the midpoint of the payoff vectors of
$\mu_+$ and $\mu_-$ and is therefore not an extreme point of the
correlated equilibrium payoff set.
\end{proof}

\subsection{Pareto Improvements: Proof of Proposition~\ref{prop_Pareto}}\label{app_Pareto}

Proposition~\ref{prop_Pareto} shows that under substantial randomization, the expected payoff vector of a Nash equilibrium lies in the interior of the payoff polytope $U^{CE}(\G)$. To prove this, we first establish the following result giving a lower bound on the dimension of the face of $U^{CE}(\G)$ containing the Nash equilibrium payoff vector, and then derive Proposition~\ref{prop_Pareto} as a corollary.

\begin{proposition}\label{prop_payoff_dimension}
Let $\nu$ be a Nash equilibrium of a generic $n$-agent game $\G$, and let $S_i$ be the support of agent $i$'s mixed strategy. Suppose that at least $12$ agents mix. Then the expected payoff vector $u(\nu)$ belongs to a face of $U^{CE}(\G)\subset \R^n$ of dimension at least
\begin{equation*}
\min\{n,Q\}, \qquad \text{where}\qquad Q=\prod_{i=1}^n |S_i|-1-\sum_{i=1}^n |S_i|(|S_i|-1).
\end{equation*}
Moreover, $\nu$ itself lies in a face of $\ce(\G)$ of dimension $Q$.
\end{proposition}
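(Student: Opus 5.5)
The plan follows the route sketched in the footnote to Proposition~\ref{prop_Pareto}, and has three steps.

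\emph{Step 1 (the face of $\ce(\G)$).} Fix a support $S=S_1\times\cdots\times S_n$ with at least $12$ mixers that carries a regular Nash equilibrium for some utilities. By Lemma~\ref{lm_cone}, $T_S=\ker B_S(u)$, where $B_S(u)$ has $1+\sum_i|S_i|(|S_i|-1)$ rows. Hence $\dim T_S=Q$ exactly when $B_S(u)$ has full row rank. Full row rank is the non-vanishing of some maximal minor, which is a polynomial condition in $u$. So it suffices to exhibit one $u^*$ at which the rows are independent; then it holds for generic $u$, exactly as in the proof of Lemma~\ref{lm_three_equal}. This gives the ``moreover'' claim.

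\emph{Step 2 (the payoff face).} The face of $U^{CE}(\G)$ containing $u(\nu)$ has dimension at least $\dim u(T_S)$, where $u\colon\tau\mapsto(u_i(\tau))_i$. To see this, $\nu\pm\varepsilon\tau\in\ce(\G)$ for every $\tau\in T_S$, so $u(\nu)+\varepsilon u(T_S)\cap B$ lies in $U^{CE}$ for a small ball $B$. The image dimension is the rank of the $n\times Q$ matrix $(u_i(\tau_j))$ for a basis $\tau_j$ of $T_S$. The requirement $\operatorname{rank}\ge\min\{n,Q\}$ is again the non-vanishing of a minor. On the open full-measure set where $B_S(u)$ has full rank, a basis of the kernel can be chosen rationally in $u$ (via Cramer's rule/adjugates). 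After clearing denominators, the relevant minor becomes a polynomial $p_S(u)$. It therefore suffices to find a single game $\G^*$ at which both $B_S$ has full row rank and $u(T_S)$ has dimension $\min\{n,Q\}$. Genericity then follows from the algebraic argument of Lemma~\ref{lm_density_genericity}/Lemma~\ref{lm_three_equal}. Intersecting over the finitely many supports, and with the set where all equilibria are regular, finishes the proof.

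\emph{Step 3 (constructing $\G^*$) is the main obstacle.} I would build $u^*$ from indicator-type payoff differences supported on ``cylinders'' (lines of the product grid $S$ obtained by varying one coordinate). The aim is for agent $i$'s incentive rows $\sum_{a_{-i}}\tau(a_i,a_{-i})(u_i(a_i,a_{-i})-u_i(a_i',a_{-i}))$ to involve disjoint sets of coordinates of $S$. Concretely, I would choose $u_i(\cdot,a_{-i})$ nonzero only for $a_{-i}$ in a designated set $P_i\subseteq S_{-i}$. The sets $P_i$ should be chosen so that the cylinders $S_i\times P_i$ are pairwise disjoint across agents and miss a reference profile carrying the normalization row. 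Disjointness makes the row blocks act on disjoint coordinates, so independence reduces to independence within each block. Within a block, independence holds by choosing the differences generically on a cylinder, which requires $|P_i|\ge |S_i|-1$.

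The payoffs $u_i$ themselves, not just their differences, can be adjusted by adding $\delta_i(a_{-i})$. This freedom makes the $n$ linear functionals $u_i$ restricted to $T_S$ independent: roughly, the marginal directions $T_S\setminus T_{S,0}$ are plentiful once the dimension count in Proposition~\ref{lm_inequality} gives $Q\gg\prod(|S_i|-1)$. A cylinder-packing lemma in the spirit of \cite{felzenbaum1993packing} should guarantee that such disjoint families exist when at least $12$ agents mix. That threshold is where I expect to need case analysis on support sizes using the polygon inequality, and I would first try the all-binary case $|S_i|=2$ on the mixers and then reduce general supports to it by restriction.
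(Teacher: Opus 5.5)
Your Steps 1 and 2 match the paper's reduction. The paper also encodes two conditions as non-vanishing of sums of squared minors: full row rank of $B_S(u)$, and rank $\min\{|S|,M+n\}$ (with $M=1+\sum_i|S_i|(|S_i|-1)$) of the matrix obtained by stacking $B_S(u)$ on the payoff matrix $U_S(u)$. So a single witness game suffices; stacking avoids your rational kernel basis, but the two are equivalent. The gap is Step 3, which is the real content of the proof, and both ingredients you sketch fail as stated. First, the packing itself is exactly what must be proved: disjoint cylinders $C_i=S_i\times P_i$ with $|P_i|\ge |S_i|-1$, plus an uncovered profile, for \emph{arbitrary} supports satisfying the polygon inequality. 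Reducing to the all-binary case by restriction cannot produce them. Once every $S_j$, $j\neq i$, is cut to two elements, $|P_i|\le 2^{n-1}$. Yet the polygon inequality allows, e.g., $n=12$, $|S_1|=|S_2|=L+1$ with $L>2^{11}$, and $|S_j|=2$ for $j\ge 3$. The paper proves the packing directly (Lemma~\ref{lm_packing}). It halves $\lceil\log_2 n\rceil+1$ separation coordinates and assigns codewords that separate any $i\neq i'$ at a coordinate outside $\{i,i'\}$, giving $|C_i|\ge |S|/(2n)^{\log_2 3}$. Lemma~\ref{lm_packing_quadratic} then uses the polygon inequality to upgrade this to $|C_i|>|S_i|^2$, and that comparison is where the threshold $12$ comes from.

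Second, the payoff-rank step is not justified by the reason you give. At the size $|P_i|=|S_i|-1$ you mention, the $|S_i|-1$ independent constraints on each slice $\tau(a_i,\cdot)|_{P_i}\in\R^{P_i}$ force $\tau\equiv 0$ on $C_i$. So the cylinder part of $u_i^*$ vanishes on $T_S$, and everything must come from the adjustments $\delta_i$. Adding $\delta_i(a_{-i})$ moves agent $i$'s functional only by $\tau\mapsto\sum_{a_{-i}}\delta_i(a_{-i})\sum_{a_i}\tau(a_i,a_{-i})$, i.e., within the space $V_i$ of functionals on $T_S$ that factor through the marginal map $\pi_{-i}$. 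That $T_S\not\subseteq T_{S,0}$, even with large codimension, only says that $V_1+\cdots+V_n$ is large. Picking independent functionals from $V_1,\ldots,V_n$ instead requires the Rado condition $\dim\sum_{i\in J}V_i\ge |J|$ for every set $J$ of agents (in particular, no $\pi_{-i}$ may vanish on $T_S$), and your dimension count does not give this. The paper resolves both points with one choice: $|B_{-i}|=|S_i|$, and $u_i^*$ the indicator of the graph of a bijection $f_i\colon S_i\to B_{-i}$. The constraints then say exactly that $\tau(a_i,\cdot)$ equals a free constant $c_i(a_i)$ on $B_{-i}$, so $u_i^*(\tau)=\sum_{a_i}c_i(a_i)$ depends on coordinates disjoint across agents. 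The nonempty set $S\setminus\bigcup_i(S_i\times B_{-i})$ absorbs the normalization, making the image all of $\R^n$, and dummy agents are then appended one at a time with arbitrary payoffs. Your $\delta$-route can be repaired: let $\delta_i$ be the indicator of a single opponents' profile $\bar a_{-i}$, chosen so that the lines $S_i\times\{\bar a_{-i}\}$ are pairwise disjoint, miss all cylinders, and leave an uncovered profile. But that again relies on the unproved packing leaving a large uncovered region.
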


In particular, if $Q\ge n$, then $u(\nu)$ lies in the interior of $U^{CE}(\G)$.

As discussed in Appendix~\ref{app_th1_proof}, a regular equilibrium $\nu$ supported on $S = S_1 \times \cdots \times S_n$ admits a tangent space $T_\nu$ that captures every perturbation direction $\tau$ along which moving from $\nu$ by $\pm\varepsilon\tau$ for small enough $\varepsilon > 0$ remains inside the correlated equilibrium polytope. The tangent space is characterized as a solution of a linear system of equations (Lemma~\ref{lm_cone}), and the face of $\ce(\G)$ containing $\nu$ has dimension $\dim T_\nu$.

An analogous description applies to the payoff polytope. The image of $T_\nu$ under the linear map $L_u \colon T_\nu \to \R^n$ defined by $L_u(\tau) = u(\tau) = (u_1(\tau), \ldots, u_n(\tau))$ provides directions along which $u(\nu)$ can be perturbed while remaining in $U^{CE}(\Gamma)$. Consequently, the dimension of the face of $U^{CE}(\Gamma)$ containing $u(\nu)$ is at least $\dim(\operatorname{Im}(L_u))$. This bound need not be tight: the dimension of that face can strictly exceed $\dim(\operatorname{Im}(L_u))$, for instance when the entire face of $\operatorname{ce}(\Gamma)$ containing $\nu$ projects into the relative interior of a higher-dimensional face of the payoff polytope.

The proof of Proposition~\ref{prop_payoff_dimension} proceeds as follows. For any admissible support profile, we construct a game for which the linear system in Lemma~\ref{lm_cone} has full rank and the associated payoff map has rank $\min\{n,Q\}$. We then invoke an algebraic argument to conclude that these properties hold for generic games.

\begin{lemma}\label{lm_non_degenerate_game}
For any non-empty finite sets $S_i$, $i=1,\ldots,n$, with at least $12$ sets containing more than one element and satisfying $|S_i|-1\leq \sum_{j\ne i}(|S_j|-1)$ for all $i$, there exists a game $\Gamma^*$ with action sets $S_i$ such that
\begin{itemize}
    \item the uniform distribution $\nu$ is a Nash equilibrium;
    \item $\nu$ belongs to a face of $\ce(\Gamma^*)$ of dimension $Q$ given in~\eqref{eq_cone_bound};
    \item the image of the tangent space $T_\nu$ under the payoff map is of dimension $\min\{n,\,Q\}$.
\end{itemize}
\end{lemma}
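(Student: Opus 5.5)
The plan is to build $\Gamma^*$ explicitly from sparse payoff functions. The uniform distribution is $\nu$. For each agent $i$ and each pair $a_i\neq a_i'$ in $S_i$, the incentive constraint of Lemma~\ref{lm_cone} is a vector $g_{i,a_i,a_i'}\in\R^S$. It is supported on the slice $\{a_i\}\times S_{-i}$, with entries $u_i(a_i,a_{-i})-u_i(a_i',a_{-i})$. Set $m_i=|S_i|$.

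The second bullet says exactly that the $1+\sum_i m_i(m_i-1)$ vectors $\{g_{i,a_i,a_i'}\}\cup\{\mathbf 1\}$ are linearly independent; then $\dim T_\nu=Q$ and, since Lemma~\ref{lm_cone} gives $\dim T_\nu\geq Q$ anyway, the face has dimension exactly $Q$. The first bullet is the linear condition that $\sum_{a_{-i}}u_i(a_i,a_{-i})$ does not depend on $a_i$. I would aim for a configuration in which different agents' constraints have disjoint supports, so that independence splits agent by agent.

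The construction goes through ``lines''. A line in direction $i$ is a set $\ell=S_i\times\{a_{-i}\}$. For each mixing agent $i$, I choose a family $L_i$ of $m_i$ lines in direction $i$, such that all the lines in $\bigcup_i L_i$ are pairwise disjoint. This is the cylinder/line packing problem in the grid $S$ studied by \cite{felzenbaum1993packing}. Agent $i$'s payoff $u_i$ vanishes off $\bigcup L_i$. On $L_i$, viewed as an $m_i\times m_i$ matrix $M_i(a_i,\ell)$, I take any matrix with equal row sums whose rows are affinely independent, for instance a generic perturbation of the identity in the equal-row-sum space.

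With this choice, the following hold.
\begin{itemize}
\item Equal row sums give the Nash condition.
\item For fixed $(i,a_i)$, the $m_i-1$ vectors $g_{i,a_i,\cdot}$ are supported on the $m_i$ points of slice $a_i$ lying on $L_i$. There they equal $M_i(a_i,\cdot)-M_i(a_i',\cdot)$, which are independent by affine independence.
\item Different $a_i$ give disjoint slices, and different agents use disjoint lines, so all constraints are independent.
\item $\mathbf 1$ stays independent of them as soon as some profile is uncovered, i.e.\ $\sum_i m_i^2<\prod_i m_i$.
\end{itemize}
This last inequality, and the packing itself, is where I need at least $12$ mixers together with the polygon inequality. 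Non-mixing agents are treated as dummies.

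For the third bullet I use strategic equivalence. Adding $\delta_i(a_{-i})$ to $u_i$ leaves every $g$, and hence $T_\nu$, unchanged, while it changes the functional $u_i(\cdot)$ on $T_\nu$ by $\delta_i(\cdot)$. So $u_i|_{T_\nu}$ ranges over an affine translate of $P_i:=\{\delta_i|_{T_\nu}\}$. By \eqref{eq_SE_pairing}, $P_i$ is nonzero as long as some $\tau\in T_\nu$ has a nonzero $(-i)$-marginal. I will show that for generic $\delta$ the $n$ functionals $u_i|_{T_\nu}$ span a space of dimension $\min\{n,Q\}$. This is a rank condition on a matrix depending polynomially on $\delta$, so it suffices to exhibit one good choice. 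By a Rado/Hall-type argument it suffices to verify, for every set $J$ of agents, that $\dim\sum_{i\in J}P_i\geq\min\{|J|,Q\}$.

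Here the packing leaves room. A basis of $T_\nu$ can be taken with many elements concentrated on uncovered profiles, namely $\pm$ pairs with nonzero marginals in many directions. This should give $\dim\sum_{i\in J}P_i\geq\min\{|J|,Q\}$, using that $Q$ minus $\dim T_{S,0}$ is large as in Lemma~\ref{lm_4agents_zero_marginals}.

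I expect the main obstacle to be the combinatorial packing step. It requires showing that disjoint families of $m_i$ lines in each direction $i$ exist under only the polygon inequality and $k\geq 12$, including very unbalanced supports where one $m_i$ is nearly $\sum_{j\neq i}(m_j-1)$. I would first try a greedy packing, placing lines of the largest directions first in distinct hyperplane layers $\{a_j=c\}$. For the lines in the dominant direction, I would use the polygon inequality to find enough layers, and I expect the constant $12$ to enter through the counting in this step.
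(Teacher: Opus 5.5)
\textbf{Verdict: genuine gap.} The packing step is never established, and your third-bullet argument rests on an asserted inequality.

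\textbf{First two bullets.} Your handling of these is the paper's argument. A family $L_i$ of $m_i$ pairwise disjoint lines in direction $i$ is exactly an $i$-cylinder $C_i=S_i\times B_{-i}$ with $|B_{-i}|=m_i$. The paper's ``prediction'' payoff, where $u_i^*(a_i,a_{-i})=1$ exactly when $a_{-i}=f_i(a_i)$ for a bijection $f_i\colon S_i\to B_{-i}$, is your construction with $M_i$ the identity.

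\textbf{The gap: the packing.} It is the only place where the hypotheses ($k\ge 12$ mixers and the polygon inequality) are used, and you leave it open. A greedy placement needs a carefully specified rule to succeed. Take the admissible profile $m_1=m_2=M$ with ten binary agents. A direction-$2$ line meets every direction-$1$ line with the same coordinates $3,\ldots,12$. So once $M\ge 2^{10}$, direction-$2$ lines chosen without coordination can block all $M\cdot 2^{10}$ candidate direction-$1$ lines.

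The missing idea is a systematic separation scheme, which the paper supplies in Lemma~\ref{lm_packing}:
\begin{itemize}
\item With $k$ mixers, designate $\kappa=\lceil\log_2 k\rceil+1$ coordinates and split each of those $S_j$ into halves.
\item Let a codeword $f(i)\in\{0,1\}^\kappa$ choose the halves used by $C_i$.
\item Any two codewords must differ at a coordinate outside $\{i,i'\}$; coordinate $i$ cannot separate $C_i$, since $C_i$ contains all of $S_i$.
\item The words $e_i+e_{i+1}$ (indices mod $\kappa$) for $i\le\kappa$, and distinct words avoiding every $e_j$ and $e_j+e_{j+1}$ for $i>\kappa$, satisfy this.
\end{itemize}
This gives $|C_i|\ge 3^{-\kappa}\prod_j|S_j|\ge\prod_j|S_j|/(2k)^{\log_2 3}$. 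Lemma~\ref{lm_packing_quadratic} then bounds $|S_i|$ through the polygon inequality and uses $k\ge 12$ to obtain $|C_i|>|S_i|^2$. This also yields your uncovered profile.

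\textbf{Third bullet.} Here you take a different and heavier route, and its key step is only asserted. The detour is forced by allowing arbitrary $M_i$. Subtracting $1/m_i$ from every entry of the identity keeps row sums equal and rows affinely independent. Yet then every $\tau\in T_\nu$ is constant on each $\{a_i\}\times B_{-i}$, and $u_i(\tau)=0$ identically on $T_\nu$.

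With $M_i$ the identity, the paper simply computes. Write $c_i(a_i)$ for that constant; then $u_i^*(\tau)=\sum_{a_i}c_i(a_i)$. The kernel of the payoff map on $T_\nu$ is cut out by $\sum_{a_i}c_i(a_i)=0$ for each $i$, together with $\sum_{a\notin\cup_i C_i}\tau(a)=0$. These conditions act on disjoint sets of coordinates. So, with no dummies, $\dim\ker=Q-n$ and the image is $n$-dimensional. Dummies are then added with utilities chosen to shrink the kernel one dimension at a time.

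If you keep the strategic-equivalence/Rado argument, two things need fixing.
\begin{itemize}
\item Your appeal to $Q-\dim T_{S,0}$ only bounds $\dim\sum_{i\in N}P_i$, not $\dim\sum_{i\in J}P_i$ for small $J$. For small $J$, use $e_a-e_{a'}\in T_\nu$ with $a,a'$ uncovered. This gives $\dim P_i\ge |R|/m_i-1$ for the uncovered set $R$, and Hall's condition then follows from the size estimates.
\item You also need the easy scaling observation that the affine families $u_i|_{T_\nu}+P_i$ attain the same generic rank as the subspaces $P_i$.
\end{itemize}
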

To construct such a game we need the following combinatorial lemma on packing disjoint cylinders in a product set $S = S_1\times\cdots\times S_n$. For a subset $B_{-i}\subseteq S_{-i}$, we call $C_i = S_i\times B_{-i}$ an $i$-cylinder. In the following game-theoretic construction, $i$-cylinders capture the coordinates constrained by the incentive constraints of agent $i$, and we require these cylinders to be disjoint so that the corresponding incentive constraints act on different coordinates.

\begin{lemma}\label{lm_packing}
For sets $S_1,\ldots, S_n$ with $n\geq 3$ and $|S_i|\geq 2$, there exist disjoint $i$-cylinders $C_i$, $i=1,\ldots,n$, of sizes 
\begin{equation}\label{eq_pack_bound}
    |C_i|\geq\frac{1}{(2n)^{\log_2 3}} \prod_{j=1}^n |S_j|.
\end{equation}
\end{lemma}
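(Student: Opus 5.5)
The plan is a divide-and-conquer construction over the set of agents. Each halving of the agent set should cost a factor of $3$ in the guaranteed fraction. This matches the bound exactly, since $(2\cdot 2n)^{-\log_2 3}=\tfrac13(2n)^{-\log_2 3}$. Throughout, I measure sets by their fraction of $|S|=\prod_j|S_j|$, that is, by the uniform measure on $S$. I will prove by induction on $n$ a slightly stronger statement, with $f(n)=(2n)^{-\log_2 3}$: there exist disjoint $i$-cylinders $C_i$ with $|C_i|\ge f(n)|S|$ whose union $U=\bigcup_i C_i$ leaves a complement $S\setminus U$ of fraction at least $1/3$. This extra invariant is what lets two packings be glued together. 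The base cases are small $n$ (I will start the induction at $n=1,2$, and also check $n=3$ directly). There I build the cylinders by hand from a split of each $S_i$ into two nonempty parts. Each part has at least $\lfloor |S_i|/2\rfloor\ge |S_i|/3$ elements, which uses $|S_i|\ge 2$.

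The gluing step works as follows. Split the agents into two groups $A$ and $B$ of sizes $\lceil n/2\rceil$ and $\lfloor n/2\rfloor$. Apply the induction hypothesis separately to $S_A=\prod_{i\in A}S_i$ and to $S_B$. This gives disjoint families $C_i^A$ for $i\in A$ with union $U_A$, and $C_j^B$ for $j\in B$ with union $U_B$, and each complement has fraction at least $1/3$. Now define
\begin{equation*}
C_i = C_i^A\times (S_B\setminus U_B)\quad (i\in A),\qquad C_j=(S_A\setminus U_A)\times C_j^B\quad (j\in B).
\end{equation*}
Each $C_i$ is still an $i$-cylinder, because the new factor involves only coordinates other than $i$. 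Two sets within the same group are disjoint because their projections to that group are disjoint. A set from $A$ and a set from $B$ are disjoint because $C_i^A\subseteq U_A$ while the $B$-set's $A$-projection lies in $S_A\setminus U_A$. The sizes satisfy $|C_i|\ge f(\lceil n/2\rceil)\cdot\tfrac13\,|S|$. Up to the rounding of $n/2$, this is at least $f(n)|S|$. I expect the ceilings to be absorbable because the constant $2$ in $2n$ gives slack: I will verify $3(2n)^{-\log_2 3}\le (2\lceil n/2\rceil)^{-\log_2 3}$ fails only mildly. If it does fail, I will instead use an unbalanced split, or carry the bound in the form $f(n)\ge 3^{-\lceil\log_2 n\rceil}c$ and compare.

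The remaining issue is to restore the invariant that the new union leaves a complement of fraction at least $1/3$. The new union may be nearly all of $S$, since it has fraction up to $u_A(1-u_B)+u_B(1-u_A)$. I will handle this by trimming. Replacing $C_i=S_i\times B_{-i}$ by $S_i\times B'_{-i}$ with $B'_{-i}\subseteq B_{-i}$ keeps it an $i$-cylinder and keeps disjointness. So I can shrink each cylinder to fraction roughly $f(n)$, and the union then has fraction at most $n f(n)$, which is at most $2/3$ for all $n\ge 2$ because $n(2n)^{-\log_2 3}$ is small. The main obstacle I anticipate is integrality. Cylinder sizes are multiples of $|S_i|$, so the trimming cannot hit a prescribed fraction exactly. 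When some $|S_j|$ are small, for example equal to $2$, the granularity could break either the lower bound $f(n)$ or the complement bound $1/3$. My first attempt is to trim only down to the least achievable size at or above $f(n)|S|$. That overshoots by at most one fiber of size $|S_i|$, a fraction $1/|S_{-i}|$, which is negligible once $n\ge 3$ because $|S_{-i}|\ge 2^{n-1}$. If this bookkeeping fails in small cases, I will fall back to verifying those cases directly.
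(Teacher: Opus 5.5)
Your gluing step is sound. $C_i^A\times(S_B\setminus U_B)$ is an $i$-cylinder, and your disjointness argument is right. The invariant even propagates for free: the complement of the new union is $(U_A\times U_B)\cup\bigl((S_A\setminus U_A)\times(S_B\setminus U_B)\bigr)$. Its fraction $u_Au_B+c_Ac_B$ (with $u,c$ the union and complement fractions) is at least $1/3$ whenever $c_A,c_B\ge 1/3$, so no trimming is needed in the inductive step.

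The genuine gap is at the leaves of the recursion.
\begin{itemize}
\item For $n=2$, a nonempty $1$-cylinder $S_1\times B_{-1}$ and a nonempty $2$-cylinder $B_{-2}\times S_2$ always meet in $B_{-2}\times B_{-1}$. So the claim is false there, which is why the lemma needs $n\ge 3$.
\item For $n=1$, the only nonempty $1$-cylinder is $S$ itself, which leaves an empty complement.
\item Halving produces such groups ($3\to\{2,1\}$, $4\to\{2,2\}$, $5\to\{3,2\}$). Any group of size $1$ or $2$ forces some glued cylinder to be empty.
\item Checking $n=3$ by hand does not save the strengthened hypothesis. With $|S_1|=|S_2|=|S_3|=2$, every nonempty $i$-cylinder contains at least $2$ of the $8$ points. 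Three disjoint ones therefore leave a complement of fraction at most $1/4<1/3$.
\end{itemize}
The $n=3$ binary case is exactly where your estimate fails that one extra fiber, of fraction $1/|S_{-i}|$, is negligible once $n\ge 3$: there $1/|S_{-i}|=1/4$.

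The rounding problem is also real. $\tfrac13 f(\lceil n/2\rceil)<f(n)$ for every odd $n$, and no unbalanced split helps, because both parts would need size at most $n/2$. Your second suggestion does fix the arithmetic, since $\lceil\log_2\lceil n/2\rceil\rceil=\lceil\log_2 n\rceil-1$ and $3^{-\lceil\log_2 n\rceil}\ge 3^{-\log_2 n-1}=(2n)^{-\log_2 3}$. But the argument must then be reorganized in three ways, none of which the proposal carries out:
\begin{itemize}
\item Treat $n=3$ separately and without the invariant. Split $S_j=S_{j,0}\cup S_{j,1}$ into halves of sizes $\lfloor|S_j|/2\rfloor$ and $\lceil|S_j|/2\rceil$, and take $S_1\times S_{2,0}\times S_{3,1}$, $S_{1,1}\times S_2\times S_{3,0}$, $S_{1,0}\times S_{2,1}\times S_3$. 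Each has fraction at least $1/6$.
\item Recurse only for $n\ge 8$, so that both halves have at least $4$ agents.
\item For $4\le n\le 7$ and all $|S_i|\ge 2$, construct directly disjoint cylinders of fraction at least $3^{-\lceil\log_2 n\rceil}$ whose union covers at most $2/3$ of $S$. This is the only place where trimming and its integrality issues actually matter.
\end{itemize}

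The paper avoids recursion entirely with a one-shot code. It halves $k=\lceil\log_2 n\rceil+1$ coordinates simultaneously and assigns codewords: $e_i+e_{i+1 \bmod k}$ for $i\le k$, and other words outside $F$ for $i>k$. Any two cylinders are then separated at a coordinate other than their own indices, and $n\ge 3$ is used only to get $k\ge 3$. Note that this yields $|C_i|\ge 3^{-k}|S|$ with $2^k<4n$. The paper's closing step $2^k\le 2n$ holds only when $n$ is a power of two, so the exact constant is delicate on either route.
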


For binary $S_i = \{0,1\}$, cylinder packing was studied by~\cite*{felzenbaum1993packing}. To the best of our knowledge, Lemma~\ref{lm_packing} is the first result covering the non-binary and possibly asymmetric case.

The idea of the proof is to designate $k \simeq \log_2 n$ coordinates as separation coordinates, and to split each corresponding set $S_i$ into two halves. A binary code on $k$ bits then assigns to each cylinder which half to occupy at each separation coordinate, thereby ensuring separation. The remaining coordinates are common across all cylinders and contribute to volume. The main subtlety arises for cylinders $C_i$ where the index $i$ itself is one of the separation coordinates. That is, the code must allow recovery of the intended assignment even when one coordinate is unavailable.

\begin{proof}
We use $\lceil x \rceil$ and $\lfloor x \rfloor$ to denote the smallest integer above $x\in \R $ and largest integer below $x$, respectively, and use $[m]$ to denote the set $\{1,\ldots, m\}$.

Choose the first $k=\lceil\log_2 n\rceil+1$ coordinates as separation coordinates and designate the remaining $n-k$ as {capacity} coordinates. The assumption that $n\geq 3$ ensures that $3\leq k\leq n$. For each $j\in[k]$, we split  $S_j=S_{j,0}\,\cup\, S_{j,1}$ so that  $|S_{j,0}|=\Big\lfloor\frac{|S_j|}{2}\Big\rfloor$ and $|S_{j,1}|=\Big\lceil\frac{|S_j|}{2}\Big\rceil.$

We now construct a mapping $f:[n]\to\{0,1\}^{k}$. The vector $f(i)$ can be thought of as a codeword, and is effectively a vector of dummy variables indicating, for each $j \in [k]$, whether the cylinder $C_i$ contains $S_{j,0}$ or $S_{j,1}$ in its $j$-th coordinate. Namely, for each $i\in[n]$, we define:
\begin{equation*}
B_{-i}\ =\ \Big(\prod_{\substack{1\le j\le k\\ j\ne i}} S_{j,\,f_j(i)}\Big)\times \Big(\prod_{\substack{k< j\le n\\ j\ne i}} S_j\Big),
\qquad
C_i\ =\ S_i\times B_{-i}\ \subset\ S.
\end{equation*}

We construct the mapping $f$ so that it respects the following separation property:
\begin{equation}\label{eq:sep_prop}
\forall\, i\ne i'\ \ \  \exists\, j\in[k]\setminus\{i,i'\}\quad \text{ such that }\quad f_j(i)\ne f_j(i').
\end{equation}
This separation property guarantees that the cylinders are disjoint. 
To see this, fix any $i \ne i'$ and let $j \in [k]\setminus\{i,i'\}$ be the index whose existence is guaranteed by~\eqref{eq:sep_prop}. The projections of $C_i$ and $C_{i'}$ onto the $j$-th coordinate are $S_{j,0}$ and $S_{j,1}$ and since these sets are disjoint, we conclude $C_i \cap C_{i'} = \varnothing$.

Specifically, we define $f(i)=e_i+e_{i+1\bmod k}$ for $i\in[k]$, where $e_1,\ldots, e_k$ are standard basis vectors. We then pick distinct $f(i)$ for all $i>k$ in the set $\{0,1\}^k\setminus  F$, where
\begin{equation*} F = \big\{e_j,\  j\in[k]\big\}\ \cup\  \big\{f(j), \ j\in[k]\big\}.\end{equation*} 
Since $F$ contains $2k$ elements, that is the number of excluded codewords, leaving $2^k-2k$ possible codewords for $n-k$ values of $i>k$. Since $k=\lceil\log_2 n\rceil+1$, we have $2^k\ge 2n$, hence $2^k-2k\ge 2n-2k\geq n-k$, so the assignment is possible.

We now verify that \eqref{eq:sep_prop} holds for the chosen $f$. For $i,i'\in [k]$, the non-zero entries are  $\{i,\ (i+1) \bmod k\}$ and $\{i', \ (i'+1)\bmod k\}$. Since $k\ge 3$, the union of these two sets contains some $j\notin\{i,i'\}$ such that $f_j(i) \neq f_j(i')$. Suppose $i\le k<i'$. Then, $j \neq i'$ holds automatically for any $j\in[k]$. 
By construction, the vectors $e_{i+1}$ and $e_i+e_{i+1\bmod k}$ are excluded from the codewords assigned to indices $i'>k$. Therefore, either $f(i')$ takes the value $1$ at some entry $j\in [k]\setminus\{i,(i+1) \bmod k\}$, in which case $f_j(i)\neq f_j(i')$, or $f(i')\equiv 0$, in which case $f_{j}(i)=1\neq 0=f_{j}(i')$ for $j=(i+1) \bmod k$. Finally, consider $i,i'>k$. By construction, $f(i)\neq f(i')$, so  there exists $j\in[k]$ such that $f_j(i)\neq f_j(i')$. Since $j\in[k]$, we have $j\notin\{i,i'\}$.

It remains to bound the size of each constructed cylinder. For each $i$, we restrict at most $k$ separation coordinates, yielding
\begin{equation*}
|C_i|\ \ge\ \Big(\prod_{j=1}^n |S_j|\Big)\,\prod_{j=1}^{k}\frac{\big\lfloor |S_j|/2\big\rfloor}{|S_j|}.
\end{equation*}
Since $\lfloor m/2\rfloor = m/2$ for even $m$ and $\lfloor m/2\rfloor \ge \frac{m-1}{2}= \frac{m}{2}\left(1-1/m\right)$ for odd $m$,
\begin{equation*}
|C_i|\ \ge\ 2^{-k}\left(\prod_{\substack{1\le j\le k\\ |S_j|\ \text{is odd}}}\big(1-\tfrac{1}{|S_j|}\big)\right)\,\prod_{j=1}^n |S_j|.
\end{equation*}
If $|S_j|$ is odd, then $|S_j|\geq 3$ and thus $(1-1/|S_j|)\geq \frac{2}{3}$. We obtain 
\begin{equation*}
|C_i|\ \ge\ 3^{-k}\,\prod_{j=1}^n |S_j|.
\end{equation*}
Since $2^k\leq 2n$, we get $3^k\leq (2n)^{\log_2 3}$, which results in~\eqref{eq_pack_bound}.
\end{proof}

The following result, which will be useful in proving Lemma~\ref{lm_non_degenerate_game}, is deduced from Lemma~\ref{lm_packing} under the additional assumptions that $n\geq 12$ and the sizes $|S_i|$ satisfy the polygon inequality.

\begin{lemma}\label{lm_packing_quadratic}
For sets $S_1,\ldots, S_n$ with $n\geq 12$ satisfying
$|S_i|\geq 2$ and
$|S_i|-1\leq \sum_{j\ne i } (|S_j|-1),$ there exist disjoint $i$-cylinders $C_i=S_i\times B_{-i}$, $i=1,...,n$, of sizes 
    $|C_i|> |S_i|^2.$
\end{lemma}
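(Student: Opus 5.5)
The plan is to rerun the construction from the proof of Lemma~\ref{lm_packing} rather than use the bound~\eqref{eq_pack_bound}, with two choices made explicit: which coordinates serve as separation coordinates, and a sharper count of cylinder sizes. A direct use of~\eqref{eq_pack_bound} looks too lossy near $n=12$. There the factor $(2n)^{\log_2 3}$, or the $3^{-k}$ the proof actually delivers with $k=\lceil\log_2 n\rceil+1=5$, is comparable to $2^{n-1}$, while the polygon inequality still allows $|S_i|$ to be as large as the sum of the other excesses. So I would not work through the constant in~\eqref{eq_pack_bound}.

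\textbf{Step 1 (construction).} Write $m_j=|S_j|$ and $x_j=m_j-1\ge 1$. Relabel the agents so that the $k=\lceil\log_2 n\rceil+1$ smallest sets are the separation coordinates; the remaining $c=n-k$ are capacity coordinates. Keep the codeword map $f$ and the cylinders $C_i=S_i\times B_{-i}$ exactly as in the proof of Lemma~\ref{lm_packing}. The separation property~\eqref{eq:sep_prop} does not depend on which coordinates are designated, so the cylinders remain disjoint.

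\textbf{Step 2 (size count).} In $C_i$ the $i$-th coordinate is unrestricted, each separating coordinate $j\ne i$ contributes $|S_{j,f_j(i)}|\ge\lfloor m_j/2\rfloor\ge 1$, and each capacity coordinate $j\neq i$ contributes $m_j$. Hence
\begin{equation*}
|C_i|\;\ge\; m_i\prod_{\substack{k<j\le n\\ j\ne i}} m_j ,
\end{equation*}
and it suffices to show $\prod_{k<j\le n,\,j\ne i} m_j> m_i$.

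\textbf{Step 3 (the inequality).} Let $X=\max_{j>k}x_j$. Because the separating sets are the smallest, every $x_j\le X$. The polygon inequality then gives $m_i\le 1+\sum_{j\ne i}x_j\le 1+(n-1)X$. The product above has at least $c-1$ factors, each at least $2$, and one of them can be taken to be $1+X$ unless $i$ is the maximizer, in which case the bound $m_i=1+X$ is trivial. This gives
\begin{equation*}
\prod_{\substack{k<j\le n\\ j\ne i}} m_j\;\ge\;2^{c-2}(1+X).
\end{equation*}
So it suffices that $2^{c-2}(1+X)>1+(n-1)X$, which holds whenever $2^{c-2}\ge n-1$ and $X\ge1$, since then $2^{c-2}(1+X)\ge (n-1)(1+X)>1+(n-1)X$. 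Strictness comes from $n-1>1$.

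\textbf{Step 4 (numerics).} The remaining check is $2^{\,n-\lceil\log_2 n\rceil-3}\ge n-1$ for $n\ge 12$. At $n=12$ the left side is $2^{12-4-3}=32\ge 11$. For larger $n$, the left side at least doubles every time $n$ increases by one without crossing a power of two, while the right side grows by one; at a crossing, $n-\lceil\log_2 n\rceil$ merely stays constant. Hence the inequality persists for all $n\ge 12$.

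The main obstacle is Step 3: controlling $m_i$ against the product of the other sizes when the sizes are very unequal. Choosing the separating sets to be the smallest is what makes the single parameter $X$ bound everything, so that the polygon inequality turns into a linear bound in $X$. If the case where $i$ is itself a separating or maximal coordinate gives trouble, I would fall back on estimating $m_i$ directly by the capacity-coordinate product, using $\prod(1+x_j)\ge 1+\sum x_j$ together with the surplus factor $2^{c-2}$.
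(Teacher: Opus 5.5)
Your proof is correct up to two easily repaired details, and it takes a genuinely different route from the paper's.

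\textbf{Comparison with the paper.} The paper uses Lemma~\ref{lm_packing} only through its stated bound \eqref{eq_pack_bound}. It divides by $|S_i|^2$, replaces $|S_i|$ by the polygon bound $1+\sum_{j\ne i}(|S_j|-1)$, and solves the resulting extremal problem, whose worst case is $|S_i|=n$ with all other sets of size $2$. This leaves $2^n>(2n)^{1+\log_2 3}$, which at $n=12$ holds with a margin of only about $0.15$ in $\log_2$.

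You instead reopen the construction. You place the separation coordinates on the $k$ smallest sets, discard their contribution entirely, and beat the polygon bound $m_i\le 1+(n-1)X$ with the capacity product alone. Everything then reduces to $2^{\,n-\lceil\log_2 n\rceil-3}\ge n-1$, which has ample slack ($32\ge 11$ at $n=12$).

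The paper's route is modular: it needs nothing about how the cylinders are built. Yours relies on the internals, namely that \eqref{eq:sep_prop} does not care which coordinates are designated as separating, but it is robust.

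Your suspicion of \eqref{eq_pack_bound} is well founded. The last line of the proof of Lemma~\ref{lm_packing} uses $2^k\le 2n$, whereas $k=\lceil\log_2 n\rceil+1$ gives $2^k\ge 2n$. So that proof only delivers the factor $3^{-k}$, which at $n=12$ is $1/243$ rather than $24^{-\log_2 3}\approx 1/154$. With this factor, the sufficient condition the paper reduces to fails at $n=12$ for $|S_i|=12$ and eleven sets of size $2$: $12\cdot 2^{11}/243\approx 101<144$. Your argument never touches that constant.

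\textbf{Two points to tighten.}

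First, in Step~3 the case where $i$ is the capacity maximizer is not trivial as written, since the product then lacks the factor $1+X$. The cleanest fix is to define $X=\max_{j\ne i}x_j$ from the outset. Because the separating sets are the smallest and $c\ge 2$, this maximum is attained at a capacity coordinate other than $i$. Hence $m_i\le 1+(n-1)X$, and the product over capacity coordinates $j\ne i$ is at least $2^{c-2}(1+X)$ in every case.

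Second, in Step~4, at a crossing $n=2^p\to 2^p+1$ the left side stays fixed while the right side increases, so you need slack there. The slack exists: at $n=2^p$ the left side is a power of two that is at least $2^p-1$, hence at least $2^p$.
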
 
\begin{proof}
By Lemma~\ref{lm_packing}, we know that there exist disjoint $C_i$ of size given by the right-hand side of~\eqref{eq_pack_bound}. Thus, it suffices to demonstrate that under our assumptions, $|S_i|^2$ does not exceed this right-hand side. That is, it suffices to show that, for all $i$, 
\begin{equation*}     |S_i|^2<\frac{1}{(2n)^{\log_2 3}} \prod_{j=1}^n |S_j|.
\end{equation*} 
Dividing both sides by $|S_i|^2$ and taking into account that $|S_i|\leq 1+\sum_{j\ne i } (|S_j|-1)$ from the polygon inequality, it suffices to verify that
\begin{equation*}     
1<\frac{1}{(2n)^{\log_2 3}}\  \frac{\prod_{j\ne i} |S_j|}{1+\sum_{j\ne i } (|S_j|-1)}.
\end{equation*} 
Denote the value of the denominator by $M$, which satisfies $M\geq n$ since $|S_j|\geq 2$ for all $j$. For a given value of $M$, the minimal value of the right-hand side is attained when all $|S_j|$ with $j\ne i$ are equal to $2$ except one, which equals $M-n+2$. In this case, the right-hand side is equal to
\begin{equation*} \frac{2^{n-2}(M-n+2)}{M(2 n)^{\log_2 3}}.\end{equation*} 
The minimum over $M$ is attained at $M=n$ and is equal to
\begin{equation*} 
\frac{2^{n}}{(2 n)^{1+\log_2 3}}.
\end{equation*} 
Therefore, it suffices to show that this ratio is bigger than $1$ for $n\geq 12$. Equivalently, we need to show that $f(n)> 0$, where
\begin{equation*}
f(x)=\log_2\!\left(\frac{2^{x}}{(2x)^{1+\log_2 3}}\right).
\end{equation*}
A direct evaluation gives $f(12)\approx 0.15>0$. The derivative of $f$,
\begin{equation*}
f'(x)=1-\frac{1+\log_2 3}{x \ln 2},
\end{equation*}
is increasing in $x$ and $f'(12)\approx 0.69>0$. Thus, $f(x)$ is increasing for $x\geq 12$ and  $f(n)>0$ for all $n\ge 12$.
\end{proof}

With the help of Lemma~\ref{lm_packing_quadratic}, we can now prove lemma~\ref{lm_non_degenerate_game}.

\begin{proof}[Proof of Lemma~\ref{lm_non_degenerate_game}]
The game $\Gamma^*$ is constructed as follows. Agent $i$'s action set is $S_i$. First, we assume that each $S_i$ contains at least two elements; that is, there are no dummy agents who have utilities over action profiles but do not take any strategic actions.

By Lemma~\ref{lm_packing_quadratic}, for each $i$ we can pick $B_{-i}\subset S_{-i}$ such that $C_i=S_i\times B_{-i}$ are disjoint $i$-cylinders, each of size at least $|S_i|^2$. By shrinking each $B_{-i}$ if necessary, we may assume without loss of generality that $|B_{-i}|=|S_i|$.

Let $f_i$ be a bijection $S_i\to B_{-i}$. Define $i$'s utility by
\begin{equation*} 
u_i^*(a_i, a_{-i})=\left\{\begin{array}{cc} 1,  & a_{-i}=f_i(a_i)  \\
   0,  & \text{otherwise}
\end{array}
\right.
\end{equation*} 
In words, agent~$i$ aims to predict the profile of actions of others. She can only announce one of $|S_i|$ predictions from $B_{-i}\subset S_{-i}$, and gets utility of $1$ if she accurately predicts what the others do and $0$, otherwise.

If others randomize over all profiles uniformly, no prediction is better than the other. Hence, each agent playing the uniform distribution $\nu_i$ over $S_i$ is a Nash equilibrium of this game.

Now consider the tangent space $T_\nu$ at $\nu$ described in Lemma~\ref{lm_cone}. The incentive constraint 
\begin{equation*} 
\sum_{a_{-i}} \tau(a_i,a_{-i}) (u_i^*(a_i,a_{-i}) - u_i^*(a_i',a_{-i})) = 0, \quad \forall i, \forall a_i \neq a_i' \in S_i
\end{equation*} 
reduces to 
\begin{equation*} 
\tau\big(a_i,f_i(a_i)\big)-\tau\big(a_i,f_i(a_i')\big)=0, \quad \forall i, \forall a_i \neq a_i' \in S_i.
\end{equation*} 
That is, $\tau(a_i,\,\cdot\,)$ is equal to some constant $c_i(a_i)$ on $B_{-i}$. Since the sets $S_i \times B_{-i}$ do not overlap, these constancy constraints
act on disjoint coordinates of $\tau$ and are therefore linearly independent across
different $i$. The condition $\sum_a \tau(a)=0$  boils down to 
\begin{equation}\label{eq_zero_weight}
    \sum_{a\not\in \cup_i S_i\times B_{-i}} \tau(a) + \sum_i |S_i|\sum_{a_i}  c_i(a_i)=0.
\end{equation}
Accordingly choosing $c_i(a_i)\in \R$ and $\tau(a)\in\R$ for $a\not\in \cup_i S_i\times B_{-i}$ so that~\eqref{eq_zero_weight} holds, we obtain all the vectors from the tangent space. We conclude that its dimension is given by the right-hand side of~\eqref{eq_cone_bound}
\begin{equation*}
    \dim (T_\nu)=\sum_i |S_i| + \left(\prod_i |S_i|-\sum_i |S_i|^2\right)-1=Q.
\end{equation*}
Now consider the image of $T_\nu$ under the payoff map $L_{u^*}\colon T_\nu\to \R^n$ mapping $\tau$ to $(u_1^*(\tau), \dots, u_n^*(\tau))$. The dimension of the face of the payoff polytope $U^{CE}(\G)$ containing $u^*(\nu)$ is the dimension of the image of this map, $\dim(\text{Im}(L_{u^*}))$.

By the rank-nullity theorem, $\dim(\text{Im}(L_{u^*})) = \dim(T_\nu) - \dim(\ker(L_{u^*}))$. A vector $\tau\in T_\nu$ is in the kernel if 
$\sum_a u_i^*(a)\tau(a)=0$ for all $i$. Since $u_i^*(a)$ only equals $1$ on $a=(a_i, f_i(a_i))$ and $\tau(a)=c_i(a_i)$ on such $a$, we get
\begin{equation}\label{eq_zero_component_weight}
\sum_{a_i} c_i(a_i)=0.
\end{equation}
and thus~\eqref{eq_zero_weight} simplifies to
\begin{equation}\label{eq_zero_weight_simplified}
       \sum_{a\not\in \cup_i S_i\times B_{-i}} \tau(a)=0.
\end{equation}
We note that the left-hand side of~\eqref{eq_zero_weight_simplified} is not identically zero since $S\setminus \cup_i S_i\times B_{-i}\ne \varnothing$. To see this, note that by 
Lemma~\ref{lm_packing_quadratic}, $S$ admits a partition into cylinders $C_i$ with $|C_i| > |S_i|^2$ 
for each $i$, which gives $|S| = \sum_i |C_i| > \sum_i |S_i|^2$. On the other hand, 
$|\bigcup_i S_i \times B_{-i}| = \sum_i |S_i|^2$, so the complement is nonempty.

We obtain that $\tau $ is from the kernel if and only if $c_i(a_i)\in \R$ and $\tau(a)\in \R$ for $a\not\in \cup_i S_i\times B_{-i}$ are chosen arbitrary so that~\eqref{eq_zero_component_weight} and~\eqref{eq_zero_weight_simplified} are satisfied. As these conditions constrain disjoint sets of coordinates, we obtain that
\begin{equation*} 
\dim(\ker(L_{u^*}))=\sum_i (|S_i|-1) + \left(\prod_i |S_i|-\sum_i |S_i|^2\right)-1=Q-n.
\end{equation*} 
We conclude that $\dim(\text{Im}(L_{u^*}))=n$ and so $u^*(\nu)$ is contained in the face of the payoff polytope $U^{CE}(\G^*)$ of dimension~$n$, that is, in its interior.

It remains to consider the case when some agents are dummy agents with $|S_i|=1$, who make no strategic decisions but still receive some payoffs. Suppose the total number of agents is $n'$ where agents $1,\ldots, n$ with $n<n'$ are not dummies and $n+1,\ldots n'$ are dummies.

Consider the game $\Gamma^*$ constructed for agents $1,\ldots, n$ as above. Now we add $n'-n$ dummy agents sequentially. These agents do not make any strategic decisions and thus do not contribute to incentive constraints and to the dimension of $T_\nu$. Now we choose utilities $u_i^*$ of dummy agents. There are no restrictions on them and thus they are just arbitrary functionals on $\R^S$. We pick these utilities so that $\dim(\ker(L_{u^*}))$ decreases by one with each new dummy agent unless $\dim(\ker(L_{u^*}))$ is already zero. By doing so we ensure that 
$\dim(\ker(L_{u^*}))=\sum_i (|S_i|-1) + \left(\prod_i |S_i|-\sum_i |S_i|^2\right)-1=\max\{Q-n',0\}$ and thus  
$\dim(\text{Im}(L_{u^*}))=\min\{n',\,Q\}$.
Hence $u^*(\nu)$ lies in a face of dimension at least $\min\{n',\,Q\}$ of the payoff polytope.
\end{proof}

Using an algebraic argument, we can now prove
Proposition~\ref{prop_payoff_dimension} by showing that a generic game shares
the properties of the game constructed in Lemma~\ref{lm_non_degenerate_game}.

\begin{proof}[Proof of Proposition~\ref{prop_payoff_dimension}]
Let $\nu$ be a Nash equilibrium of a generic $n$-agent game $\G$ with support
$S = S_1\times\cdots\times S_n$. Recall that the dimension of the face of the
payoff polytope $U^{CE}(\G)$ containing $u(\nu)$ is at least
$\dim\operatorname{Im}(L_u)$, where $L_u\colon T_\nu\to\R^n$ is the linear
map defined by $L_u(\tau) = (u_1(\tau),\ldots,u_n(\tau))$ with
$u_i(\tau)=\sum_a u_i(a)\tau(a)$. By the rank-nullity theorem,
\begin{equation*}
\dim\operatorname{Im}(L_u) = \dim T_\nu - \dim(\ker(L_u)),
\end{equation*}
so it suffices to determine the generic values of $\dim T_\nu$ and
$\dim(\ker(L_u))$ separately.

By the discussion in Appendix~\ref{app_polygon}, in a generic game all Nash
equilibria are regular. For such a game, Lemma~\ref{lm_polygon} implies that
the support sizes satisfy the polygon inequality. Fix a support profile $S$
with $|S_i|\geq 2$ for at least $12$ agents, and set
$M = 1+\sum_{i}|S_i|(|S_i|-1)$ and $Q = |S|-M$.

Let $B_S(u)$ denote the $M\times|S|$ coefficient matrix of the
support-restricted linear system from Lemma~\ref{lm_cone}, so that
$T_\nu = \ker B_S(u)$ whenever $u$ admits a regular equilibrium with support
$S$. The entries of $B_S(u)$ are linear in $u$. By
Lemma~\ref{lm_non_degenerate_game}, there exists a utility profile $u^*$ with
$\operatorname{rank}B_S(u^*)=M$. Hence the polynomial $p_S(u)$, defined as
the sum of squares of all $M\times M$ minors of $B_S(u)$, is not identically
zero. On the complement of the algebraic set $\{p_S=0\}$, the matrix
$B_S(u)$ has full row rank, and therefore $\dim T_\nu = |S|-M = Q$.

To compute the generic value of $\dim(\ker(L_u))$, let $U_S(u)$ be the
$n\times|S|$ matrix whose $i$-th row is $(u_i(a))_{a\in S}$. For a regular
equilibrium with support $S$,
\begin{equation*}
\ker(L_u) = \ker\begin{bmatrix}B_S(u)\\U_S(u)\end{bmatrix}.
\end{equation*}
Set $M' = \min\{|S|,\,M+n\}$. By Lemma~\ref{lm_non_degenerate_game}, there
exists $u^*$ for which the stacked matrix has rank $M'$. Hence the polynomial
$q_S(u)$, defined as the sum of squares of all $M'\times M'$ minors of the
stacked matrix, is not identically zero. Outside the algebraic set
$\{q_S=0\}$, the stacked matrix has rank $M'$, so
$\dim(\ker(L_u)) = |S|-M' = \max\{0,\,Q-n\}$.

Combining the two computations,
\begin{equation*}
\dim\operatorname{Im}(L_u) = \dim T_\nu - \dim(\ker(L_u)) = \min\{n,\,Q\}.
\end{equation*}
Since there are only finitely many support profiles $S$, excluding finitely many algebraic sets $\{p_S=0\}$ and $\{q_S=0\}$ and intersecting
with the full-measure set of utility profiles under which Nash equilibria are all regular yields an open set of full measure
on which every Nash equilibrium with at least $12$ mixing agents satisfies
the stated conclusion.
\end{proof}

Proposition~\ref{prop_Pareto} follows directly by combining Proposition~\ref{prop_payoff_dimension} with the bound in Theorem~\ref{th_almost_pure_appendix} on the dimension of a face of the correlated equilibrium polytope that contains a Nash equilibrium.
\begin{proof}[Proof of Proposition~\ref{prop_Pareto}]
Consider a Nash equilibrium $\nu$ in a generic $n$-agent game $\Gamma$ in which at least $k\geq 8+\log_2 n$ agents randomize their actions. Since $k\le n$, it follows that $n\geq 8+\log_2 n$; in particular, both $k$ and $n$ are at least $12$. Hence we may apply Proposition~\ref{prop_payoff_dimension} to conclude that $u(\nu)$ lies on a face of the correlated-equilibrium payoff polytope $U^{CE}(\Gamma)$ of dimension at least $\min\{n,\, Q\}$. It remains to show that $Q\geq n$. By Proposition~\ref{prop_payoff_dimension}, $Q$ is the dimension of the face of $\ce(\Gamma)$ containing $\nu$. By Theorem~\ref{th_almost_pure_appendix}, we have the lower bound
\begin{equation*} Q\geq 2^{k-3}.\end{equation*} 
Since $k\geq 8+\log_2 n$, we get
$Q\geq 32 n$
and therefore $\min\{n,\,Q\}=n$. Thus, $u(\nu)$ lies in the interior of the correlated-equilibrium payoff polytope $U^{CE}(\Gamma)$.
\end{proof}

\section{Proofs Pertaining to Symmetric Games}\label{app_symm_games}

We start by proving the following proposition, which implies 
Corollary~\ref{cor_not_extreme_totally_mixed}. 
\begin{proposition}\label{prop_symmetric}
Let $\Gamma$ be a symmetric game with $n\geq 3$ agents. If $\nu$ is a symmetric totally-mixed Nash equilibrium, then $\nu$ is not extreme within the set of symmetric correlated equilibria.   
\end{proposition}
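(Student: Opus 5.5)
The plan is to work in the tangent-space framework of Lemma~\ref{lm_cone}, restricted to symmetric directions. Since $\nu$ is totally mixed, $S=A$, the quasi-strictness requirement \eqref{eq_quasistrict} is vacuous, and no positivity constraint binds. So $\nu$ is non-extreme within the symmetric correlated equilibria as soon as there is a nonzero \emph{symmetric} $\tau\in\R^A$ that satisfies the binding incentive equalities of \eqref{eq_lin_independent_delta} and has zero total mass. Given such a $\tau$, both $\nu\pm\varepsilon\tau$ are symmetric correlated equilibria for small $\varepsilon>0$, and $\nu$ is their midpoint.

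\textbf{Step 1 (reduction to one agent).} I will first verify that, for symmetric $\tau$ in a symmetric game, the incentive equalities of agent~$1$ imply those of every agent $i$. This follows by applying the permutation swapping $1$ and $i$ to the identity $u_i(a)=u_{\pi(i)}(a_{\pi(1)},\ldots,a_{\pi(n)})$. The symmetric tangent space $T^{\mathrm{sym}}_\nu$ is then the set of symmetric $\tau$, equivalently functions on multisets of size $n$ from the $m$ actions, satisfying at most $m(m-1)$ linear equations for agent $1$ plus the mass condition.

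\textbf{Step 2 (symmetrization).} As a complementary route, I will use Corollary~\ref{cor_fully_mixed_asymmetric}, which gives a nonzero $\tau\in T_\nu$. Since the game and $\nu$ are invariant under permutations of agents, every permuted direction $\tau\circ\pi$ also lies in $T_\nu$. Hence the average $\bar\tau=\frac{1}{n!}\sum_\pi \tau\circ\pi$ lies in $T^{\mathrm{sym}}_\nu$. It therefore suffices to show that the averaging projection $P$ does not annihilate all of $T_\nu$, i.e.\ $T_\nu\not\subseteq\ker P$.

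\textbf{Step 3 (dimension count).} The naive count compares $\dim\{\text{symmetric }\tau\}=\binom{n+m-1}{m-1}$ with $m(m-1)+1$ constraints. This already suffices for $n\ge 4$ (the left side grows like $m^{n-1}$) and for $n=3$ with $m\le 4$. For $n=3$ and $m\ge5$ it fails, for example $21$ versus $21$ when $m=5$.

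\textbf{Step 4 (main obstacle).} The hardest step is the case $n=3$ with large $m$, where the count of Step~3 is not enough. I would try first to exhibit explicit linear dependencies among agent~$1$'s symmetric constraints. For fixed recommended action $a$, the constraints involve only the slice $\tau(a,\cdot,\cdot)$, which is itself a symmetric function on pairs. The $m-1$ deviation constraints for that $a$ are pairings of this slice against the payoff differences $u_1(a,\cdot)-u_1(a',\cdot)$. The Nash indifference $\sum_{a_{-1}}\nu(a,a_{-1})\bigl(u_1(a,a_{-1})-u_1(a',a_{-1})\bigr)=0$ shows that the symmetric direction $\nu_1(a)\,\nu_{1}^{\otimes 2}$ on each slice is automatically orthogonal to these differences. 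I would use this to build candidate symmetric directions of the form $\tau(a_1,a_2,a_3)=\sum_j c(a_j)\prod_{\ell\ne j}\nu_1(a_\ell)$ and their second-order analogues, and count how many constraints are then automatically satisfied, gaining roughly $m$ extra dimensions.

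If that fails, the fallback is to produce a symmetric Nash-type curve $\nu_1^{\otimes n}\pm$ an exchangeable perturbation via the urn/mixture-of-i.i.d.\ parametrization. Concretely, I would look for a nontrivial symmetric direction $\tau$ with vanishing two-dimensional marginals, so that every agent-$1$ pairing against a function of $a_{-1}$ that is a sum of single-coordinate functions vanishes. I expect this step to require the most care.
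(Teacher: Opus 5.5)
As written, your proposal does not prove the case $n=3$, $m\geq 5$: you declare the count of Step~3 insufficient there, and Step~4 offers only tentative constructions. The obstruction is an arithmetic slip. Symmetric $\tau$ are indexed by multisets of size $n$ drawn from $m$ actions, so their space has dimension $\binom{n+m-1}{m-1}=\binom{n+m-1}{n}=\frac{m(m+1)\cdots(m+n-1)}{n!}$. This is a polynomial of degree $n$ in $m$, not of order $m^{n-1}$. For $n=3$, $m=5$ it equals $\binom{7}{4}=35$, not $21$; the value $21=\binom{7}{2}$ is what $\binom{n+m-1}{n-1}$ gives, i.e.\ with $n$ and $m$ swapped. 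In fact
\[
\binom{m+2}{3}-m(m-1)-1=\frac{(m-1)\bigl((m-1)^2+5\bigr)}{6}\geq 1\qquad (m\geq 2),
\]
and $\binom{n+m-1}{m-1}$ increases in $n$. So Steps~1 and~3 already give a space of symmetric $\tau$ satisfying agent~$1$'s $m(m-1)$ equalities and the mass condition of dimension at least $\binom{n+m-1}{m-1}-m(m-1)-1>0$, for every $n\geq 3$ and $m\geq 2$. Together with your correct reductions, this is a complete proof. Those reductions are: full support makes positivity and quasi-strictness vacuous, every incentive constraint binds at a totally mixed equilibrium, and symmetry transfers agent~$1$'s equalities to all agents.

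Step~4 should simply be dropped, and its fallback would not have worked anyway. A symmetric $\tau$ with vanishing two-dimensional marginals only annihilates functions of $(a_2,a_3)$ that are sums of single-coordinate functions. The payoff difference $u_1(a,\cdot)-u_1(a',\cdot)$, by contrast, is an arbitrary symmetric function of $(a_2,a_3)$. Step~2 is circular. The averaging map $P$ fixes symmetric vectors and maps $T_\nu$ into the symmetric tangent space, so $T_\nu\not\subseteq\ker P$ is equivalent to the claim itself.

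Once the count is corrected, your route is the paper's argument in dual form. The paper applies Winkler's theorem to the exchangeable distributions cut out by agent~$1$'s $m(m-1)$ constraints (Lemma~\ref{lm_extreme_symmetric_correlated}). This shows an extreme symmetric correlated equilibrium mixes at most $m(m-1)+1$ urn distributions. A full-support i.i.d.\ product charges every frequency class, so it needs all $\binom{n+m-1}{m-1}$ of them (Lemma~\ref{lm_extreme_exchangeable_product_decomposition}). The paper then checks the same inequality you would. The urn distributions have disjoint supports and form a basis of the symmetric signed measures, so your tangent-space dimension count is exactly the linear algebra behind Winkler's bound. Your version also gives an explicit lower bound on the dimension of the face of symmetric correlated equilibria through $\nu$. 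The paper's phrasing, in turn, feeds directly into the urn representation of Proposition~\ref{prop_structure_of_extreme_symmetric_CE}.
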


This proposition does not follow from the preceding analysis. The set of symmetric correlated equilibria is generally smaller than the set of all correlated equilibria, so $\nu$ 
may potentially be non-extreme in the larger set while remaining extreme in the symmetric subset.

Let $X$ be a finite set with $m$ elements. A distribution $\mu$ over an $n$-fold product space $X^n = X\times \ldots \times X$ is called \textbf{exchangeable} if for any element $(x_1,\ldots, x_n) \in X^n$ and any 
     permutation $\pi$, we have
     \begin{equation*}
\mu(x_1,\ldots, x_n)=\mu(x_{\pi(1)},\ldots, x_{\pi(n)}).
\end{equation*}
The set of exchangeable distributions is a convex set. Furthermore, any symmetric correlated equilibrium is an exchangeable distribution.
\begin{lemma}\label{lm_extreme_symmetric_correlated}
    If $\mu$ is an extreme point of the set of symmetric correlated equilibria, then $\mu$ can be represented as a convex combination of at most 
    \begin{equation*}
m(m-1)+1
\end{equation*}
extreme exchangeable distributions, where $m$ is the number of actions of each agent.
\end{lemma}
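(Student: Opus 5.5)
The plan is to view the set of symmetric correlated equilibria as the intersection of the convex polytope $\mathcal{E}$ of exchangeable distributions on $A^n$ (with $A=A_1=\cdots=A_n$, $|A|=m$) with at most $m(m-1)$ linear inequalities, and then apply Winkler's extreme-point principle \citep{winkler1988extreme}. The first step is to reduce the incentive constraints. For an exchangeable $\mu$ in a symmetric game, the constraint \eqref{eq_CE_definition} for agent $i$ and pair $(a,a')$ is obtained from the constraint for agent $1$ and the same pair by permuting coordinates. Indeed, the transposition $\pi$ swapping $1$ and $i$ leaves $\mu$ invariant, and the game's symmetry gives $u_i(a)=u_1(a_{\pi(1)},\ldots,a_{\pi(n)})$. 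Hence a symmetric $\mu$ is a correlated equilibrium if and only if it satisfies the $m(m-1)$ linear inequalities of agent $1$, one for each ordered pair of distinct actions $a\ne a'$.

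The second step describes $\mathcal{E}$. It is the convex hull of finitely many extreme exchangeable distributions, namely the urn distributions $U_c$ of \citet{diaconis1980finite}. Here $c$ ranges over compositions (counts $c_x\geq 0$ with $\sum_x c_x=n$), and $U_c$ is the law of a uniformly random ordering of a multiset with composition $c$. These are extreme because their supports (the orbits of the composition) are disjoint. Every exchangeable $\mu$ equals $\sum_c \mu(\text{orbit } c)\,U_c$, since $\mu$ is constant on each orbit. So $\mathcal{E}$ is a simplex whose vertices are the $U_c$, and mixing weights are unique.

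The third step is the Carath\'eodory/Winkler count. Parametrize $\mu\in\mathcal{E}$ by its weight vector $p\in\Delta(\mathcal{C})$ over compositions. The set of symmetric correlated equilibria becomes $P=\{p\in\Delta(\mathcal C): \ell_k(p)\ge 0,\ k=1,\ldots,m(m-1)\}$ with the $\ell_k$ linear. We then show directly that an extreme point $p$ of $P$ has support of size at most $m(m-1)+1$. Suppose $|\supp p|=s> m(m-1)+1$. Then the linear map $q\mapsto(\sum_c q_c,\ (\ell_k(q))_k)$ restricted to $\R^{\supp p}$ has a nonzero kernel vector $q$, since $s$ exceeds the number $1+m(m-1)$ of equations. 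For small $\varepsilon$, the points $p\pm\varepsilon q$ stay in the simplex, because the support is preserved, and keep every $\ell_k$ value unchanged, so they remain in $P$. Thus $p$ is not extreme, a contradiction. Since $p\mapsto\mu$ is an affine bijection, extremality in $P$ is the same as extremality among symmetric correlated equilibria, which yields the claim.

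I expect the only delicate point to be the first step: verifying carefully that symmetry of the game together with exchangeability of $\mu$ makes agent $i$'s constraints identical to agent $1$'s. The rest is a routine kernel-dimension argument.
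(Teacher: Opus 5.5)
Your proof is correct and follows the paper's route. You use symmetry to reduce the incentive constraints of an exchangeable distribution to the $m(m-1)$ constraints of a single agent, and then bound the number of extreme exchangeable components by Winkler's extreme-point principle; the only difference is that you prove the Winkler step directly, via a kernel-dimension argument on the simplex of urn distributions, whereas the paper simply cites \citet{winkler1988extreme}.
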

The set of exchangeable distributions is universal in that it only depends on the number of agents and actions in the game, but not on their associated payoffs.
\begin{proof}[Proof of Lemma \ref{lm_extreme_symmetric_correlated}]
To check that an exchangeable distribution is a correlated equilibrium of a symmetric game, it suffices to check the incentive constraints for a single agent since, by symmetry, any two agents are equivalent up to relabeling. For a single agent, there
are $m(m-1)$ incentive constraints. The result then follows from \cite*{winkler1988extreme}.
\end{proof}   
Exchangeable distributions are often studied on the infinite product space $X^{\Z}$. In this setting, de Finetti's Theorem shows that exchangeable sequences admit a representation as conditionally independent draws given a latent variable. Equivalently, the extreme points of the set of exchangeable distributions on $X^{\Z}$ are i.i.d. distributions. For a finite product space, the set of extreme exchangeable distributions has a different structure analyzed by \cite*{diaconis1980finite}.
\begin{lemma}[\citet*{diaconis1980finite}]\label{lm_extreme_exchangeable}
Denote by $\delta_{(x_1,\ldots, x_n)}$ the distribution that places unit mass on the profile $(x_1,\ldots, x_n)$. Any distribution $\mu$ over $X^n$ is an extreme point of exchangeable distributions if and only if it can be represented as
\begin{equation}\label{eq_extreme_exchangeable}
\mu=\frac{1}{n!}\sum_{\pi} \delta_{(x_{\pi(1)},\ldots, x_{\pi(n)})}.
\end{equation}
\end{lemma}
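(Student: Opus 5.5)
The plan is to show that the set of exchangeable distributions on $X^n$ is a simplex whose vertices are exactly the distributions in~\eqref{eq_extreme_exchangeable}. Its extreme points are then these vertices and nothing else.

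First, I will partition $X^n$ into orbits under the action of the permutation group: two profiles lie in the same orbit $O$ if and only if one is a rearrangement of the other, i.e., they have the same composition (the same multiset of labels, or the same "urn"). For each orbit $O$, let $u_O$ be the uniform distribution on $O$. Fix $(x_1,\ldots,x_n)\in O$. Every $y\in O$ is hit by exactly $n!/|O|$ permutations $\pi$, namely the size of the stabilizer. Hence $\frac{1}{n!}\sum_\pi \delta_{(x_{\pi(1)},\ldots,x_{\pi(n)})}=u_O$, so the distributions of the form~\eqref{eq_extreme_exchangeable} are precisely the $u_O$.

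Second, I will show that every exchangeable $\mu$ is constant on each orbit; this is immediate from the definition, since any two points of an orbit are related by some permutation. It follows that $\mu=\sum_O \mu(O)\,u_O$ with weights $\mu(O)\ge 0$ summing to one. Conversely, each $u_O$ is exchangeable, so the exchangeable set is exactly $\conv\{u_O\}$. The $u_O$ have pairwise disjoint supports and are therefore affinely independent, so the representation above is unique and this convex hull is a simplex.

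Finally, I will verify extremality of each $u_O$ directly. Suppose $u_O=\alpha\mu+(1-\alpha)\mu'$ with $\alpha\in(0,1)$ and $\mu,\mu'$ exchangeable. Then $\mu$ and $\mu'$ are supported in $O$, and being constant on $O$ forces $\mu=\mu'=u_O$. In the other direction, a $\mu$ that is not of the form $u_O$ has $\mu(O)\in(0,1)$ for some orbit $O$. Then $\mu=\mu(O)u_O+(1-\mu(O))\mu''$, where $\mu''$ is the normalized restriction of $\mu$ to the complement of $O$. This $\mu''$ is exchangeable (the complement is a union of orbits) and differs from $u_O$, so $\mu$ is not extreme.

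I expect no real obstacle here. The only point needing care is the stabilizer count, which ensures that the average over all $n!$ permutations in~\eqref{eq_extreme_exchangeable} is indeed uniform on the orbit even when labels repeat.
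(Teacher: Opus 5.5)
Your proof is correct and complete. The paper itself gives no proof of this lemma: it simply imports it from \cite{diaconis1980finite}. Your orbit decomposition is therefore a genuinely different, self-contained route. Exchangeable distributions are exactly those constant on permutation orbits, so the exchangeable set is the simplex spanned by the uniform orbit distributions $u_O$. The stabilizer count correctly identifies the symmetrized point mass in \eqref{eq_extreme_exchangeable} with $u_O$ even when labels repeat, and your two direct arguments (support forcing for $u_O$, and splitting off $\mu(O)u_O$ otherwise) settle both directions.

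Beyond the citation, your route makes explicit two facts the paper uses right afterward without separate justification:
\begin{itemize}
\item the mixture weights are uniquely $\mu(O)$;
\item orbits correspond one-to-one to frequency vectors in $\Delta_{m,n}$.
\end{itemize}
The paper relies on this correspondence when it asserts a bijection between extreme exchangeable distributions and $\Delta_{m,n}$. It also relies on the uniqueness in Lemma~\ref{lm_extreme_exchangeable_product_decomposition}: there, any mixture representation of a full-support product $\tau$ must put weight $\tau(O)>0$ on every orbit. That is exactly your uniqueness statement, and it is what makes the counting argument in Proposition~\ref{prop_symmetric} airtight.

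What the citation buys is economy. It also points to the source containing the quantitative finite de~Finetti approximation used in Corollary~\ref{th_de_Finetti_CE}. Your elementary argument does not address that bound, nor does it need to for this lemma.
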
  

Extreme exchangeable distributions may have different supports because not all terms in the sum in~\eqref{eq_extreme_exchangeable} are distinct. For example, suppose $n=3$ and $X=\{a,b\}$. Then, there are four extreme exchangeable distributions
\begin{equation*}
    \delta_{(a,a,a)}, \quad \frac{1}{3}\left(\delta_{(a,a,b)}+\delta_{(a,b,a)}+\delta_{(b,a,a)}\right), \quad \frac{1}{3}\left(\delta_{(a,b,b)}+\delta_{(b,a,b)}+\delta_{(b,b,a)}\right), \quad\text{and}\quad \delta_{(b,b,b)}.
\end{equation*}
Without loss of generality, enumerate the elements of $X$ thus identifying $X$ and $\{1,\ldots,m\}$. Let $\Delta_{m,n}$ be the discrete simplex, which consists of all nonnegative integer vectors $(k_1,\ldots, k_m)$ such that $k_1+\ldots+k_m=n$. For each $x=(x_1,\ldots, x_n)$, we can assign the frequency vector $f(x)$ in $\Delta_{m,n}$ that counts the number of times each $j \in X=\{1,\dots,m\}$ appears in $x$. That is, 
\begin{align*}
f(x)_j=|\{i\colon x_i=j\}|.
\end{align*}
Since an extreme exchangeable distribution $\mu$ is obtained via symmetrization of some $x=(x_1,\ldots, x_n)$, we can associate the frequency vector~$f(x)$ with $\mu$.

The total number of different frequency vectors is given by\footnote{This can be derived by what is often referred to as the Stars and Bars Lemma in combinatorics. Namely, one can think of $n$ items to be split into $m$ separate buckets. If the items are thought of as stars on a line, the buckets can be represented by $m-1$ bars that are interspersed among the stars and partition them into $m$ subsets. Our derivation is then equivalent to identifying the number of choices of these $m-1$ locations out of $n+m-1$ possibilities.}
\begin{equation}\label{eq_bars_and_stars}
    |\Delta_{m,n}|=\binom{n+m-1}{m-1}.
\end{equation}
By Lemma~\ref{lm_extreme_exchangeable}, there is a natural bijection between extreme exchangeable distributions and $\Delta_{m,n}$ and thus equality~\eqref{eq_bars_and_stars} provides the total number of extreme exchangeable distributions.

\begin{lemma}\label{lm_extreme_exchangeable_product_decomposition}
Consider a product distribution $\tau=\nu\times \ldots \times \nu$ over $X^n$ with full support. If $\tau$ is represented as a mixture of extreme exchangeable distributions, then all
\begin{equation*}
\binom{n+m-1}{m-1}
\end{equation*}
of them must enter the mixture with a positive weight.
\end{lemma}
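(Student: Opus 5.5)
The plan is to exploit the fact that the extreme exchangeable distributions have pairwise disjoint supports. By Lemma~\ref{lm_extreme_exchangeable}, each extreme exchangeable distribution is the uniform symmetrization of some profile $x$. It is therefore determined by the frequency vector $f(x)\in\Delta_{m,n}$, and its support is exactly the orbit
\begin{equation*}
O_k=\{y\in X^n\colon f(y)=k\}, \qquad k\in\Delta_{m,n}.
\end{equation*}
These orbits, indexed by $k\in\Delta_{m,n}$, partition $X^n$. Write $\mu_k$ for the extreme exchangeable distribution associated with $k$.

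First, suppose $\tau=\sum_{k} w_k \mu_k$ with $w_k\geq 0$ and $\sum_k w_k=1$; this covers any representation of $\tau$ as a mixture. If a representation repeats an extreme point, we merge the repeated terms, so that each $\mu_k$ enters with a single aggregate weight $w_k$. Fix any $k\in\Delta_{m,n}$ and evaluate both sides on the orbit $O_k$. Since $\mu_{k'}(O_k)=0$ for $k'\neq k$ and $\mu_k(O_k)=1$, we obtain $w_k=\tau(O_k)$. Moreover, the coefficients are uniquely determined, so the mixture representation is unique.

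Second, we show $\tau(O_k)>0$ for every $k$. The orbit $O_k$ is nonempty, since every $k\in\Delta_{m,n}$ is realized by some profile: take $k_1$ copies of element $1$, $k_2$ copies of element $2$, and so on. Full support of $\nu$ on $X$ gives $\tau(y)=\prod_i \nu(y_i)>0$ for every $y\in X^n$. Hence
\begin{equation*}
w_k=\tau(O_k)=\binom{n}{k_1,\ldots,k_m}\prod_{j}\nu(j)^{k_j}>0.
\end{equation*}
Therefore all $|\Delta_{m,n}|=\binom{n+m-1}{m-1}$ extreme exchangeable distributions enter with positive weight, where the count comes from~\eqref{eq_bars_and_stars}.

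The only point needing care is the disjointness-of-supports argument: one must note that the representation is over distinct extreme points after merging repeated terms. With that observation the rest is routine, and I expect no real obstacle.
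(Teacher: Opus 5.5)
Your proof is correct and follows essentially the same route as the paper: each extreme exchangeable distribution is supported on the set of profiles with one fixed frequency vector, these sets partition $X^n$, and full support of $\tau$ forces every such set, and hence every extreme point, to receive positive weight. Your identification $w_k=\tau(O_k)$, the multinomial formula for it, and the remark that the decomposition is unique spell out in more detail what the paper states in two lines.
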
   
\begin{proof}
Let $x=(x_1,\ldots,x_n)$ be drawn from $\tau$, and let $f(x)\in \Delta_{m,n}$ 
denote its frequency vector. Since $\tau$ has full support, every $x\in X^n$ 
occurs with positive probability, and hence $f(x)$ takes every value in 
$\Delta_{m,n}$ with positive probability.

Each extreme exchangeable distribution is supported on the set of sequences 
with a fixed frequency vector. Therefore,  
each extreme exchangeable distribution must enter the mixture with a strictly 
positive weight. The number of such distributions is $|\Delta_{m,n}|$, given by 
\eqref{eq_bars_and_stars}.
\end{proof}

We now combine these lemmas to prove the proposition.

\begin{proof}[Proof of Proposition \ref{prop_symmetric}]
Consider a symmetric game with $n\geq 3$ agents each having $m$ actions and  a symmetric totally-mixed  Nash equilibrium $\nu$.

By Lemma~\ref{lm_extreme_symmetric_correlated}, any extreme symmetric correlated equilibrium is a convex combination  of at most $m(m-1)+1$ extreme exchangeable distributions. On the other hand, by Lemma~\ref{lm_extreme_exchangeable_product_decomposition}, we need at least $\binom{n+m-1}{m-1}$ extreme exchangeable distributions to represent $\nu$ as a mixture.
 Thus, $\nu$ is not an extreme point of the set if  
\begin{equation}\label{eq_symm_condition_not_extreme}
    \binom{n+m-1}{m-1}>m(m-1)+1.
\end{equation}
 We now demonstrate that~\eqref{eq_symm_condition_not_extreme} is satisfied for any $n\geq 3$ and $m\geq 2$.
The left-hand side is monotone-increasing in $n$, and thus, it is enough to consider the case of $n=3$. We obtain
\begin{align*}
\binom{m+2}{m-1}>m(m-1)+1,
\end{align*}
or, equivalently,
\begin{align*}
\frac{(m+2)(m+1)m}{6}>m(m-1)+1.
\end{align*}
Elementary computations show that this inequality holds for all $m\geq 2$.
\end{proof}   

 Proposition~\ref{prop_structure_of_extreme_symmetric_CE}, which represents extreme symmetric correlated equilibria via urns, follows directly from the lemmas above.

\begin{proof}[Proof of Proposition~\ref{prop_structure_of_extreme_symmetric_CE}]
By Lemma~\ref{lm_extreme_exchangeable}, every extreme exchangeable distribution admits the urn representation claimed in the proposition. Specifically, a measure $\mu$ of the form~\eqref{eq_extreme_exchangeable} corresponds to an urn containing $n$ balls labeled by $x_1,\ldots, x_n$, from which balls are drawn without replacement. The result then follows by combining this observation with Lemma~\ref{lm_extreme_symmetric_correlated}.
\end{proof}

\subsection{Proofs Pertaining to Binary Action Games}\label{app_binary}

\begin{proof}[Proof of Proposition~\ref{prop:binary_payoffs}]
Consider a Nash equilibrium $\nu$, where agent $i$ chooses action $1$ with probability $p_i\in[0,1]$ which may vary across agents. 

Consider an agent $i$. Recall that the  payoff of action $0$ is always zero. Hence, if $p_i=0$, i.e., $i$ plays action $0$ deterministically, then her equilibrium payoff is $0$. If $p_i\in (0,1)$, i.e., $i$ mixes between $0$ and $1$, her expected payoff is also zero as she must be indifferent between the two actions. 

It remains to consider the case of $p_i=1$, that is, $i$ plays action $1$ deterministically. Her expected payoff 
\begin{equation}\label{eq_payoff_i}
    \E[u_i(1,a_{-i})]=\E\left[f\left(\frac{1+\sum_{k\ne i}{a_k}}{n}\right)\right]
\end{equation}
cannot be negative as otherwise she would switch to action~$0$. We now bound it from above. Since $f(1)<0$, all agents choosing $1$ cannot be a Nash equilibrium. Thus, there must be some agent $j$ choosing $0$ with positive probability. Since this agent cannot benefit from switching to action $1$, we have
\begin{equation}\label{eq_payoff_j}
0=\E[u_j(a_j,a_{-j})]\geq \E[u_j(1,a_{-j})]=\E\left[f\left(\frac{1+\sum_{k\ne j}{a_k}}{n}\right)\right]
\end{equation}
The arguments of $f$ in~\eqref{eq_payoff_i} and~\eqref{eq_payoff_j} differ by at most $1/n$. Let $\delta_n$ be the maximal value of $|f(x)-f(y)|$ over $x,y\in [0,1]$ with $|x-y|\leq 1/n$.\footnote{For functions $f$, Lipschitz with parameter $L$, we have $\delta_n\leq L/n$. For continuously differentiable $f$, the Lipschitz constant $L$ equals $\max_{[0,1]} |f'(x)|$.} We obtain that 
\begin{equation*}  \E[u_i(1,a_{-i})]\leq \delta_n,\end{equation*} 
and conclude that all agents' equilibrium payoffs are between $0$ and $\delta_n$. The continuity of $f$ on $[0,1]$ implies the uniform continuity by the Heine–Cantor theorem, and thus $\delta_n\to 0$ as $n\to \infty$, completing the proof.
\end{proof}

\end{document}